\newtheorem{lem}{Lemma}
\newtheorem{cor}{Corollary}
\newtheorem{fact}{Fact}
\newtheorem{defi}{Definition}
\newtheorem{prop}{Proposition}
\newenvironment{italiccenter}
    {\begin{center}\em}
    {\end{center}}
\begin{document}


\title{Test of the physical significance of Bell non-locality}


\author{Carlos Vieira}
\email{carloshv@unicamp.br}
\affiliation{Department of Computer Science, The University of Hong Kong, Pokfulam Road, Hong Kong}
\affiliation{Instituto de Matemática, Estatística e Computação Científica, Universidade Estadual de Campinas, CEP 13083-859, Campinas, Brazil}

\author{Ravishankar Ramanathan}
\email{ravi@cs.hku.hk}
\affiliation{Department of Computer Science, The University of Hong Kong, Pokfulam Road, Hong Kong}

\author{Ad\'an Cabello}
\email{adan@us.es}
\affiliation{Departamento de F\'{\i}sica Aplicada II, Universidad de Sevilla, E-41012 Sevilla,
Spain}
\affiliation{Instituto Carlos~I de F\'{\i}sica Te\'orica y Computacional, Universidad de
Sevilla, E-41012 Sevilla, Spain}


\date{\today}



\begin{abstract}
Loophole-free violations of Bell inequalities imply that at least one of the assumptions behind local hidden-variable theories must fail. Here, we show that, if only one fails, then it has to fail completely, therefore excluding models that partially constrain freedom of choice or allow for partial retrocausal influences, or allow partial instantaneous actions at a distance. Specifically, we show that (i) any hidden-variable theory with outcome independence (OI) and arbitrary joint relaxation of measurement independence (MI) and parameter independence (PI) can be experimentally excluded in a Bell-like experiment with many settings on high-dimensional entangled states, and (ii) any hidden-variable theory with MI, PI and arbitrary relaxation of OI can be excluded in a Bell-like experiment with many settings on qubit-qubit entangled states.
\end{abstract}


\maketitle
\twocolumngrid




In the early days of quantum theory, the question of whether there is
deeper theory underlying quantum theory was considered ``a philosophical question for which physical arguments alone are not decisive'' \cite{Born:1926ZPb}. Bell's theorem \cite{Bell:1964PHY,Clauser:1969PRL} made it possible to exclude experimentally some of these deeper theories, called hidden-variable (HV) theories \cite{Einstein:1935PR}. Today, Bell tests \cite{Freedman:1972PRL,Aspect:1982PRL,Hensen:2015NAT, Giustina:2015PRL, Shalm:2015PRL} have convinced us that {\em some} HV theories cannot explain what we see.


In a Bell test, a source of pairs of particles sends each particle to a different laboratory. In the first laboratory, an observer (Alice) chooses to measure $x \in X$ and obtains $a \in A$. In the second laboratory, a different observer (Bob) chooses to measure $y \in Y$ and obtains $b \in B$. After many repetitions, Alice and Bob compute the joint probability of $(a, b)$ given $(x, y)$, denoted $p(a,b \vert x,y)$. The set $\{p(a,b \vert x,y)\}_{x \in X, y \in Y, a \in A, b \in B}$ is called a {\em correlation} for the {\em Bell scenario} $(|X|, |A|; |Y|, |B|)$, in which Alice can choose between $|X|$ measurement settings with $|A|$ possible outcomes and Bob between $|Y|$ settings with $|B|$ outcomes.

Bell's theorem asserts that no HV model satisfying some assumptions can reproduce certain quantum correlations. 
These models are collectively called ``local'' HV models and are defined as those satisfying the following assumptions \cite{jarrett_physical_1984, Shimony:1993}:

\medskip

(0) {\em Hidden variables}. There are hidden variables that associate to each pair of particles a state $\lambda \in \Lambda$ and underlying probability densities $p(a,b \vert \lambda,x,y)$ and $p (\lambda \vert x,y)$ so 
\begin{equation}\label{eq:HiddenVariableModel}
    p(a,b \vert x,y) = \int d\lambda\, p (\lambda \vert x,y) p(a,b \vert \lambda,x,y).
\end{equation}

(1) {\em Measurement independence} (MI): For every pair of particles, the measurements $(x,y)$ are not correlated with $\lambda$. That is, $ p(x,y \vert \lambda) = p(x,y)$,
which, through Bayes's theorem, is equivalent to
\begin{equation}
    p(\lambda \vert x,y) = p(\lambda).
\end{equation}
Therefore, the knowledge of $\lambda$ gives no information about $(x,y)$, and vice versa. 

\medskip

(2) {\em Outcome independence} (OI) \cite{Shimony:1993} also referred to as completeness \cite{jarrett_physical_1984,Hall:2015XXX}: 
$p(a \vert \lambda,x,y,b)$ is independent of $b$, and hence may be written 
\begin{equation}
    p(a \vert \lambda,x,y,b) = p(a \vert \lambda,x,y).
\end{equation}
Similarly, $p(b \vert \lambda,x,y,a) = p(b \vert \lambda,x,y)$.

(3) {\em Parameter independence} (PI) \cite{Shimony:1993}, initially called locality \cite{jarrett_physical_1984}:  $p(a \vert \lambda,x,y)$ is independent of $y$, and hence may be written as 
\begin{equation}
    p(a \vert \lambda,x,y) = p(a \vert \lambda,x).
\end{equation}
Similarly, $p(b \vert \lambda, x,y)=p(b \vert \lambda, y)$.

Assumptions (2) and (3) are independent \cite{jarrett_physical_1984} and, together, imply that  
\begin{equation}
p(a,b \vert \lambda,x,y) = p(a \vert \lambda, x) p(b|\lambda,y),
\end{equation}
which is Bell's original assumption \cite{Bell:1964PHY} which is now called local factorizability or local causality.

Assumption (0) is the expression of the belief in a deeper theory underlying quantum theory.  MI is motivated by the assumption that each of the observers has freedom of choice \cite{Bell:1985Dia} or, more generally, by the assumption that which specific measurements are actually performed is not governed by the HVs that govern the particles.
OI is based on the assumption that, as it happens in deterministic models [i.e., when $p(a,b|\lambda,x,y) \in \{0,1\}]$, if we would know $\lambda$, we would observe that $p(a,b \vert \lambda, x,y) = p(a \vert  \lambda,x,y) p(b \vert  \lambda,x,y)$ \cite{hall_relaxed_2011}.
PI is grounded on the assumption that superluminal signalling between one party's choice and the other party's spacelike separated outcome is impossible \cite{jarrett_physical_1984}.

Existing experiments are inconclusive about which assumptions fail. As a consequence, possible explanations include HV theories with different degrees of ``measurement dependence''  \cite{colbeck_free_2012, brandao_realistic_2016, hall_complementary_2010, barretthow2011,hall_relaxed_2011, Hall:2015XXX, ramanathan2016randomness, hall_measurement-dependence_2020, ramanathan2021nosignalingproof} (that may occur due to limitations to freedom of choice \cite{Brans:1988IJTP,tHooft:2016} or to retrocausal influences \cite{Costa:1953MQ,donadi_toy_2022}), different amounts of instantaneous ``actions at a distance'' \cite{Bohm:1952PR,Brassard:1999PRL,Bacon:PRL2003,P2003PRA, pawlowski_non-local_2010,Ringbauer:2016SA,brask_bell_2017}, and combinations thereof \cite{blasiak_violations_2021}. At least one of the four assumptions is false. But which one or which ones? \cite{Shimony:1993}[pp.~124,
149,
96], 
\cite{Shimony:1993a}[p.~66].
The prevalent view is that advancing in the resolution of this problem is not possible ``on purely physical grounds but it requires an act of metaphysical judgement'' \cite{Polkinghorne2014}. Here, we challenge this view and present two results. Result~1 shows that there are quantum correlations that cannot be simulated with any HV theory assuming OI but {\em partial} (as opposed to complete) measurement dependence (MD) or {\em partial} parameter dependence (PD). Result~2 shows that there are quantum correlations that cannot be simulated with any HV theory assuming MI, PI but {\em partial} outcome dependence (OD).

In Sec.~\ref{sec:quantification}, we introduce the standard ways to quantify MI, PI and OI. Result 1 is presented in Sec.~\ref{sec:mainp}, where we also describe an experiment to exclude HV theories with partial MD and PD. Result 2 is presented in Sec.~\ref{sec:ResultOI}, which includes the description of an experiment to exclude HV theories with partial OD. The consequences and applications of the results are discussed in Sec.~\ref{sec:conclusions}. 


\section*{Results}


\subsection*{Relaxing the assumptions}
\label{sec:quantification}


{\em Quantifying measurement dependence.} To quantify any lack of MI, and therefore to quantify MD, we have to take into account the distribution of $x$ and $y$ and, therefore, we have to consider the full distribution $p(a,b,x,y)$ rather than only $p(a,b \vert x,y)$. The full distribution $p(a,b,x,y)$ can be reproduced with an $l$-measurement dependent ($l$-MD) HV model \cite{putz_arbitrarily_2014} if it can be reproduced with an HV model such that, for all $x \in X$, $y \in Y$, and for all $\lambda$,
\begin{equation}\label{eq:MDdefinition}
p(x, y \vert \lambda) \ge l \ge 0.
\end{equation}
If there are only two inputs per party, a value $l=1/4$ implies that $p(x, y \vert \lambda)$ must be uniform. Therefore, in this case, $p(a,b,x,y)$ can be reproduced with an HV model with MI in which there is no correlation between the hidden variables of the particles and the measurement settings. However, this is not the case for $0 \le l < 1/4$. We say that $p(a,b,x,y)$ can be reproduced with an HV model with {\em partial} MD if there is an ($l$-MD) model for some $l>0$. We say that $p(a,b,x,y)$ can only be reproduced with an HV model with {\em complete} MD if $p(a,b,x,y)$ cannot be reproduced with any ($l$-MD) model for any $l>0$; the only possible HV models have $p(x,y|\lambda) = 0$ for some pair of settings $(x,y)$ and some $\lambda$. If there are only two inputs per party, any $p(a,b,x,y)$ corresponding to any non-signaling correlation can be reproduced with an HV model with complete MD.
The complete relaxation of MI using alternative ways of quantifying MD \cite{barretthow2011,hall_relaxed_2011} matches the above definition of complete MD (see also Supplementary Note 1).
 
\medskip


{\em Quantifying parameter dependence.} A correlation $p(a,b|x,y)$ can be reproduced with an $(\varepsilon_A, \varepsilon_B)$-parameter dependent [$(\varepsilon_A, \varepsilon_B)$-PD] HV model \cite{hall_relaxed_2011} if it can be reproduced with an HV model such that, 
for all $x$, $y$, $y'$ ($y$, $x$, $x'$), and for all $\lambda$, 
\begin{subequations}\label{eq:ParDepen}
\begin{align}
& \frac{1}{2} \sum_{a} |p\left(a \vert \lambda, x, y\right) - p\left(a \vert \lambda, x, y^{\prime}\right)| \le \varepsilon_A
\label{eq:ParDepenA} \\
& \frac{1}{2}\sum_{b} |p\left(b \vert \lambda, x, y\right) - p\left(b \vert \lambda, x', y\right)| \le \varepsilon_B.
\label{eq:ParDepenB}
\end{align}
\end{subequations}
Therefore, $p(a,b|x,y)$ can be reproduced with an HV model with PI if, and only if, it can be reproduced with an $(\varepsilon_A, \varepsilon_B)$-PD HV model with $\varepsilon_A=\varepsilon_B=0$. If not, we say that $p(a,b|x,y)$ can be reproduced with an HV model with {\em partial} PD if 
it can be reproduced with an $(\varepsilon_A, \varepsilon_B)$-PD HV model for some $0< \varepsilon_A, \varepsilon_B < 1$. Finally, $p(a,b|x,y)$ can only be reproduced with HV models with {\em complete} PD if $p(a,b|x,y)$ cannot be reproduced with any $(\varepsilon_A, \varepsilon_B)$-PD HV model for any $\varepsilon_A, \varepsilon_B < 1$. Any non-signaling correlation can be reproduced with HV models with complete PD.


\medskip

{\em Quantifying outcome dependence.} A correlation $p(a,b|x,y)$ can be reproduced with a $\delta$-outcome dependent ($\delta$-OD) HV model \cite{hall_relaxed_2011} if it can be reproduced with an HV model such that, for all $x$, $y$, $a$, $a'$, and for any $\lambda$, 
\begin{equation}\label{eq:epsilon-OD}
\frac{1}{2} \sum_b \big\vert p(b|\lambda,x,y,a) - p(b|\lambda,x,y,a') \big\vert \leq \delta.
\end{equation}
Therefore, $p(a,b|x,y)$ can be reproduced with an HV model with OI if, and only if, it can be reproduced with a $\delta$-OD HV model with $\delta=0$. 
We say that $p(a,b|x,y)$ can be reproduced with an HV model with {\em partial} OD if it can be reproduced with a $\delta$-OD HV model with
$0< \delta < 1$.  We say that $p(a,b|x,y)$ can only be reproduced with an HV model with {\em complete} OD if it cannot be reproduced with any $\delta$-OD HV model for any $\delta < 1$. Any non-signaling correlation can be reproduced with HV models with complete OD.


\subsection*{Result 1: Quantum correlations that cannot be simulated if there is arbitrarily small MI or PI}
\label{sec:mainp}


Consider the bipartite Bell experiment in which Alice and Bob have two measurement options $x,y \in \{0,1\}$, each of them with $2^N$ possible results which can be expressed as a string of $N$ bits, $a,b \in \{(0,0,\ldots,0),(0,0,\ldots,1) ,\ldots,(1,1,\ldots,1)\}$. Suppose that Alice and Bob share the following $2^N \times 2^N$-dimensional entangled state:
\begin{equation}\label{eq:StateHardyMV}
\ket{\psi} = \ket{\phi}_{A_1,B_1} \otimes \cdots \otimes \ket{\phi}_{A_N,B_N},
\end{equation}
where
\begin{equation}
\label{eq:8}
\ket{\phi} = a(\ket{01} + \ket{10}) + \sqrt{1 - 2a^2}\ket{11},
\end{equation}
with $a= \frac{\sqrt{5}-1}{2}$, is a two-qubit state with the first qubit in Alice's side and the second qubit in Bob's side. Suppose that Alice's and Bob's measurements are of the form
\begin{subequations}\label{eq:MeasHardyMV}
\begin{align}
& A_{a_1,\ldots,a_N \vert x} = A_{a_1 \vert x}\otimes \cdots \otimes A_{a_N \vert x}, \\
& B_{b_1,\ldots,b_N \vert y} = B_{b_1 \vert y}\otimes \cdots \otimes B_{b_N \vert y},
\end{align}
\end{subequations}
where, here, the tensor product refers to the qubits in each observer's system and the specific form of the factors is given by
\begin{subequations}
\begin{align}
& A_{1 \vert x} = \mathds{1} - A_{0 \vert x},\\
& B_{1 \vert y} = \mathds{1} - B_{0 \vert y}, 
\end{align}
\end{subequations}
where
\begin{subequations}\label{eq:MeasHardyMV2}
\begin{align}
& A_{0 \vert 0} = B_{0 \vert 0} = \ketbra{0}{0},\\
& A_{0 \vert 1} = B_{0 \vert 1} = \ketbra{\varphi}{\varphi}, 		
\end{align}
\end{subequations}
with $\ket{\varphi} = \frac{1}{\sqrt{1 - a^2}} (\sqrt{1 - 2a^2}\ket{0} - a\ket{1})$.
That is, each of the $2^N$-outcome measurements can be seen as $N$ (nonindependent) two-outcome measurements performed simultaneously on a $2^N$-dimensional quantum system. 
These state and measurements produce a correlation with the following properties:
\begin{widetext}
\begin{subequations}\label{eq:HardyZero}
\begin{align}
& p(0, 1,a_{2}, b_{2}, \ldots,a_{N}, b_{N} \vert 0, 1)=\ldots=p(a_{1}, b_{1}, \ldots,a_{N-1}, b_{N-1},0,1 \vert 0, 1)=0,
\label{eq:HardyZero01} \\
& p(1, 0, a_{2}, b_{2}, \ldots, a_{N}, b_{N} \vert 1, 0)=\ldots=p(a_{1}, b_{1}, \ldots, a_{N-1}, b_{N-1},1,0 \vert 1,0)=0,
\label{eq:HardyZero10} \\				
& p(0, 0,a_{2}, b_{2}, \ldots,a_{N}, b_{N} \vert 1, 1)=\ldots=p(a_{1}, b_{1}, \ldots,a_{N-1}, b_{N-1},0,0 \vert 1, 1)=0,
\label{eq:HardyZero11}
\end{align}
\end{subequations}
\end{widetext}
for all $a_1,\ldots,a_N,b_1,\ldots,b_N \in \{0,1\}$. Eq.~\eqref{eq:HardyZero01} indicates that, if the measurements are $x=0$ for Alice and $y=1$ for Bob, then, in the $N$-bit strings that Alice and Bob obtain as outputs cannot be one position where Alice has $0$ and Bob has $1$. Similarly, for Eqs.~\eqref{eq:HardyZero10} and \eqref{eq:HardyZero11}. These state and measurements are the ones needed for the parallelised version \cite{mancinska_unbounded_2015} of the optimal version of the proof of Bell non-locality proposed by Hardy \cite{HardyPRL1993}.

Let us define
\begin{align}
\label{eq:P_hardyQuantum}
p_{H}^{N} \coloneqq \sum_{\substack{a_1, \ldots, a_N, b_1, \ldots, b_N \\ \left(a_1, b_1\right)=(0,0) \lor \ldots \lor \left(a_N, b_N\right)=(0,0)}} p\left(a_{1}, b_{1}, \ldots,a_{N}, b_{N} \vert 0, 0\right),
\end{align}
where $\lor$ is the logical OR. 

Result 1 can be stated as follows: In any $l$-MD and $(\varepsilon_A, \varepsilon_B)$-PD HV model satisfying OI and Eqs.~\eqref{eq:HardyZero01}, \eqref{eq:HardyZero10} and
\eqref{eq:HardyZero11}, for all $l >0$ and all $N$, 
\begin{align}\label{eq:UpBoundP_H}
p_{H}^{N} \leq \varepsilon_{A}+\varepsilon_{B}-\varepsilon_{A} \varepsilon_{B}.
\end{align}
The proof is in the Supplementary Note 2.
Therefore, if $\varepsilon_A < 1$ and $\varepsilon_B < 1$, then $p_{H}^{N} < 1$. In contrast, in quantum theory \cite{mancinska_unbounded_2015}, as $N$ tends to infinity, 
\begin{equation}
p_{H}^{N} \stackrel{N \rightarrow \infty}{\longrightarrow} 1.
\end{equation}
Consequently, for any $l$-MD and $(\varepsilon_A, \varepsilon_B)$-PD HV model with $l >0$, $\varepsilon_A < 1$, $\varepsilon_B < 1$, satisfying OI, there is $N$ such that quantum theory predicts a value for $p_{H}^{N}$ that cannot be simulated.

For example, Table~\ref{tab:my_label} gives the values of $\varepsilon =\varepsilon_A = \varepsilon_B$ that cannot be simulated if nature achieves the quantum value for $p_{H}^{N}$. Notice that the number of excluded HV models grows with $N$. As $N$ tends to infinity, the only surviving HV models are those with $\varepsilon =1$.


\begin{table}
\centering
    \begin{tabular}{ccc} \hline \hline
         $N$&$\varepsilon$&   $p_{H}^{N}$\\ \hline 
         1& $<$ 0.0461&   0.0902\\ 
         2& $<$ 0.0901&   0.1722\\ 
         3& $<$ 0.1321&   0.2469\\ 
         4& $<$ 0.1722&   0.3148\\ 
         5& $<$ 0.2104&   0.3766\\ 
         6& $<$ 0.2468&   0.4328\\ 
         7& $<$ 0.2816&   0.4839\\ 
         8& $<$ 0.3147&   0.5304\\ 
         9& $<$ 0.3463&   0.5727\\ 
         10& $<$ 0.3765&  0.6113\\ 
         \hline \hline
    \end{tabular}
    \caption{$N$ is such that $2^N$ is the number of outputs in the Bell test. $\varepsilon = \varepsilon_A = \varepsilon_B$ quantifies the relaxation of PI and $p_{H}^{N}$ is the upper bound in the probability given by Eq.~\eqref{eq:UpBoundP_H} for $l$-MD and $(\varepsilon, \varepsilon)$-PD HV models satisfying OI. HV models with $\varepsilon$ above the threshold indicated in the Table cannot be excluded by the corresponding experiment.}
    \label{tab:my_label}
\end{table}


The correlations defined by Eqs.~\eqref{eq:8}--\eqref{eq:MeasHardyMV2} are special: they define an extremal non-exposed point of the quantum set of correlations for the two-observer two-setting two-outcome Bell scenario \cite{Goh2018PRA}. Any other correlation with this property can also be used in the experiment. The full characterisation of these points is in \cite{ZhaoQ2023}.

A natural question is what conditions a correlation must satisfy to allow for arbitrarily small MI and PI, and whether there are quantum correlations in Bell scenarios with finite number of inputs that allow for such relaxation. In the Supplementary Note 3, we show a necessary condition - the quantum correlation must necessarily lie on or be arbitrarily close to the nonsignaling boundary. We also illustrate by an explicit example that this condition is not sufficient. We leave as an open question whether there is a finite input-output quantum correlation that proves Result~1.


{\em Experimental test to exclude HV theories with partial MD and PD.} So far, we have identified a quantum correlation that cannot be simulated by any $l$-MD and $(\varepsilon_A, \varepsilon_B)$-PD HV model with $l > 0$, $\varepsilon_A < 1$, $\varepsilon_B < 1$, satisfying OI. This correlation is a point in the set of quantum correlations. The problem is that, due to experimental errors, an actual experiment will fail to exactly produce this point. Here, we reformulate Result~1 in a way that the existence of correlations that cannot be simulated by $l$-MD and $(\varepsilon_A, \varepsilon_B)$-PD HV models with $l > 0$, $\varepsilon_A < 1$, $\varepsilon_B < 1$, and satisfying OI, can be experimentally tested.

It can be proven (see Supplementary Note 4) that, for any $l$-MD and $(\varepsilon_A, \varepsilon_B)$-PD HV model with $l > 0$, $\varepsilon_A < 1$, $\varepsilon_B < 1$, satisfying OI, the following Bell-like inequality holds:
\begin{equation} \label{bli}
     I_{\kappa}^{N}(p_{A B \vert XY}) \leq \tilde{\varepsilon}_{A}+ \tilde{\varepsilon}_{B} - \tilde{\varepsilon}_{A}\tilde{\varepsilon}_{B},
\end{equation}
where
\begin{widetext}
\begin{align}\label{eq:Ikappa}
   I_{\kappa}^{N}(p_{A B \vert XY}) &\coloneqq \sum_{\substack{a_1, \ldots, a_N, b_1, \ldots, b_N \\ \left(a_1, b_1\right)=(0,0) \lor \ldots \lor \left(a_N, b_N\right)=(0,0)}} p_{A B \vert XY}\left(\left(a_{1}, b_{1}\right), \ldots,\left(a_{N}, b_{N}\right) \vert 0,0\right) \notag \\
   &- \kappa \sum_{\substack{a_1, \ldots, a_N, b_1, \ldots, b_N \\ \left(a_1, b_1\right)=(0,1) \lor \ldots \lor \left(a_N, b_N\right)=(0,1)}} p_{A B \vert XY}\left(\left(a_{1}, b_{1}\right), \ldots,\left(a_{N}, b_{N}\right) \vert 0,1\right) \notag \\
    & - \kappa \sum_{\substack{a_1, \ldots, a_N, b_1, \ldots, b_N \\ \left(a_1, b_1\right)=(1,0) \lor \ldots \lor \left(a_N, b_N\right)=(1,0)}} p_{A B \vert XY}\left(\left(a_{1}, b_{1}\right), \ldots,\left(a_{N}, b_{N}\right) \vert 1,0\right)\notag\\
    &- \kappa \sum_{\substack{a_1, \ldots, a_N, b_1, \ldots, b_N \\ \left(a_1, b_1\right)=(0,0) \lor \ldots \lor \left(a_N, b_N\right)=(0,0)}} p_{A B \vert XY}\left(\left(a_{1}, b_{1}\right), \ldots,\left(a_{N}, b_{N}\right) \vert 1,1\right),
\end{align}
\end{widetext}
with
\begin{equation}\label{eq:LB_kappaN}
    \kappa > \frac{N^2}{l(1 - \varepsilon)^2},
\end{equation}
where $\varepsilon = \max\{\varepsilon_A, \varepsilon_B\}$, and
\begin{subequations}
\label{ets2}
\begin{align}
&\tilde{\varepsilon}_A = \varepsilon_{A}+ N\sqrt{\frac{2}{l\kappa}}, \\
& \tilde{\varepsilon}_B = \varepsilon_{B}+ N\sqrt{\frac{2}{l\kappa}}.
\end{align}
\end{subequations}
This means that, for any $l$-MD and $(\varepsilon_A, \varepsilon_B)$-PD HV model with $l > 0$, $\varepsilon_A < 1$, $\varepsilon_B < 1$, satisfying OI, for sufficiently large $\kappa$, the quantity
$I_{\kappa}^{N}(p_{A B \vert XY})$ is upper bounded by a value that is always smaller than $1$. Furthermore, for fixed $N$, this bound approaches the bound for \eqref{eq:UpBoundP_H} when we take large values of $\kappa$ and is therefore violated by the quantum state and measurements described earlier.


\subsection*{Result 2: Quantum correlations which cannot be simulated if there is arbitrarily small OI}
\label{sec:ResultOI}


Consider the bipartite Bell experiment in which Alice and Bob have $M+1$ measurement options $x,y \in \{0,1,\ldots, M\}$, each of them with $2$ possible results, $a,b \in \{0,1\}$. Suppose that Alice and Bob share the following two-qubit entangled state:
\begin{equation}
\label{eq:StateLadderHardy}
    \ket{\phi} = \frac{1}{\sqrt{1+t^2}}(t \ket{00} - \ket{11}),
\end{equation}
where $t\in[0,1]$ is the value that maximises
\begin{equation}
    \max_{0 \leq t \leq 1} \frac{t^2}{1+t^2} \left(\frac{1 - t^{2M}}{1+t^{2M+1}} \right)^2.
\end{equation}
Alice's and Bob's measurements are of the form $A_{a|x} = |\pi_{a|x} \rangle \langle \pi_{a|x}|$ and $B_{b|y} = |\sigma_{b|y} \rangle \langle \sigma_{b|y}|$, with
\begin{eqnarray}
| \pi_{0|x} \rangle &=& \cos a_x | 0 \rangle + \sin a_x | 1 \rangle, \nonumber \forall x \in \{0,\ldots,M\},\\
| \pi_{1|x} \rangle &=& - \sin a_x | 0 \rangle + \cos a_x | 1 \rangle, \forall x \in \{0,\ldots,M\},
\end{eqnarray}
and 
\begin{eqnarray}
| \sigma_{0|y} \rangle &=& \cos b_y | 0 \rangle + \sin b_y | 1 \rangle, \forall y \in \{0,\ldots,M\}, \nonumber \\
| \sigma_{1|y} \rangle &=& - \sin b_y | 0 \rangle + \cos b_y | 1 \rangle, \forall y \in \{0,\ldots,M\},
\end{eqnarray}
with
\begin{equation}
    a_k = b_k = \arctan\left[ (-1)^k t^{k + 1/2} \right] \forall k \in \{0,\ldots,M\}.
\end{equation}
These state and measurements produce a correlation with the following properties:
\begin{subequations}\label{eq:LadderHardy-zeros}
\begin{align}
p(0,0|0,0) &= 0, \\
p(0,1|k,k-1) &= 0 \quad \forall k \in \{1,\ldots, M\},  \\
p(1,0|k-1,k) &= 0 \quad \forall k \in \{1,\ldots, M\},  
\end{align}
\end{subequations}
and correspond to the optimal implementation of the ``ladder'' version of Hardy's proof \cite{HardyPRL1993, ramanathan2021nosignalingproof, ZhaoQ2023}.

Let us define
\begin{align}
\label{eq:P_LadderhardyQuantum}
p_{H}^{M} \coloneqq p(0,0|M,M).
\end{align}

Result 2 can be formulated as follows: In any $\delta$-OD HV model that satisfies MI, PI and \eqref{eq:LadderHardy-zeros},
\begin{align}\label{eq:UpBoundP_H^M}
    p_{H}^{M} \leq \frac{\delta^{q}}{2},
\end{align}
where $q = \frac{M+1}{2}$. The proof is provided in the Supplementary Note 5. Therefore, if $\delta < 1$,  it follows that  $p_{H}^{M} < \frac{1}{2}$. In contrast, in quantum theory \cite{ramanathan2021nosignalingproof, ZhaoQ2023}, as $M$ approaches infinity, 
\begin{equation}
p_{H}^{M} \stackrel{M \rightarrow \infty}{\longrightarrow} \frac{1}{2}.
\end{equation}
Consequently, for any HV model with MI, PI and $\delta$-OD, with $\delta < 1$, there is $M$ such that quantum theory predicts a value for $p_{H}^{M}$ that cannot be simulated. 
In addition, it can be proven (see Supplementary Note 6) that the set of correlations produced by HV with MI, PI and complete OD is the set of nonsignaling correlations.

The above proof is based on the assumption that Eqs.~\eqref{eq:LadderHardy-zeros} hold. In an actual experiment, instead of the zeros Eqs.~\eqref{eq:LadderHardy-zeros}, we will obtain small values. Once we have them, we can derive an optimal Bell-like inequality that will allows us to discard any HV model with $\delta$-OD for some $\delta < 1$.


\section*{Discussion}
\label{sec:conclusions}


The results presented have consequences both for foundations and applications in quantum information processing, communication and computation. For foundations, our results bring us closer to the solution of a problem proposed by Shimony \cite{Shimony:1993}[pp.~96, 124, 149] \cite{Shimony:1993a}[p.~66] and which can be formulated as follows: ``One of these three premises [MI, PI and OI] must be false and it is important to locate the false one'' \cite{Shimony:1993}[p.~96]. 
If we assume that only {\em one} is false (and that it is the same one for all non-local quantum correlations), then
\begin{itemize}
\item[I.] If the assumption that fails is MI, Result 1 shows that, MI has to fail completely because there are quantum correlations that can only be explained with complete MD. 
\item[II.] If the assumption that fails is PI, Result 1 shows that, PI has to fail completely because the same correlations used in [I] can only be explained with complete PD. 
\item[III.] If the assumption that fails is OI, Result 2 shows that OI has to fail completely because there are quantum correlations that can only be explained with complete OD.
\end{itemize}


\begin{figure*}
\centering
\includegraphics[width=1.5\columnwidth]{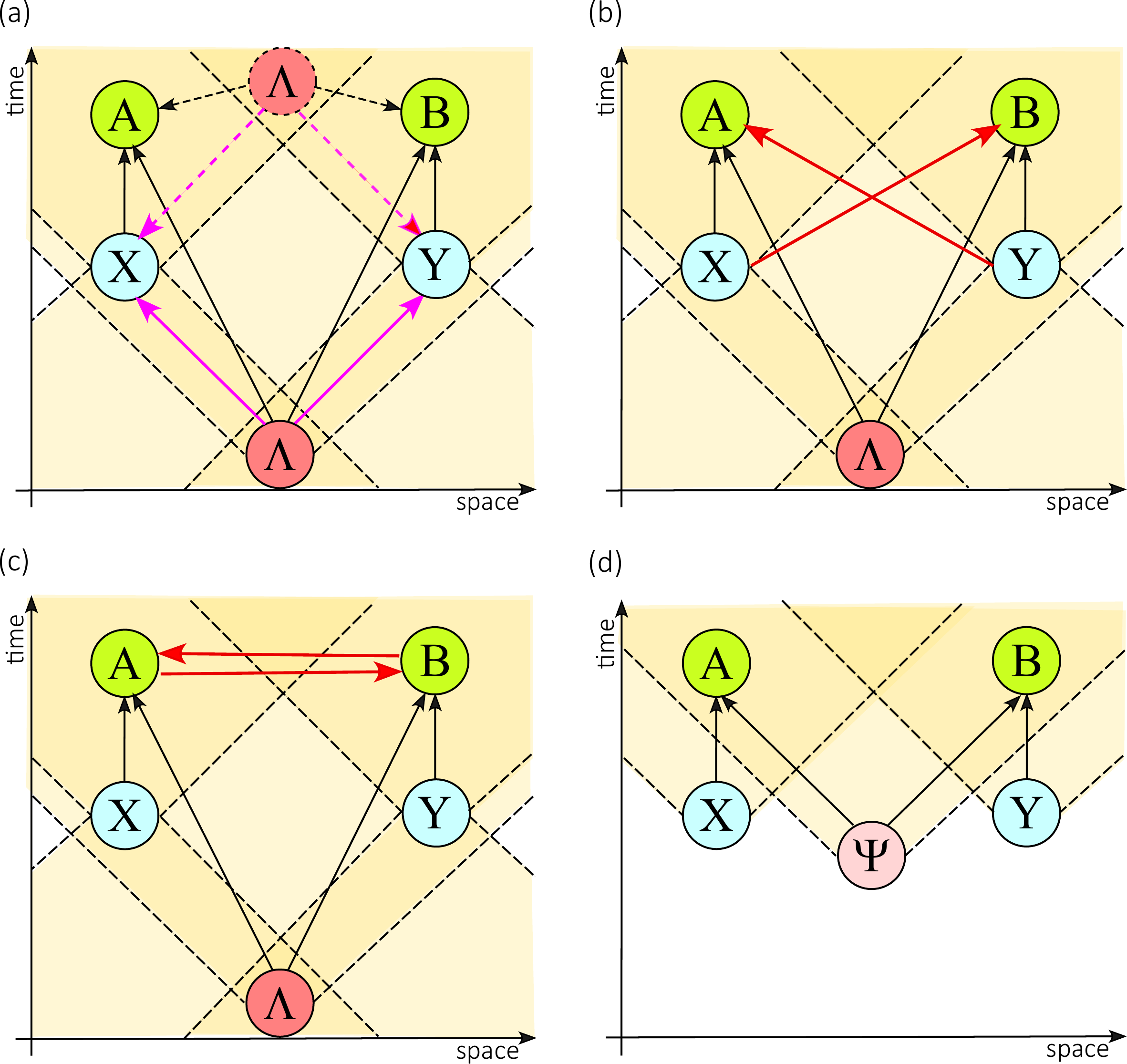}
\caption{Space-time diagrams of the causal influences needed, {\em in every round of the Bell-like test}, for each of the possible solutions to Shimony's problem in light of our results. In all diagrams, black arrows represent causal influences common to all possibilities. 
{\bf (a) Complete measurement dependence.} It can occur in, essentially, two ways. The first is with {\bf complete superdeterminism without retrocausality.} In this case, Alice and Bob do not have freedom of choice to choose the measurement settings, X and Y, respectively. Instead, the settings are determined by the  distribution of the HVs in the past light cones of X and Y, represented by the lower node $\Lambda$. The causal influences between the lower $\Lambda$ and X and Y are represented by violet continuous arrows. The second way complete MD can occur is with {\bf complete retrocausality with freedom of choice.} In this case, Alice and Bob have freedom to choose the ``nominal'' measurement settings X and Y, but the actual measurements are determined by the  distribution of the HVs in the future light cones of X and Y, represented by the upper node $\Lambda$. The causal influences between the upper $\Lambda$ and X and Y are represented by violet dashed arrows.
{\bf (b) Complete parameter dependence.} X is decided by Alice and Y is decided by Bob. However, X does not only influence Alice's measurement outcome, represented by A, but also Bob's measurement outcome, represented by B, which is outside the light cone of X. This superluminal influence (or ``action at a distance'') is represented by a red arrow. Similarly, Y does not only influence B, but also A.
{\bf (c) Complete outcome dependence.} X is decided by Alice and Y is decided by Bob. However, A and B are causally connected despite they are space-like separated. These superluminal influences are represented by blue arrows. 
{\bf (d) No hidden variables.} X is decided freely by Alice and Y is decided freely by Bob. A is causally connected only to the quantum state, represented by $\Psi$, and X. Similarly, B is causally connected only to $\Psi$ and Y.}
\label{fig:NoHV}
\end{figure*}


Each of these solutions to Shimony's problem requires extra causal influences which are not needed if HVs do not exist. These extra causal influences are shown with non-black colours in Figs.~\ref{fig:NoHV} (a), (b), (c), respectively. The causal influences if HVs do not exist are shown in Fig.~\ref{fig:NoHV} (d).

More generally, our results allow us to experimentally narrow down the possible explanations of Bell non-locality and the whole quantum theory, since they allow to experimentally excluding large subsets of HV models that are not excluded by previous experiments. Specifically, in principle, any $l$-MD, $(\varepsilon_A, \varepsilon_B)$-PD HV model with $l > 0$, $\varepsilon_A < 1$, $\varepsilon_B < 1$, satisfying OI can be experimentally excluded. Similarly, any $\delta$-OD HV model with $\delta < 1$ and satisfying MI and PI can be, in principle, experimentally excluded. Still, the experiments cannot exclude HV models with complete MD \cite{Brans:1988IJTP} or complete PD \cite{Bohm:1952PR, brask_bell_2017} or complete OD.

Result 1 extends the observation in \cite{putz_arbitrarily_2014} that there are quantum correlations that cannot be obtained from an $l$-MD HV model that satisfies OI and PI, for all values of $l>0$. In \cite{putz_arbitrarily_2014}, the difference between quantum theory and the models with OI, PI and arbitrarily small MI is so small that any relaxation of PI makes the difference to vanish. In this respect, Result 1 makes testable the impossibility of HV models with partial MD and PD. 

Result 2 is related to the observation in \cite{Ringbauer:2016SA} that there are quantum correlations that cannot be simulated with the assumptions of {\em causal models} (CM), MI, {\em causal parameter independence} (CPI) and the complete relaxation of {\em causal outcome independence} (COI). 
This observation is not made in the framework of the four assumptions of Bell's theorem (HV, MI, PI, OI) but in the framework of causal models \cite{wood_lesson_2015}. In general, neither HV and CM, nor PI and CPI, nor OI and COI, are equivalent.

In addition, Results 1 and 2 confirm that there is some interchangeability between MD and (PD+OD) \cite{conway_kochen_2011,blasiak_violations_2021}. However, our results go beyond that as they show that epsilon of each of MD and (PD+OD) is not enough: complete MD or complete PD or complete OD is needed. 

One reason why it is important to exclude HV models with partial (but not complete) MD is that these models have been proposed to explain quantum correlations 
\cite{colbeck_free_2012, brandao_realistic_2016, hall_complementary_2010, barretthow2011,hall_relaxed_2011, Hall:2015XXX, ramanathan2016randomness, hall_measurement-dependence_2020, ramanathan2021nosignalingproof}. In addition, partial MD or, more precisely partial human's free will, has been proposed in philosophy to resolve the conflict between the concept of an omniscient God and God's commandment not to commit sin \cite{Specker:1960D}.

One reason why it is important to exclude HV models with partial (but not complete) actions at a distance is that these models have been proposed to explain quantum correlations \cite{Brassard:1999PRL,Bacon:PRL2003,P2003PRA, pawlowski_non-local_2010,Ringbauer:2016SA,brask_bell_2017}. A second reason, which also applies to HV models with partial MD, is that excluding larger sets of HV models facilitates the discussion of the remaining models and, in particular, the discussion of the thermodynamics of the HV models \cite{PhysRevA.94.052127} that could not be discarded.

Our results are also of practical interest in quantum information processing, quantum communication and quantum computation. In the first place, for a general reason: the results show that quantum correlations do not only offer advantage with respect to {\em local} correlations, but also with respect to correlations assisted  by partial instantaneous actions at a distance or even assuming the existence of partial constraints to freedom of choice or partial retrocausal influences. This can make a big difference in quantum computational advantage. For example, when mapping quantum non-local correlations into the circuit model, the advantage of quantum theory {\em with respect to local HV theories} translates into a non-oracular quantum advantage \cite{bravyi_quantum_2018,Bravyi:2020NP}. Our results show that there is also advantage with respect to non-local correlations with partial MD, PD and OD. This may translate into new forms of quantum computational advantage.

Another reason why our results are of practical interest is device-independent (DI) quantum information processing \cite{liu_device-independent_2018,zhang_device-independent_2022}. DI protocols for random number generation \cite{pironio_random_2010}, quantum key distribution \cite{liu_device-independent_2018}, state tomography \cite{Yao_self} and self-testing of quantum devices \cite{MagniezSelf2006} achieve advantage allowing users to monitor the performance of their devices irrespective of noise, imperfections, and lack of knowledge regarding the inner workings - the users simply treat their devices as black boxes with classical inputs and outputs. An obstacle for practical DI protocols is the experimentally challenging requirement of a Bell test with: (I) quantum devices being isolated from each other, (II) with the inputs being chosen with uniform randomness and (III) with the detection loophole \cite{Pearle:1970PRD} closed. Experiments in different platforms \cite{Rowe:2001NAT,ansmann_violation_2009,Arute2019} allow for Bell tests with the detection loophole closed and high DI randomness generation rates. However, in these platforms, the quantum systems are very close, primarily to drive high entanglement generation rates via non-negligible coupling. The problem is that, precisely because the systems are close to one another, they can no longer be regarded as isolated in the sense needed for a Bell test. Sophisticated theoretical techniques have been devised to handle the issues of cross-talk and weak seeds separately. A Bell-like test allowing for arbitrary relaxation of MI and PI provides a simple and elegant solution to the problem of leakage of input information. A Bell-like test allowing  for simultaneous relaxation of MI, PI and OI would allow DI randomness generation tolerating weak seeds and cross-talk. We hope that our results will stimulate research in these directions.

\medskip 

\textbf{Acknowledgments.} The authors thank Paweł Horodecki and Pedro Lauand for discussions and acknowledge support from the Early Career Scheme (ECS) (Grant No.\ 27210620), the General Research Fund (GRF) (Grant No.\ 17211122), the Research Impact Fund (RIF) Grant No.\ R7035-21), the MCINN/AEI project ``New tools in quantum information and communication'' (Project No.\ PID2020-113738GB-I00), and the Digital Horizon Europe project \href{https://doi.org/10.3030/101070558}{``Foundations of quantum computational advantage'' (FoQaCiA)} (Grant agreement No.\ 101070558).


\onecolumngrid

\appendix


\section*{Supplementary Note 1. Stronger relaxation of MI}\label{app:noniid-MI}


In the main text we have considered the relaxation of MI in the following sense. We say that $p(a,b,x,y)$ is $l$-measurement dependent ($l$-MD) local if for all $x, y, \lambda$ it holds that $p(x,y|\lambda) \geq l > 0$. That is, in every round of the Bell experiment, the inputs $x, y$ are not perfectly determined by the hidden variable $\lambda$. This relaxation of MI is related to the task of randomness amplification, where the parties have access to a Santha-Vazirani source of partially random bits. The Santha-Vazirani source is characterised by the property that each bit taken from the source has an epsilonic amount of randomness even conditioned on all previous bits from the source. 

In this Supplementary Note, we consider a more general relaxation of the Measurement Independence assumption. Namely, we consider the scenario where in some rounds of the Bell experiment, the hidden variable completely determines the inputs (full measurement dependence) and in other rounds, the above $l$-MD relaxation holds. Such a relaxation is related to the randomness amplification of general Min-Entropy sources of partially random bits. That is, in this scenario, we consider that the parties in an $M$-run Bell experiment have access to a string of bits $x_1, \ldots, x_M, y_1, \ldots, y_M$ for which the only condition is that the maximum probability is bounded away from one, i.e., $p(x_1, \ldots, x_M, y_1, \ldots, y_M) \leq p_{\max} < 1$. There are two methods to approach this relaxation \cite{SBB23, R23}, with the second \cite{R23} being more suited for estimation and device-independent randomness amplification protocols. Here we illustrate the first approach from \cite{SBB23} without considering problems of estimation and device-independent security. 

We denote the rounds with full measurement dependence by the set of hidden variables $\lambda_{full-md}$ and the rounds with relaxed MI by the hidden variables $\lambda_{rel}$. We let $q$ denote the fraction of rounds with relaxed measurement dependence, i.e.
\begin{eqnarray}
 \int_{\lambda_{rel}} d \lambda p(\lambda) = q, \qquad \int_{\lambda_{full-md}} d \lambda p(\lambda) = 1 - q.
\end{eqnarray}
We now derive the bound for the value of a modified version of our Bell expression $\mathcal{I}$ in such hidden variable theories. 

Recall that in the ideal (noise-free) version of our Bell test, we have the Hardy zero constraints
\begin{subequations}\label{eq:HardyZeroSMI}
\begin{align}
& p(0, 1,a_{2}, b_{2}, \ldots,a_{N}, b_{N} \vert 0, 1)=\ldots=p(a_{1}, b_{1}, \ldots,a_{N-1}, b_{N-1},0,1 \vert 0, 1)=0,
\label{eq:HardyZero01SMI} \\
& p(1, 0, a_{2}, b_{2}, \ldots, a_{N}, b_{N} \vert 1, 0)=\ldots=p(a_{1}, b_{1}, \ldots, a_{N-1}, b_{N-1},1,0 \vert 1,0)=0,
\label{eq:HardyZero10SMI} \\				
& p(0, 0,a_{2}, b_{2}, \ldots,a_{N}, b_{N} \vert 1, 1)=\ldots=p(a_{1}, b_{1}, \ldots,a_{N-1}, b_{N-1},0,0 \vert 1, 1)=0,
\label{eq:HardyZero11SMI}
\end{align}
\end{subequations}
for all $a_1,\ldots,a_N,b_1,\ldots,b_N \in \{0,1\}$. Denote by $p_{Z,(0,1)}^{N}$, $p_{Z,(1,0)}^{N}$ and $p_{Z,(1,1)}^{N}$ the three sets of probabilities for inputs $(0,1), (1,0), (1,1)$ respectively in the \eqref{eq:HardyZeroSMI} above. Also recall that by definition
\begin{align}
\label{eq:P_hardyQuantumSM}
p_{H}^{N} \coloneqq \sum_{\substack{a_1, \ldots, a_N, b_1, \ldots, b_N \\ \left(a_1, b_1\right)=(0,0) \lor \ldots \lor \left(a_N, b_N\right)=(0,0)}} p\left(a_{1}, b_{1}, \ldots,a_{N}, b_{N} \vert 0, 0\right),
\end{align}
where $\lor$ is the logical OR. We have seen that in any $l$-MD and $(\varepsilon_A, \varepsilon_B)$-PD local model satisfying OI and Eqs.~\eqref{eq:HardyZero01SMI}, \eqref{eq:HardyZero10SMI}, and
\eqref{eq:HardyZero11SMI}, for all $l >0$, 
and all $N$, 
\begin{align}\label{eq:UpBoundP_H--SM}
p_{H}^{N}  \leq \varepsilon_{A}+\varepsilon_{B}-\varepsilon_{A} \varepsilon_{B}.
\end{align}
Therefore, if $\varepsilon_A < 1$ and $\varepsilon_B < 1$, then $p_{H}^{N} < 1$. In contrast, in quantum theory, as $N$ tends to infinity, 
\begin{equation}
p_{H}^{N} \stackrel{N \rightarrow \infty}{\longrightarrow} 1.
\end{equation}

Define $\mathcal{\tilde{I}}^{N}$ as
\begin{eqnarray}
    \mathcal{\tilde{I}}^{N} := &&l p_{X,Y}(0,0) \left[p_H^{N} - \left(\varepsilon_{A}+\varepsilon_{B}-\varepsilon_{A} \varepsilon_{B}\right)\right] - h p_{X,Y}(0,1) p_{Z,(0,1)}^{N} - h p_{X,Y}(1,0) p_{Z,(1,0)}^{N} - h p_{X,Y}(1,1) p_{Z,(1,1)}^{N} 
\end{eqnarray}
In the ideal noise-free version of the Bell test, we have that $p_{Z,(0,1)}^{N} = p_{Z,(1,0)}^{N} = p_{Z,(1,1)}^{N} = 0$ and the bound for the expression $\mathcal{\tilde{I}}^{N}$ in any $l$-MD and $(\varepsilon_A, \varepsilon_B)$-PD local model satisfying OI is $0$. Therefore, for the set of runs with relaxed measurement dependence $\lambda_{rel}$ we have that $\mathcal{\tilde{I}}^{N} \leq 0$. 

Now in the rounds with full measurement dependence, where the hidden variable $\lambda$ can completely determine the input distribution, the expression $\mathcal{\tilde{I}}^{N}$ has the algebraic bound $\mathcal{\tilde{I}}^{N} \leq l \left[1 - (\varepsilon_A + \varepsilon_B - \varepsilon_A \varepsilon_B) \right]$. 

We therefore obtain the bound for the strong-measurement dependent inequalities (still with $(\varepsilon_A, \varepsilon_B)$-Parameter Dependence in each round) as
\begin{eqnarray}
    \mathcal{\tilde{I}}^{N} &\leq& (1-q) \cdot l \cdot \left[1 - (\varepsilon_A + \varepsilon_B - \varepsilon_A \varepsilon_B) \right] + q \cdot 0 \nonumber \\
    &\leq& (1-  q) l \left[1 - (\varepsilon_A + \varepsilon_B - \varepsilon_A \varepsilon_B) \right].
\end{eqnarray}

Next, we evaluate the quantum bound for the expression $\mathcal{\tilde{I}}^{N}$. For this, we assume that all the input pairs appear with equal probabilities, i.e, $p_{X,Y}(x,y) = 1/4$ for all $x, y$. We then obtain the quantum bound 
\begin{eqnarray}
    \mathcal{\tilde{I}}^{N}_q = (l/4) \left[p_H^N -\left(\varepsilon_a + \varepsilon_B - \varepsilon_A \varepsilon_B \right) \right].
\end{eqnarray}
So that for the fraction of rounds with relaxed measurement dependence $q$ satisfying
\begin{eqnarray}
    &&(1-q)l\left[1 - (\varepsilon_A + \varepsilon_B - \varepsilon_A \varepsilon_B) \right] \leq (l/4) \left[p_H^N -\left(\varepsilon_a + \varepsilon_B - \varepsilon_A \varepsilon_B \right) \right] \nonumber \\
    \implies &&q \geq 1 - (1/4) \frac{\left[p_H^N -\left(\varepsilon_a + \varepsilon_B - \varepsilon_A \varepsilon_B \right) \right]}{\left[1 -\left(\varepsilon_a + \varepsilon_B - \varepsilon_A \varepsilon_B \right) \right]},
\end{eqnarray}
we have that quantum non-locality can be certified even under the strong measurement dependence relaxation. Therefore, in this approach, we require that full measurement dependence not occur in more than $(1/4)$-th fraction of runs in order to certify non-locality. An alternative approach following \cite{R23} which we pursue in forthcoming work shows how to improve this constraint so that quantum non-locality can be certified for any $q > 0$.


\section*{\texorpdfstring{Supplementary Note 1b (not included in the Nat.\ Comm.\ version). Other measures}{Supplementary Note 1b (not included in the published version). Other measures}}\label{app:OtherMeasures}


There are multiple measures of the correlation between the inputs $(X,Y)$ and the hidden variable $\Lambda$, as well as between Alice's input $X$ and Bob's output $B$. For instance, one may consider the mutual information $I(X,Y;\Lambda)$ and $I(X;B)$ as the natural measure of correlations, where in the latter case we evaluate the correlation conditioned on the values $y, \lambda$. There are two reasons why we do not use this alternative measure of correlations. First, the resulting optimization problems under constraints of arbitrary mutual information $I(X,Y; \Lambda)$ or $I(X;B)$ become quickly unwieldy and do not admit analytical solutions due to the heavily non-linear constraint. second, in the extreme case the relaxations converge. To elaborate, in the extreme situation of $\varepsilon_B = 1$, i.e., if for some $y$
we have
\begin{equation}\label{eq:dist-meas}
	\frac{1}{2} \sum_b \big\vert p(b|x=0, y, \lambda) - p(b|x=1, y, \lambda) \big\vert = 1,
\end{equation}
Bob is able to infer the value of $x$ from his local measurement outcome $b$ for that $y$. That is, the information gain by Bob is the maximal value of $H(X|y,\lambda)$ bits. To be clear, let us write the conditional entropy as
\begin{align}
	\label{eq:cond-entropy}
	H(X|B,y,\lambda) =& - \sum_{x,b} p(x,b|y,\lambda) \,\log \left[\frac{p(x,b|y,\lambda)}{p(b|y,\lambda)} \right]\nonumber \\
	=& - \sum_{x,b} p(x|y,\lambda) \,
    p(b|x,y,\lambda) \, \log \left[\frac{p(x|y,\lambda) \,p(b|x,y,\lambda) }{\sum_{x} p(x|y,\lambda) \,p(b|x,y,\lambda)} \right]. 
\end{align}
Now, from Eq.~\eqref{eq:dist-meas}, we see that, for any value of $b$, it holds that $p(b|x=0,y,\lambda) \neq 0 \implies p(b|x=1,y,\lambda) = 0$ and, similarly, $p(b|x=1,y,\lambda) \neq 0 \implies p(b|x=0,y,\lambda) = 0$. Therefore, in calculating the log term in Eq.~\eqref{eq:cond-entropy}, we see that the denominator $\sum_{x} p(x|y,\lambda) p(b|x,y,\lambda)$ is only non-zero for one value of $x$ for any given $b$. That is, we have that, for all $b$,
\begin{equation}
	\log \left[\frac{p(x|y,\lambda) \,p(b|x,y,\lambda) }{\sum_{x} p(x|y,\lambda) \,p(b|x,y,\lambda)} \right] =  \log \left[\frac{p(x|y,\lambda) \,p(b|x,y,\lambda) }{ p(x|y,\lambda) \,p(b|x,y,\lambda)} \right] = \log 1 = 0,
\end{equation}
giving $H(X|B,y,\lambda) = 0$. This therefore allows us to conclude that $I(X;B|y,\lambda) = H(X|y,\lambda) - H(X|B,y,\lambda) = H(X|y,\lambda)$ bits. In other words, Bob is able to achieve the maximum information gain about the value of random variable $X$ from his local measurement outcome $B$ for the given input $y$ for such $\lambda$. Furthermore, when the distance in Eq.~\eqref{eq:dist-meas} is less than $1$, i.e., when $\varepsilon_B < 1$, one can also deduce that $H(X|B,y,\lambda) > 0$ so that the mutual information is non-maximal.
Therefore, the extreme situation when $\varepsilon_B = 1$ exactly corresponds complete PD, i.e., there is a maximal correlation between Bob's measurement outcome and Alice's measurement setting. 

Another metric that could be used for the relaxation is the Kullback-Leibler (KL) divergence \cite{KL51}, given by
\begin{equation}
	D_{KL}(P||Q) = \sum_{k} P(k) \log \left[ \frac{P(k)}{Q(k)}\right],
\end{equation}
where $P$ and $Q$ may be thought of as the distributions of Bob's outcome $b$ conditioned on the different inputs $x=0$ and $x=1$ of Alice for some $y$ and $\lambda$. 
The total variation distance used in our paper, Eq.~\eqref{eq:dist-meas}, and the KL divergence are closely related. In fact, according to the Bretagnolle-Huber inequality \cite{BH78},
\begin{equation}
	\frac{1}{2}\sum_{k}|P(k) - Q(k)| \le \left(1 - e^{-D_{KL}(P||Q)} \right)^{1/2} \le 1 - \frac{e^{-D_{KL}(P||Q)}}{2}. 
\end{equation}
This relation implies that the total variation distance reaches its maximal value of $1$ only when the KL divergence diverges to infinity. Consequently, a full relaxation in terms of total variation distance necessarily entails a complete relaxation in KL divergence as well.

Therefore, the relaxation used in the paper, given by a distance measure, is natural and moreover amenable to analytical proof in the extreme case. 
Nevertheless, in an actual experimental scenario where extremal relaxations are not achievable, one may be interested in the optimization problems of MI, PI and OI relaxations in terms of mutual information or other measures.


\section*{\texorpdfstring{Supplementary Note 2. Proof of the upper bound for $p_{H}^{N}$ for \textit{l}-MD and $(\varepsilon_A, \varepsilon_B)$-PD HV models}{Supplementary Note 2. Proof of the upper bound for pH^N for l-MD and (εA, εB)-PD HV models}}\label{app:ProofUpperBound}

%
We are considering a Bell scenario with two parties, each of them choosing between two measurements with $2^N$ outcomes. We will label the outcomes as bit strings of size $N$. 
Here, we find an upper bound for $p_{H}^{N}$, defined as:
\begin{align}
\label{eq:P_hardyQuantum2}
p_{H}^{N} \coloneqq \sum_{\substack{a_1, \ldots, a_N, b_1, \ldots, b_N \\ \left(a_1, b_1\right)=(0,0) \lor \ldots \lor \left(a_N, b_N\right)=(0,0)}} p\left(a_{1}, b_{1}, \ldots,a_{N}, b_{N} \vert 0, 0\right),
\end{align}
over the set of $l$-MD and $(\varepsilon_A, \varepsilon_B)$-PD HV models and also over the hypotheses that the correlation satisfies the following:
\begin{subequations}\label{eq:HardyZeroSM}
\begin{align}
& p(0, 1,a_{2}, b_{2}, \ldots,a_{N}, b_{N} \vert 0, 1)=\ldots=p(a_{1}, b_{1}, \ldots,a_{N-1}, b_{N-1},0,1 \vert 0, 1)=0,
\label{eq:HardyZero01SM} \\
& p(1, 0, a_{2}, b_{2}, \ldots, a_{N}, b_{N} \vert 1, 0)=\ldots=p(a_{1}, b_{1}, \ldots, a_{N-1}, b_{N-1},1,0 \vert 1,0)=0,
\label{eq:HardyZero10SM} \\				
& p(0, 0,a_{2}, b_{2}, \ldots,a_{N}, b_{N} \vert 1, 1)=\ldots=p(a_{1}, b_{1}, \ldots,a_{N-1}, b_{N-1},0,0 \vert 1, 1)=0.
\label{eq:HardyZero11SM}
\end{align}
\end{subequations}
We show that, if $l>0$, $\varepsilon_A <1$, and $\varepsilon_B <1$, then this upper bound is violated by quantum theory.

For clarity, we will add subscripts to the probability distributions, so $p_{C}$ will represent the probability distribution of a random variable $C$, and $p_{C|D}$ will represent the conditional probability distribution of the variable $C$ given the variable $D$. In addition, $C$ or $D$ can be joint variables, as in $p_{AB \vert XY}$. 

An $l$-MD and $(\varepsilon_A, \varepsilon_B)$-PD HV model is a set of probability distributions $p_{A B \vert XY}$ that can be decomposed as follows:
\begin{align}\label{eq:OI_Apendice}
 p_{A B \vert XY} (\left(a_{1}, b_{1}\right), \ldots,\left(a_{N}, b_{N}\right) \vert x, y ) &=  \sum_{\lambda} p_{\Lambda|XY}(\lambda|x,y)p_{AB \vert XY \Lambda}(\left(a_{1}, b_{1}\right), \ldots,\left(a_{N}, b_{N}\right) \vert x, y,\lambda) \notag \\
&= \sum_{\lambda} p_{\Lambda|XY}(\lambda|x,y)p_{A \vert XY \Lambda}(a_1 \ldots, a_{N} \vert x, y, \lambda) p_{B \vert XY \Lambda}(b_{1}, \ldots, b_{N} \vert x, y, \lambda),
\end{align}
where, by the $l$-MD condition,
\begin{equation}\label{eq:MDdefinition_Append}
p_{XY|\Lambda}(x,y|\lambda) \ge l,
\end{equation}
and, by the $(\varepsilon_A, \varepsilon_B)$-PD condition,
\begin{subequations}\label{eq:PD_Append}
\begin{align}
& \frac{1}{2} \sum_{a_1 \ldots, a_{N}} |p_{A \vert XY \Lambda}(a_1 \ldots, a_{N} \vert x, 0) - p_{A \vert XY \Lambda}(a_1 \ldots, a_{N} \vert x, 1)| \le \varepsilon_A, \label{eq:PD_epsilonA} \\
   &  \frac{1}{2} \sum_{b_{1}, \ldots, b_{N}} |p_{B \vert XY \Lambda}(b_{1}, \ldots, b_{N} \vert 0, y) - p_{B \vert XY \Lambda}(b_{1}, \ldots, b_{N} \vert 1, y)| \le \varepsilon_B. \label{eq:PD_epsilonB}
\end{align}
\end{subequations}
$p_{A B \vert XY \Lambda}$ satisfies Eqs.~\eqref{eq:HardyZeroSM}. Therefore,
\begin{subequations}\label{eq:HardyZeroLambda}
\begin{align}
&p_{A \vert XY\Lambda}\left(0,a_2 \ldots, a_{N} \vert 0, 1, \lambda\right) p_{B \vert XY\Lambda}\left(1, b_{2}, \ldots, b_{N} \vert 0, 1, \lambda\right) = \ldots \notag \\
&= p_{A \vert XY\Lambda}\left(a_{1}, \ldots, a_{N-1},0 \vert 0, 1, \lambda\right) p_{B \vert XY\Lambda}\left(b_{1}, \ldots, b_{N-1},1 \vert 0, 1, \lambda\right) = 0,
\label{eq:HardyZero01Lambda}
\end{align}
\begin{align}
&p_{A \vert XY\Lambda}\left(1,a_2 \ldots, a_{N} \vert 1, 0, \lambda\right) p_{B \vert XY\Lambda}\left(0, b_{2}, \ldots, b_{N} \vert 1, 0, \lambda\right) = \ldots \notag \\
&= p_{A \vert XY\Lambda}\left(a_{1}, \ldots, a_{N-1},1 \vert 1, 0, \lambda\right) p_{B \vert XY\Lambda}\left(b_{1}, \ldots, b_{N-1},0 \vert 1, 0, \lambda\right) = 0,
\label{eq:HardyZero10Lambda}
\end{align}
\begin{align}
&p_{A \vert XY\Lambda}\left(0,a_2 \ldots, a_{N} \vert 1, 1, \lambda\right) p_{B \vert XY\Lambda}\left(0, b_{2}, \ldots, b_{N} \vert 1, 1, \lambda\right) = \ldots \notag \\
&= p_{A \vert XY\Lambda}\left(a_{1}, \ldots, a_{N-1}, 0 \vert 1, 1, \lambda\right) p_{B \vert XY\Lambda}\left(b_{1}, \ldots, b_{N-1},0 \vert  1, 1, \lambda\right) = 0,
\label{eq:HardyZero11Lambda}
\end{align}
\end{subequations}
for all $a_1,\ldots,a_N,b_1,\ldots,b_N \in \{0,1\}$ and all $\lambda$.

In this way, by Eq.~\eqref{eq:HardyZero01Lambda}, there is $\alpha_{1}, \ldots, \alpha_{k} \subseteq\{1, \ldots, N\}$ such that
\begin{subequations}\label{eq:AlphaAlphaBar}
\begin{align}
& p_{A \vert XY\Lambda}\left(a_{1}, \ldots, a_{\alpha_{i}-1}, 0, a_{\alpha_{i}+1}, \ldots, a_{N} \vert 0, 1, \lambda\right)=0,
\\
& p_{B \vert XY\Lambda}\left(b_1, \ldots, b_{\bar{\alpha}_{j}-1}, 1, b_{\bar{\alpha}_{j}+1}, \ldots, b_{N} \vert 0,1, \lambda\right)=0,
\end{align}
\end{subequations}
for all $i \in\{1, \ldots, k\}$, $j \in\{1, \ldots, N-k\}$, $a_{1}, \ldots, a_{N}, b_{1}, \ldots, b_{N} \in\{0,1\}$, and all $\lambda$, where $\left\{\bar{\alpha}_{1}, \ldots, \bar{\alpha}_{N-k}\right\}=\{1, \ldots, N\} \backslash\left\{\alpha_{1}, \ldots, \alpha_{k}\right\}$.

A similar reasoning applies to the Eqs.~\eqref{eq:HardyZero10Lambda} and \eqref{eq:HardyZero11Lambda}. By Eq.~\eqref{eq:HardyZero10Lambda}, there is $\beta_{1}, \ldots, \beta_{k^{\prime}} \subseteq\{1, \ldots, N\}$ such that
\begin{subequations}\label{eq:BetaBetaBar}
\begin{align}
& p_{A \vert XY\Lambda}\left(a_{1}, \ldots, a_{\beta_{i}-1}, 1, a_{\beta_{i}+1}, \ldots, a_{N} \vert 1,0, \lambda\right)=0,
\\
& p_{B \vert XY\Lambda}\left(b_{1}, \ldots, b_{\bar{\beta}_{j}-1}, 0, b_{\bar{\beta}_{j}+1}, \ldots, b_{N} \vert 1,0, \lambda\right)=0,
\end{align}
\end{subequations}
for all $i \in\{1, \ldots, k^{\prime}\}$, $j \in\{1, \ldots, N-k^{\prime}\}$, $a_{1}, \ldots, a_{N}, b_{1}, \ldots, b_{N} \in\{0,1\}$ and all $\lambda$, where $\left\{\bar{\beta}_{1}, \ldots, \bar{\beta}_{N-k^{\prime}}\right\}=\{1, \ldots, N\} \backslash\left\{ \beta_{1}, \ldots, \beta_{k^{\prime}}\right\}$. Finally, by Eq.~\eqref{eq:HardyZero11Lambda}, there is $\gamma_{1}, \ldots, \gamma_{k^{\prime \prime}} \subseteq\{1, \ldots, N\}$ such that
\begin{subequations}\label{eq:GammaGammaBar}
\begin{align}
& p_{A \vert XY\Lambda}\left(a_{1}, \ldots, a_{\gamma_{i-1}}, 0, a_{\gamma_{i+1}}, \ldots, a_{N} \vert 1, 1, \lambda\right)=0, 
\\
& p_{B \vert XY\Lambda}\left(b_{1}, \ldots, b_{\bar{\gamma}_{j-1}}, 0, b_{\bar{\gamma}_{j+1}}, \ldots, b_{N} \vert 1,1, \lambda\right)=0,
\end{align}
\end{subequations}
for all $i \in\{1, \ldots, k^{\prime\prime}\}, j \in\{1, \ldots, N-k^{\prime\prime}\}, a_{1}, \ldots, a_{N}, b_{1}, \ldots, b_{N} \in\{0,1\}$ and all $\lambda$, where $\left\{\bar{\gamma}_{1}, \ldots, \bar{\gamma}_{N-k^{\prime \prime}}\right\}=\{1, \ldots, N\} \backslash\left\{\gamma_{1}, \ldots, \gamma_{k^{\prime\prime}}\right\}$. 

The following Lemma shows that the restriction of the model $p_{AB \vert XY \Lambda}$ to be a $(\varepsilon_A, \varepsilon_B)$-PD HV model implies a relation between the sets $\left\{\alpha_{1}, \ldots, \alpha_{k}\right\}$ and $\left\{\beta_{1}, \ldots, \beta_{k^{\prime}}\right\}$.

\begin{lem}\label{lem:AlphaBeta}
Let
$
p_{AB \vert XY \Lambda}(\left(a_{1}, b_{1}\right), \ldots,\left(a_{N}, b_{N}\right) \vert x, y,\lambda) = p_{A \vert XY}(a_1 \ldots, a_{N} \vert x, y, \lambda) p_{B \vert XY}(b_{1}, \ldots, b_{N} \vert x, y, \lambda)$
be a $(\varepsilon_A, \varepsilon_B)$-PD HV model, for $\varepsilon_A, \varepsilon_B < 1$. Let us suppose that $p_{AB \vert XY\Lambda}$ satisfies Eqs.~\eqref{eq:HardyZeroLambda} and let $\left\{\alpha_{1}, \ldots, \alpha_{k}\right\}$, $\left\{\beta_{1}, \ldots, \beta_{k^{\prime}}\right\}$, $\left\{\gamma_{1}, \ldots, \gamma_{k^{\prime \prime}}\right\} \subseteq\{1, \ldots, N\}$ be the sets defined in Eqs.~\eqref{eq:AlphaAlphaBar}, \eqref{eq:BetaBetaBar}, and \eqref{eq:GammaGammaBar}, respectively. Then, $\left\{\bar{\alpha}_{1}, \ldots \bar{\alpha}_{N-k}\right\} \subseteq\left\{\bar{\beta}_{1}, \ldots, \bar{\beta}_{N-k^{\prime}}\right\}$.
\end{lem}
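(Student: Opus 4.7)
The plan is to take an arbitrary index $i \in \bar{\alpha}$ and produce a short chain that forces $i \in \bar{\beta}$. The idea is to use the $(x,y)=(1,1)$ setting, where the $\gamma$-partition lives, as a bridge between the $(0,1)$ setting that defines $\alpha$ and the $(1,0)$ setting that defines $\beta$, leveraging the strict inequalities $\varepsilon_A < 1$ and $\varepsilon_B < 1$ to prevent a certainty from being erased as one moves between neighbouring settings under partial parameter dependence.

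In more detail: first, the defining property of $\bar{\alpha}$ in \eqref{eq:AlphaAlphaBar} gives $p_{B\vert XY\Lambda}(\ldots, b_i=1, \ldots \vert 0, 1, \lambda) = 0$, which after summing over the remaining $N-1$ bits yields $p^{(i)}_B(b_i=0\vert 0,1,\lambda) = 1$. Applying the Bob-side bound \eqref{eq:PD_epsilonB} with $y=1$ fixed, together with the fact that total variation is non-increasing under marginalisation, I get $p^{(i)}_B(b_i=0\vert 1,1,\lambda) \ge 1 - \varepsilon_B > 0$. Second, summing the product form in \eqref{eq:HardyZero11Lambda} over the coordinates other than the $i$-th with $a_i = b_i = 0$ fixed gives $p^{(i)}_A(a_i=0\vert 1,1,\lambda)\, p^{(i)}_B(b_i=0\vert 1,1,\lambda) = 0$; the second factor is strictly positive by step one, so $p^{(i)}_A(a_i=0\vert 1,1,\lambda) = 0$, placing $i$ in $\gamma$ and in particular giving $p^{(i)}_A(a_i=1\vert 1,1,\lambda) = 1$. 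Third, the Alice-side bound \eqref{eq:PD_epsilonA} with $x=1$ fixed yields $p^{(i)}_A(a_i=1\vert 1,0,\lambda) \ge 1 - \varepsilon_A > 0$, which is inconsistent with $i \in \beta$ (which would force that marginal to vanish), so $i \in \bar{\beta}$, as desired.

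I expect the main delicacy to be the clean passage from the joint $(\varepsilon_A,\varepsilon_B)$-PD hypothesis in \eqref{eq:PD_Append} to the single-bit marginal bound used in the first and third steps; this is just the data-processing property of total variation, but it deserves an explicit line because the hypothesis is stated for the full $N$-bit distributions. The strict inequalities $\varepsilon_A, \varepsilon_B < 1$ enter in an essential way at precisely the points where a marginal is claimed to be strictly positive, so equality there would collapse the argument. A minor technicality is that ``for all $\lambda$'' should be read as ``for all $\lambda$ of positive conditional weight under $p_{\Lambda\vert XY}$'', but for the remaining $\lambda$ the conditionals are arbitrary and the inclusion is vacuous.
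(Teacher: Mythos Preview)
Your proof is correct and follows essentially the same route as the paper: both use the $(1,1)$ setting as a bridge between the $\alpha$- and $\beta$-partitions and exploit $\varepsilon_A,\varepsilon_B<1$ to prevent a probability-one event from being erased when one of the two inputs is switched. The paper organises this as a proof by contradiction with a case split on whether the putative index lies in $\{\gamma_i\}$ or in $\{\bar\gamma_j\}$, computing the full $N$-bit total variation explicitly in each case, whereas you argue the direct chain $\bar\alpha\subseteq\gamma\subseteq\bar\beta$ and invoke the data-processing inequality to work with single-bit marginals --- a slightly tidier packaging of the same idea.
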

\begin{proof}
Since $\left\{\bar{\beta}_{1}, \ldots, \bar{\beta}_{N-k^{\prime}}\right\}$ is the complementary set of the set $\left\{\beta_{1}, \ldots, \beta_{k^{\prime}}\right\}$, then $\left\{\beta_{1}, \ldots, \beta_{k^{\prime}}\right\} \cup \left\{\bar{\beta}_{1}, \ldots, \bar{\beta}_{N-k^{\prime}}\right\} = \{1, \ldots, N\}$ and, therefore, $\left\{\bar{\alpha}_{1}, \ldots \bar{\alpha}_{N-k}\right\} \subseteq \left\{\beta_{1}, \ldots, \beta_{k^{\prime}}\right\} \cup \left\{\bar{\beta}_{1}, \ldots, \bar{\beta}_{N-k^{\prime}}\right\}$. In this way, proving Lemma~\ref{lem:AlphaBeta} is equivalent to showing that the intersection of $\left\{\bar{\alpha}_{1}, \ldots \bar{\alpha}_{N-k}\right\}$ and $\left\{\beta_{1}, \ldots, \beta_{k^{\prime}}\right\}$ is empty. We will see that, if this is not the case, then Eqs.~\eqref{eq:AlphaAlphaBar}, \eqref{eq:BetaBetaBar}, and \eqref{eq:GammaGammaBar} would be in contradiction with the assumption of $(\varepsilon_A, \varepsilon_B)$-PD HV model [Eq.~\eqref{eq:PD_Append}].

By contradiction, let us suppose that the intersection of $\left\{\bar{\alpha}_{1}, \ldots, \bar{\alpha}_{N-k}\right\}$ and $\left\{\beta_{1}, \ldots, \beta_{k'} \right\}$ is not empty. Therefore, given $r \in\left\{\bar{\alpha}_{1}, \ldots, \bar{\alpha}_{N-k}\right\} \cap\left\{\beta_{1}, \ldots, \beta_{k'} \right\}$, as this two set are subsets of $\{1, \ldots, N\}$, then $r \in\{1, \ldots, N\}=\left\{\gamma_{1}, \ldots, \gamma_{k^{\prime \prime}}\left\}\cup\left\{\bar{\gamma}_{1}, \ldots, \bar{\gamma}_{N-k^{\prime \prime}}\right\}\right.\right.$. In this way, $r \in\left\{\gamma_{1}, \ldots, \gamma_{k^{\prime \prime}}\right\}$ or $r \in\left\{\bar{\gamma}_{1}, \ldots, \bar{\gamma}_{N-k^{\prime \prime}}\right\}$. Let us deal with these two situations separately.

If $r\in\left\{\gamma_{1}, \ldots, \gamma_{k''}\right\}$: As $r$ is also in $\left\{\beta_{1}, \ldots, \beta_{k^{\prime}}\right\}$, by Eq.~\eqref{eq:BetaBetaBar},
\begin{align}
&  p_{A \vert XY\Lambda}\left(a_{1}, \ldots, a_{r-1}, 1, a_{r+1}, \ldots, a_{N} \vert 1, 0, \lambda\right)=0 \quad \forall a_{1}, \ldots, \hat{a}_{r}, \ldots, a_{N},
\end{align}
where $a_{1}, \ldots, \hat{a}_r, \ldots, a_{N}$ is a short notation for $a_{1}, \ldots, a_{r-1}, a_{r+1}, \ldots, a_{N}$. However, $p_{A \vert XY\Lambda}\left(a_{1}, \ldots, a_{N}|1, 0, \lambda\right)$ is a probability distribution and, as such, needs to satisfy normalisation. Combining these two ingredients, we have,
\begin{align}\label{eq:Prob010eq1}
1 & =\sum_{a_{1}, \ldots a_N} p_{A \vert XY\Lambda}\left(a_{1}, \ldots, a_{N}|1, 0, \lambda\right)\notag \\
& = \sum_{a_{1}, \ldots, \hat{a}_r,  \ldots, a_N} \left( p_{A \vert XY\Lambda}\left(a_{1}, \ldots, a_{r-1}, 0, a_{r-1},\ldots, a_{N}|1, 0, \lambda\right) +  \overbrace{p_{A \vert XY\Lambda}\left(a_{1}, \ldots, a_{r-1}, 1, a_{r-1},\ldots, a_{N}|1, 0, \lambda\right)}^{0} \right)\notag\\
&= \sum_{a_{1}, \ldots, \hat{a}_r,  \ldots, a_N} p_{A \vert XY\Lambda}\left(a_{1}, \ldots, a_{r-1}, 0, a_{r-1},\ldots, a_{N}|1, 0, \lambda\right).
\end{align}
Furthermore, using that $r \in \left\{\gamma_{1}, \ldots, \gamma_{k''} \right\}$, by Eq.~\eqref{eq:GammaGammaBar},
\begin{align}
&  p_{A \vert XY\Lambda}\left(a_{1}, \ldots, a_{r-1}, 0, a_{r+1}, \ldots, a_{N}\vert 1, 1, \lambda \right)=0 \quad \forall a_{1}, \ldots, \hat{a}_{r}, \ldots, a_{N}.
\end{align}
Analogously, since $p_{A \vert XY\Lambda}\left(a_{1}, \ldots, a_{N}\vert 1, 1, \lambda\right)$ is a probability distribution, it also satisfies normalization. Therefore,
\begin{align}\label{eq:Prob111eq1}
& 1= \sum_{a_{1}, \ldots, a_N} p_{A \vert XY\Lambda}\left(a_{1}, \ldots, a_{N}\vert 1, 1, \lambda\right) \notag\\
& = \sum_{a_{1}, \ldots, \hat{a}_r, \ldots, a_{N}} \left(\overbrace{p_{A \vert XY\Lambda}\left(a_{1}, \ldots, a_{r-1},0, a_{r+1}, \ldots, a_{N} \vert 1, 1, \lambda\right)}^{0} + p_{A \vert XY\Lambda}\left(a_{1}, \ldots, a_{r-1},1, a_{r+1}, \ldots, a_{N} \vert 1, 1, \lambda\right) \right)\notag\\
&= \sum_{a_{1}, \ldots, \hat{a}_r, \ldots, a_{N}} p_{A \vert XY\Lambda}\left(a_{1}, \ldots, a_{r-1},1, a_{r+1}, \ldots, a_{N} \vert 1, 1, \lambda\right).
\end{align}
We can combine Eqs.~\eqref{eq:Prob010eq1} and \eqref{eq:Prob111eq1} with the $(\varepsilon_A, \varepsilon_B)$-PD condition, Eq.~\eqref{eq:PD_epsilonA}, to obtain
\begin{align}
\varepsilon_{A} &\geqslant \frac{1}{2} \sum_{a_{1},\ldots, a_{N}}\left|p_{A \vert XY\Lambda}\left(a_{1}, \ldots, a_{N} \vert 1, 0, \lambda\right)-p_{A \vert XY\Lambda}\left(a_{1}, \ldots, a_{N} \vert 1, 1, \lambda\right)\right| \notag\\
& = \frac{1}{2}\sum_{a_{1}, \ldots, \hat{a}_{r}, \ldots, a_{N}} \vert p_{A \vert XY\Lambda}\left(a_{1}, \ldots, a_{r-1}, 0, a_{r}, \ldots, a_{N} \vert 1, 0, \lambda\right)-\overbrace{p_{A \vert XY\Lambda} (a_{1}, \ldots, a_{r-1}, 0, a_{r}, \ldots, a_{N}\vert 1, 1, \lambda)}^0 \vert \notag\\
& + \frac{1}{2}\sum_{a_{1}, \ldots, \hat{a}_{r}, \ldots, a_{N}} \vert \overbrace{p_{A \vert XY\Lambda}\left(a_{1}, \ldots, a_{r-1}, 1, a_{r}, \ldots, a_{N} \vert 1, 0, \lambda\right)}^0 - p_{A \vert XY\Lambda}\left(a_{1}, \ldots, a_{r-1}, 1, a_{r}, \ldots, a_{N}\vert 1, 1, \lambda\right) \vert \notag\\
& = \frac{1}{2}\sum_{a_{1}, \ldots, \hat{a}_{r}, \ldots, a_{N}} p_{A \vert XY\Lambda}\left(a_{1}, \ldots, a_{r-1}, 0, a_{r}, \ldots, a_{N} \vert 1, 0, \lambda\right) \notag\\
& + \frac{1}{2}\sum_{a_{1}, \ldots, \hat{a}_{r}, \ldots, a_{N}} p_{A \vert XY\Lambda}\left(a_{1}, \ldots, a_{r-1}, 1, a_{r}, \ldots, a_{N}\vert 1, 1, \lambda\right)  \notag\\
& = 1.
\end{align}
Therefore,
$ \varepsilon_{A} \geqslant 1$, which contradicts the assumption that $\varepsilon_{A} < 1$. In this way, if $ \varepsilon_{A} < 1$, $r$ cannot be in $\left\{\beta_{1}, \ldots, \beta_{k^{\prime}}\right\}$ and $\left\{\gamma_{1}, \ldots, \gamma_{k''} \right\}$ at the same time. 

Let us deal with the second case, where $r\in \left\{\bar{\gamma}_{1}, \ldots, \bar{\gamma}_{N-k''}\right\}$. As we will see, the arguments to reach a contradiction are completely analogous to the ones used in the first case and the conclusion reached will be that, if $\varepsilon_{B} < 1$, then $r$ cannot belong to the intersection of the sets $\left\{\bar{\alpha}_{1}, \ldots, \bar{\alpha}_{N-k}\right\}$ and $\left\{\bar{\gamma}_{1}, \ldots, \bar{\gamma}_{N-k''}\right\}$.

If $r\in \left\{\bar{\gamma}_{1}, \ldots, \bar{\gamma}_{N-k''}\right\}$: As $r$ is in $\left\{\bar{\alpha}_{1}, \ldots, \bar{\alpha}_{N-k}\right\}$, by Eq.~\eqref{eq:AlphaAlphaBar},
\begin{align}
&  p_{B \vert XY\Lambda}\left(b_{1}, \ldots, b_{r-1}, 1, b_{r+1}, \ldots, b_N \vert 0,1,\lambda\right)=0 \;\;\;\;\forall  b_{1}, \ldots, \hat{b}_{r}, \ldots, b_{N}.
\end{align}
Therefore, by using the normalization relation for $p_{B \vert XY\Lambda}\left(b_{1}, \ldots, b_{N} | 0,1,\lambda\right)$,
\begin{align}\label{eq:ProbB010eq1}
1 & = \sum_{b_{1}, \ldots, b_{N}} p_{B \vert XY\Lambda}\left(b_{1}, \ldots, b_{N} \vert  0,1,\lambda\right) \notag\\
& = \sum_{b_{1}, \ldots, \hat{b}_{r}, \ldots, b_{N}} \left(p_{B \vert XY\Lambda}\left(b_{1}, \ldots, b_{r-1}, 0, b_{r+1}, \ldots, b_{N} \vert  0,1,\lambda\right) 
 + \overbrace{p_{B \vert XY\Lambda}\left(b_{1}, \ldots, b_{r-1}, 1, b_{r+1}, \ldots, b_{N} \vert 0,1,\lambda\right)}^{0} \right)\notag\\
&= \sum_{b_{1}, \ldots, \hat{b}_{r}, \ldots, b_{N}} p_{B \vert XY\Lambda}\left(b_{1}, \ldots, b_{r-1}, 0, b_{r+1}, \ldots, b_{N} \vert 0,1,\lambda\right).
\end{align}
Moreover, using now that $r \in \left\{\bar{\gamma}_{1}, \ldots, \bar{\gamma}_{N-k''}\right\}$, by Eq.~\eqref{eq:GammaGammaBar},
\begin{align}
&  p_{B \vert XY\Lambda}\left(b_{1}, \ldots, b_{r-1}, 0, b_{r+1}, \ldots, b_{N} \vert 1,1, \lambda\right)=0, \;\;\;\;\forall b_{1}, \ldots, \hat{b}_{r}, \ldots, b_{N}.
\end{align}
Using normalization,
\begin{align} \label{eq:ProbB111eq1}
1 &= \sum_{b_{1}, \ldots, b_N} p_{B \vert XY\Lambda}\left(b_{1}, \ldots, b_{N} \vert 1, 1, \lambda\right) \notag\\
& = \sum_{b_{1}, \ldots, \hat{b}_r, \ldots, b_{N}} \left( \overbrace{p_{B \vert XY\Lambda}\left(b_{1}, \ldots, b_{r-1},0, b_{r+1}, \ldots, b_{N} \vert 1, 1, \lambda\right)}^{0} + p_{B \vert XY\Lambda}\left(b_{1}, \ldots, b_{r-1},1, b_{r+1}, \ldots, b_{N} \vert 1, 1, \lambda\right) \right) \notag\\
& = \sum_{b_{1}, \ldots, \hat{b}_r, \ldots, b_{N}} p_{B \vert XY\Lambda}\left(b_{1}, \ldots, b_{r-1},1, b_{r+1}, \ldots, b_{N} \vert 1, 1, \lambda\right).
\end{align}
Combining Eqs.~\eqref{eq:ProbB010eq1} and \eqref{eq:ProbB111eq1} with the $(\varepsilon_A, \varepsilon_B)$-PD condition [Eq.~\eqref{eq:PD_epsilonB}],
\begin{align}
\varepsilon_{B} &\geqslant \frac{1}{2}\sum_{b_{1}, \ldots, b_N} |p_{B \vert XY\Lambda}\left(b_{1}, \ldots, b_{N} \vert 0 ,1, \lambda\right) 
- p_{B \vert XY\Lambda}\left(b_{1}, \ldots, b_{N} \vert 1, 1, \lambda\right) | \notag\\
& = \frac{1}{2} \sum_{b_{1}, \ldots, \hat{b}_r, \ldots, b_{N}} \vert p_{B \vert XY\Lambda}\left(b_{1}, \ldots, b_{r-1},0, b_{r+1}, \ldots, b_{N} \vert 0, 1, \lambda\right)  
 - \overbrace{p_{B \vert XY\Lambda}\left(b_{1}, \ldots, b_{r-1},0, b_{r+1}, \ldots, b_{N} \vert 1, 1, \lambda\right)}^{0} | \notag \\
& + \frac{1}{2} \sum_{b_{1}, \ldots, \hat{b}_r, \ldots, b_{N}} |\overbrace{p_{B \vert XY\Lambda}\left(b_{1}, \ldots, b_{r-1},1, b_{r+1}, \ldots, b_{N} \vert 0, 1, \lambda\right)}^{0} 
 - p_{B \vert XY\Lambda}\left(b_{1}, \ldots, b_{r-1},1, b_{r+1}, \ldots, b_{N} \vert 1, 1, \lambda\right)|\notag \\
& = \frac{1}{2} \sum_{b_{1}, \ldots, \hat{b}_r, \ldots, b_{N}} p_{B \vert XY\Lambda}\left(b_{1}, \ldots, b_{r-1},0, b_{r+1}, \ldots, b_{N} \vert 0, 1, \lambda\right) \notag \\
& + \frac{1}{2}\sum_{b_{1}, \ldots, \hat{b}_r, \ldots, b_{N}}  p_{B \vert XY\Lambda}\left(b_{1}, \ldots, b_{r-1},1, b_{r+1}, \ldots, b_{N} \vert 1, 1, \lambda\right)\notag \\
& = 1. 
\end{align}
Therefore, 
$\varepsilon_{B} \geq 1$, which contradicts the assumption that $\varepsilon_{B} < 1$. 

Therefore, if $ \varepsilon_{A}, \varepsilon_{B} < 1$, $r$ being in $\left\{\bar{\alpha}_{1}, \ldots, \bar{\alpha}_{N-k}\right\} \cap\left\{\beta_{1}, \ldots, \beta_{k^{\prime}} \right\}$ implies that $r$ is not in $\left\{\gamma_{1}, \ldots, \gamma_{k''}\right\} \cup \left\{\bar{\gamma}_{1}, \ldots, \bar{\gamma}_{N-k^{''}}\right\} = \{1, \ldots, N\}$, which is a contradiction.
\end{proof}
The following
proposition will be useful to refine the upper bound 
$p_{H}^N$. This proposition is a general fact about probability distributions.
\begin{prop}\label{prop:Probs} Let $\Gamma$ be a finite sample space, $p_{1}$ and $p_{2}$ two probability distributions on $\Gamma$, which are $\eta$-close in total variation distance, \textit{i.e.},
\begin{equation}
\frac{1}{2}\sum_{\gamma \in \Gamma}\left|p_{1}(\gamma)-p_{2}(\gamma)\right| \leqslant \eta.
\end{equation}
Let $\Delta \subseteq \Gamma$ be such that $p_{2}(\Delta)=0$. Then, $p_1(\Delta)$ is upper bounded by
\begin{equation}
 p_{1}(\Delta) \coloneqq \sum_{\gamma \in \Delta} p_{1}(\gamma) \leqslant \eta.   
\end{equation}
\end{prop}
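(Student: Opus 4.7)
The plan is to exploit the standard identification of the total variation distance with the maximum discrepancy that two probability measures can assign to a common event. Concretely, I would split the sample space as $\Gamma = \Delta \sqcup \Delta^c$ and separately control the contributions to the sum $\sum_{\gamma \in \Gamma} |p_1(\gamma) - p_2(\gamma)|$ coming from each piece, then show each piece is at least $p_1(\Delta)$, so that the full sum is at least $2 p_1(\Delta)$ and the claim $p_1(\Delta) \leq \eta$ follows.

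First I would observe that $p_2(\Delta) = 0$, together with $p_2(\gamma) \geq 0$ for all $\gamma$, forces $p_2(\gamma) = 0$ for every $\gamma \in \Delta$. Consequently $|p_1(\gamma) - p_2(\gamma)| = p_1(\gamma)$ on $\Delta$, and summing gives $\sum_{\gamma \in \Delta} |p_1(\gamma) - p_2(\gamma)| = p_1(\Delta)$.

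Next I would handle the complement. Since both $p_1$ and $p_2$ are probability distributions, $p_1(\Delta^c) = 1 - p_1(\Delta)$ and $p_2(\Delta^c) = 1$. Therefore $\sum_{\gamma \in \Delta^c} \bigl(p_2(\gamma) - p_1(\gamma)\bigr) = p_1(\Delta)$, and by the triangle inequality $\sum_{\gamma \in \Delta^c} |p_1(\gamma) - p_2(\gamma)| \geq p_1(\Delta)$. Adding the two contributions gives $\sum_{\gamma \in \Gamma} |p_1(\gamma) - p_2(\gamma)| \geq 2 p_1(\Delta)$, and dividing by $2$ and using the hypothesis yields $p_1(\Delta) \leq \eta$.

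There is no real obstacle here: the statement is essentially the observation that total variation distance equals $\max_E |p_1(E) - p_2(E)|$, specialised to the event $E = \Delta$ on which $p_2$ vanishes. The only subtle point is to remember that both probability masses on $\Delta^c$ enter with the right sign in the triangle inequality, which is what buys the factor of $2$ that makes the $1/2$ in the definition of total variation distance cancel cleanly.
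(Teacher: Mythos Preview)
Your proof is correct and follows essentially the same approach as the paper's: split $\Gamma$ into $\Delta$ and its complement, use $p_2|_\Delta = 0$ to get $p_1(\Delta)$ from the $\Delta$-part, and use normalisation plus the triangle inequality to extract another $p_1(\Delta)$ from the $\Delta^c$-part, so that the full sum is bounded below by $2p_1(\Delta)$. The paper's argument is line-for-line the same.
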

\begin{proof}
First, we should note that
\begin{align}
\sum_{\gamma \in \Gamma}\left|p_{1}(\gamma)-p_{2}(\gamma)\right| & =\sum_{\gamma \in \Delta}|p_{1}(\gamma)-\overbrace{p_{2}(\gamma)}^{0}|+\sum_{\gamma \in \bar{\Delta}}\left|p_{1}(\gamma)-p_{2}(\gamma)\right| \nonumber \\
& =\sum_{\gamma \in \Delta} p_{1}(\gamma)+\sum_{\gamma \in \bar{\Delta}}\left|p_{2}(\gamma)-p_{1}(\gamma)\right|,
\end{align}
where $\bar{\Delta} \coloneqq \Gamma \backslash \Delta$. On the other hand,
\begin{equation}
\sum_{\gamma \in \bar{\Delta}}\left|p_{2}(\gamma)-p_{1}(\gamma)\right| \geq \sum_{\gamma \in \bar{\Delta}} p_{2}(\gamma)-\sum_{\gamma \in \bar{\Delta}} p_{1}(\gamma)  =1-\left(1-\sum_{\gamma \in \Delta} p_{1}(\gamma)\right) =\sum_{\gamma \in \Delta} p_{1}(\gamma).
\end{equation}
Therefore, 
\begin{equation}
2 \sum_{\gamma \in \Delta} p_{1}(\gamma) \leqslant \sum_{\gamma \in \Delta} p_{1}(\gamma)+\sum_{\gamma \in \bar{\Delta}}\left|p_{1}(\gamma)-p_{2}(\gamma)\right| =\sum_{\gamma \in \Gamma}\left|p_{1}(\gamma)-p_{2}(\gamma)\right| \leqslant 2\eta .
\end{equation}
In this way,
\begin{equation}
\sum_{\gamma \in \Delta} p_{1}(\gamma) \leqslant \eta.
\end{equation}
\end{proof}

At this point, we have everything we need to prove the upper bound for $p_{H}^{N}$ provided in the main text.

\begin{proof}[Proof of the upper bound for $p_{H}^{N}$]
Using the variable $\Lambda$, we can express $p_{H}^{N}$ as,
\begin{align}
&p_{H}^{N}  \coloneqq \sum_{\substack{a_1, \ldots, a_N, b_1, \ldots, b_N \\ \left(a_1, b_1\right)=(0,0) \lor \ldots \lor \left(a_N, b_N\right)=(0,0)}} p_{A B \vert XY}\left(\left(a_{1}, b_{1}\right), \ldots,\left(a_{N}, b_{N}\right) \vert 0,0\right) \notag\\
&= \sum_{\lambda} p_{\Lambda}(\lambda) \frac{p_{XY \vert \Lambda}(0,0 \vert \lambda)}{p_{XY}(0, 0)} \sum_{\substack{a_1, \ldots, a_N, b_1, \ldots, b_N \\ \left(a_1, b_1\right)=(0,0) \lor \ldots \lor \left(a_N, b_N\right)=(0,0)}} p_{A B \vert XY\Lambda}\left(\left(a_{1}, b_{1}\right), \ldots,\left(a_{N}, b_{N}\right) \vert 0,0, \lambda\right).
\end{align}
For each $\lambda$, we define
\begin{equation}\label{eq:Defp_Hn}
p_{H}^{N,\lambda} \coloneqq \sum_{\substack{a_1, \ldots, a_N, b_1, \ldots, b_N \\ \left(a_1, b_1\right)=(0,0) \lor \ldots \lor \left(a_N, b_N\right)=(0,0)}} p_{A B \vert XY\Lambda}\left(\left(a_{1}, b_{1}\right), \ldots,\left(a_{N}, b_{N}\right) \vert 0,0, \lambda\right).
\end{equation}
The connection of $p_{H}^{N}$ and $p_{H}^{N,\lambda}$ is given by
\begin{equation}\label{eq:P_hardyAndP_hardyLambda}
 p_{H}^{N} = \sum_{\lambda} p_{\Lambda}(\lambda) \frac{p_{XY \vert \Lambda}(0,0 \vert \lambda)}{p_{XY}(0, 0)} p_{H}^{N,\lambda}.
\end{equation}
We will first find an upper bound for $p_{H}^{N,\lambda}$. Then, we will replace this upper bound in Eq.~\eqref{eq:P_hardyAndP_hardyLambda} and the theorem will be proven. We start by rewriting $p_{H}^{N,\lambda}$ using the partition $ \{1,\ldots,N\} = \left\{\alpha_{1}, \ldots, \alpha_{k}\right\} \cup \left\{\bar{\alpha}_{1}, \ldots, \bar{\alpha}_{N-k}\right\}$.
\begin{align}
& p_{H}^{N,\lambda}  = \sum_{\substack{a_1, \ldots, a_N, b_1, \ldots, b_N \\ \left(a_1, b_1\right)=(0,0) \lor \ldots \lor \left(a_N, b_N\right)=(0,0)}}  p_{A \vert XY\Lambda}\left(a_{1}, \ldots, a_{N} \vert 0, 0, \lambda\right)p_{B \vert XY\Lambda}\left(b_{1}, \ldots, b_{N}|0,0,\lambda\right) \notag \\
& = \sum_{\substack{a_1, \ldots, a_N, b_1, \ldots, b_N  \\ \left(a_{\alpha_1}, b_{\alpha_1}\right)=(0,0) \lor \ldots \lor \left(a_{\alpha_k}, b_{\alpha_k}\right)=(0,0)}} p_{A \vert XY\Lambda}\left(a_{1}, \ldots, a_{N} \vert 0, 0, \lambda\right) p_{B \vert XY\Lambda}\left(b_{1}, \ldots, b_{N} \vert 0, 0, \lambda\right) \notag\\
&+ \sum_{\substack{a_{1}, \ldots, a_{N}, b_{1}, \ldots, b_{N} \\ \left(a_{\bar{\alpha}_{1}}, b_{\bar{\alpha}_{1}}\right)=(0,0) \vee \ldots \vee\left(a_{\bar{\alpha}_{N-k},}, b_{\bar{\alpha}_{N-k}}\right)=(0,0) \\ \left(a_{\alpha_{1}}, b_{\alpha_{1}}\right) \neq(0,0) \wedge \ldots \wedge\left(a_{\alpha_{k}}, b_{\alpha_{k}}\right) \neq(0,0)}} p_{A \vert XY\Lambda}\left(a_{1}, \ldots, a_{N} \vert 0, 0, \lambda\right)p_{B \vert XY\Lambda}\left(b_{1}, \ldots, b_{N}\vert 0, 0, \lambda\right). 
\end{align}
We observe that the sums depend on the pair $(a_i,b_i)$ being equal to $(0,0)$. It is challenging to separate the sum into $a_i$ and $b_i$ independently. We can, however, do that at the cost of providing only an upper bound for $p_{H}^{N,\lambda}$. In fact,
\begin{align}\label{eq:P_hardyN}
p_{H}^{N,\lambda} &\leq \left(\sum_{\substack{a_{1},\ldots,a_{N} \\ a_{\alpha_1}=0 \lor \ldots \lor a_{\alpha_k}=0}}
p_{A \vert XY\Lambda}\left(a_{1}, \ldots, a_{N} \vert 0, 0, \lambda\right) \right)
\left( \sum_{\substack{b_{1},\ldots,b_{N} \\ b_{\alpha_1}=0 \lor \ldots \lor b_{\alpha_k}=0 \\ b_{\bar{\alpha}_{1}} \neq 0 \wedge \ldots \wedge b_{\bar{\alpha}_{k}} \neq 0}}
p_{B \vert XY\Lambda}\left(b_{1}, \ldots, b_{N}\vert 0, 0, \lambda\right) \right)\notag\\
& + \left(\sum_{\substack{a_{1},\ldots,a_{N} \\ a_{\bar{\alpha}_1}=0 \lor \ldots \lor a_{\bar{\alpha}_{N-k}}=0 \\ a_{\alpha_{1}} \neq 0 \wedge \ldots \wedge a_{\alpha_{k}} \neq 0 }} p_{A \vert XY\Lambda}\left(a_{1}, \ldots, a_{N} \vert 0,0,\lambda\right) \right)
\left(\sum_{\substack{b_{1},\ldots,b_{N} \\ b_{\bar{\alpha}_1}=0 \lor \ldots \lor b_{\bar{\alpha}_{N-k}}=0 }}
p_{B \vert XY\Lambda}\left(b_{1}, \ldots, b_{N} \vert 0, 0, \lambda\right) \right)\notag\\
& + \left(\sum_{\substack{a_{1},\ldots,a_{N} \\ a_{\alpha_1}=0 \lor \ldots \lor a_{\alpha_k}=0}}
p_{A \vert XY\Lambda}\left(a_{1}, \ldots, a_{N} \vert 0, 0, \lambda\right) \right) \left(\sum_{\substack{b_{1},\ldots,b_{N} \\ b_{\bar{\alpha}_1}=0 \lor \ldots \lor b_{\bar{\alpha}_{N-k}}=0}}
p_{B \vert XY\Lambda}\left(b_{1}, \ldots, b_{N} \vert 0, 0, \lambda\right) \right).
\end{align}
To simplify the notation, we will denote two of the sums above by
\begin{subequations}
\begin{align}
   & \vartheta_{\lambda} \coloneqq  \sum_{\substack{a_{1},\ldots,a_{N} \\ a_{\alpha_1}=0 \lor \ldots \lor a_{\alpha_k}=0}}  p_{A \vert XY\Lambda}\left(a_{1}, \ldots, a_{N} \vert 0, 0, \lambda\right), \label{eq:DefTheta} \\
   & \omega_{\lambda} \coloneqq  \sum_{\substack{b_{1},\ldots,b_{N} \\ b_{\bar{\alpha}_1}=0 \lor \ldots \lor b_{\bar{\alpha}_{N-k}}=0 }}
p_{B \vert XY\Lambda}\left(b_{1}, \ldots, b_{N}\vert 0, 0, \lambda\right). \label{eq:DefOmega}
\end{align}
\end{subequations}
We will see that it is possible to find an upper bound for $p_{H}^{N,\lambda}$ in terms of $\vartheta_{\lambda}$ and $\omega_{\lambda}$. To do this, we first observe that, by normalization of $ p_{A \vert XY\Lambda}$,
\begin{align}
1 &= \sum_{a_{1}, \ldots, a_{N}} p_{A \vert XY\Lambda}\left(a_{1}, \ldots, a_{N} \vert  0,0,\lambda\right) \notag\\ 
& = \sum_{\substack{a_{1},\ldots,a_{N} \\ a_{\alpha_1}=0 \lor \ldots \lor a_{\alpha_k}=0}}  p_{A \vert XY\Lambda}\left(a_{1}, \ldots, a_{N} \vert 0, 0, \lambda\right) \notag\\
&+ \sum_{\substack{a_{1},\ldots,a_{N} \\ a_{\bar{\alpha}_1}=0 \lor \ldots \lor a_{\bar{\alpha}_{N-k}}=0 \\ a_{\alpha_{1}} \neq 0 \wedge \ldots \wedge a_{\alpha_{k}} \neq 0 }} p_{A \vert XY\Lambda}\left(a_{1}, \ldots, a_{N} \vert 0,0,\lambda\right) + p_{A \vert XY\Lambda}\left(1, \ldots, 1 \vert  0,0,\lambda\right).
\end{align}
Therefore,
\begin{align}\label{eq:UB_theta}
\sum_{\substack{a_{1},\ldots,a_{N} \\ a_{\bar{\alpha}_1}=0 \lor \ldots \lor a_{\bar{\alpha}_{N-k}}=0 \\ a_{\alpha_{1}} \neq 0 \wedge \ldots \wedge a_{\alpha_{k}} \neq 0 }} p_{A \vert XY\Lambda}\left(a_{1}, \ldots, a_{N} \vert 0,0,\lambda\right) &= 1 - \vartheta_{\lambda} - p_{A \vert XY\Lambda}\left(1, \ldots, 1 \vert  0,0,\lambda\right)\notag \\
& \le 1 - \vartheta_{\lambda}.
\end{align}

We will deal now with the sum of Bob's probabilities. Repeating the idea of using the normalization of the probability distribution
 \begin{align}
1 &= \sum_{b_{1}, \ldots, b_{N}} p_{B \vert XY\Lambda}\left(b_{1}, \ldots, b_{N} \vert  0, 0 \lambda\right) \notag\\ 
& = \sum_{\substack{b_{1},\ldots,b_{N} \\ b_{\bar{\alpha}_1}=0 \lor \ldots \lor b_{\bar{\alpha}_{N-k}}=0 }} p_{B \vert XY\Lambda}\left(b_{1}, \ldots, b_{N} \vert 0, 0 \lambda\right) \notag \\ 
&+ \sum_{\substack{b_{1},\ldots,b_{N} \\ b_{\alpha_1}=0 \lor \ldots \lor b_{\alpha_k}=0 \\ b_{\bar{\alpha}_{1}} \neq 0 \wedge \ldots \wedge b_{\bar{\alpha}_{k}} \neq 0 }}
p_{B \vert XY\Lambda}\left(b_{1}, \ldots, b_{N} \vert 0, 0 \lambda\right) + p_{B \vert XY\Lambda}\left(1, \ldots, 1  \vert 0, 0, \lambda\right).
\end{align}
 Therefore,
\begin{align}\label{eq:UB_omega}
\sum_{\substack{b_{1},\ldots,b_{N} \\ b_{\alpha_1}=0 \lor \ldots \lor b_{\alpha_k}=0 \\ b_{\bar{\alpha}_{1}} \neq 0 \wedge \ldots \wedge b_{\bar{\alpha}_{k}} \neq 0 }}
p_{B \vert XY\Lambda}\left(b_{1}, \ldots, b_{N} \vert 0, 0 \lambda\right) &= 1 - \omega_{\lambda} - p_{B \vert XY\Lambda}\left(1, \ldots, 1  \vert 0, 0,\lambda\right)\notag \\
& \le 1 - \omega_{\lambda}.
\end{align}
We can now return to $p_{H}^{N, \lambda}$ [Eq.~\eqref{eq:P_hardyN}]. Using the definitions of $\vartheta_{\lambda}$ and $\omega_{\lambda}$ and Eqs.~\eqref{eq:UB_theta}-\eqref{eq:UB_omega}, we obtain the following upper bound for $p_{H}^{N, \lambda}$:
\begin{align}\label{UB_pHN}
p_{H}^{N, \lambda} &\le (1 - \omega_{\lambda})\vartheta_{\lambda} + \omega_{\lambda}(1 - \vartheta_{\lambda}) + \omega_{\lambda}\vartheta_{\lambda} \notag\\
&= \vartheta_{\lambda} + \omega_{\lambda} - \omega_{\lambda}\vartheta_{\lambda}.
\end{align}

The next step is to find upper bounds for $\vartheta_{\lambda}$ and $\omega_{\lambda}$. First, we should remember that if $r \in \{\alpha_1, \ldots, \alpha_k\}$, by Eq.~\eqref{eq:AlphaAlphaBar},
\begin{equation}
p_{A \vert XY\Lambda}\left(a_{1}, \ldots, a_{r}, 0, a_{r}, \ldots, a_{N} \vert 0, 1, \lambda\right)=0\;\;\;\;\forall a_{1}, \ldots, \hat{a}_r, \ldots, a_{N}.
\label{a3355}
\end{equation}
Defining $\Delta$ = $\{(a_{1},\ldots, a_{N}) \in \{0,1\}^{N} |  a_{\alpha_1}=0 \lor \ldots \lor a_{\alpha_k}=0\}$, by Eq.~\eqref{a3355}, 
\begin{equation}
p_{A \vert XY\Lambda}\left(\Delta \vert 0, 1, \lambda\right) = \sum_{\substack{a_{1},\ldots,a_{N} \\ a_{\alpha_1}=0 \lor \ldots \lor a_{\alpha_k}=0}} p_{A \vert XY\Lambda}\left(a_{1},  \ldots, a_{N} \vert 0, 1, \lambda\right) = 0.
\end{equation}
On the other hand, by the $(\varepsilon_A, \varepsilon_B)$-PD condition [Eq.~\eqref{eq:PD_epsilonA}],
\begin{equation}
 \frac{1}{2}\sum_{a_{1},\ldots, a_{N}}\left|p_{A \vert XY\Lambda}\left(a_{1}, \ldots, a_{N} \vert 0,0,\lambda\right)-p_{A \vert XY\Lambda}\left(a_{1}, \ldots, a_{N} \vert 0,1,\lambda\right)\right| \leq \varepsilon_{A}.
\end{equation}
Applying Proposition  \ref{prop:Probs}, we find the following upper bound for $\vartheta_{\lambda}$:
\begin{equation}
\vartheta_{\lambda} = \sum_{\substack{a_{1},\ldots,a_{N} \\ a_{\alpha_1}=0 \lor \ldots \lor a_{\alpha_k}=0}}p_{A \vert XY\Lambda}\left(a_{1}, \ldots, a_{N} \vert 0,0,\lambda\right) \leq \varepsilon_{A}. 
\end{equation}
The strategy for finding an upper bound for $\omega_{\lambda}$ is similar. In fact, by Lemma \ref{lem:AlphaBeta}, we have $\left\{\bar{\alpha}_{1}, \ldots, \bar{\alpha}_{N-k}\right\} \subseteq\left\{\bar{\beta}_{1}, \ldots, \bar{\beta}_{N-k'}\right\}$. In this way, for all $r \in\left\{\bar{\alpha}_{1}, \ldots, \bar{\alpha}_{N-k}\right\} \subseteq\left\{\bar{\beta}_{1}, \ldots, \bar{\beta}_{N-k'}\right\}$, by Eq.~\eqref{eq:BetaBetaBar},
\begin{equation}
\label{a34}
p_{B \vert XY\Lambda}\left(b_{1}, \ldots, b_{r-1}, 0, b_{r+1}, \ldots, b_{N} \vert 1, 0, \lambda\right)=0\;\;\;\;\forall b_{1} \ldots, \hat{b}_{r}, \ldots, b_{N}.
\end{equation}
Let $\Delta$ = $\{(b_{1},\ldots, b_{N}) \in \{0,1\}^{N} |  b_{\bar{\alpha}_1}=0 \lor \ldots \lor b_{\bar{\alpha}_{N-k}}=0\}$. Then, by Eq.~\eqref{a34}, 
\begin{equation}\label{eq:Delta_sumBob}
p_{B \vert XY\Lambda}\left(\Delta \vert 1, 0, \lambda\right) = \sum_{\substack{b_{1},\ldots,b_{N} \\ b_{\bar{\alpha}_1}=0 \lor \ldots \lor b_{\bar{\alpha}_{N-k}}=0}} p_{B \vert XY\Lambda}\left(b_{1},  \ldots, b_{N}  \vert 1, 0, \lambda\right) = 0.
\end{equation}
By the $(\varepsilon_A, \varepsilon_B)$-PD condition [Eq.~\eqref{eq:PD_epsilonB}],
\begin{align}
&\frac{1}{2}\sum_{b_{1},\ldots,b_{N} } \left| p_{B \vert XY\Lambda}\left(b_{1}, \ldots, b_{N} \vert  0,0,\lambda\right) - p_{B \vert XY\Lambda}\left(b_{1}, \ldots, b_{N} \vert  1, 0, \lambda\right) \right| \leq \varepsilon_B.
\end{align}
Applying Proposition \ref{prop:Probs}, we find the following upper bound for $\omega_{\lambda}$:
\begin{align}
&\omega_{\lambda} = \sum_{\substack{b_{1},\ldots,b_{N} \\ b_{\bar{\alpha}_1}=0 \lor \ldots \lor b_{\bar{\alpha}_{N-k}}=0}} p_{B \vert XY\Lambda}\left(b_{1}, \ldots, b_{N} \vert a_{1}, \ldots, a_{N}, 0,0,\lambda\right) \leq \varepsilon_B.
\end{align}
Therefore, 
\begin{equation}
p_{H}^{N, \lambda} \le \vartheta_{\lambda} + \omega_{\lambda} - \omega_{\lambda}\vartheta_{\lambda}, 
\end{equation}
where $\vartheta_{\lambda} \in [0, \varepsilon_A]$ and $\omega_{\lambda} \in [0,\varepsilon_B]$. 
This way, it is simple to see that the following upper bound is valid for $p_{H}^{N, \lambda}$:
\begin{align}
p_{H}^{N, \lambda} & \leq \varepsilon_{A}+\varepsilon_{B} - \varepsilon_{A} \varepsilon_{B}. \label{a4444}
\end{align}
Since this upper bound does not depend on $\lambda$, substituting Eq.~\eqref{a4444} in Eq.~\eqref{eq:P_hardyAndP_hardyLambda}, we conclude the proof,
\begin{align}
p_{H}^{N} &= \sum_{\lambda} p_{\Lambda}(\lambda) \frac{p_{XY \vert \Lambda}(0,0 \vert \lambda)}{p_{XY}(0, 0)} p_{H}^{N,\lambda} \notag\\
& \le \frac{(\varepsilon_{A}+\varepsilon_{B} - \varepsilon_{A} \varepsilon_{B})}{p_{XY}(0, 0)} \sum_{\lambda} p_{\Lambda}(\lambda) p_{XY \vert \Lambda}(0,0 \vert \lambda)\notag\\
& = \varepsilon_{A}+\varepsilon_{B} - \varepsilon_{A} \varepsilon_{B}. 
\end{align}
\end{proof}

Moreover, this upper bound for $p_{H}^{N}$ is tight. This can be seen by considering the following correlation:
\begin{equation}
p_{AB|XY} \left(a_{1}, b_{1}, \ldots,a_{N}, b_{N} \vert x, y\right) = p_{A|XY}\left(a_{1}, \ldots, a_{N}\vert x, y\right) p_{B|XY}\left(b_{1}, \ldots, b_{N}\vert x, y\right),
\end{equation}
where
\begin{subequations}
\begin{align}
&p_{A|XY}\left(a_{1}, \ldots, a_{N}\vert 0, 0\right) = \varepsilon_{A} \delta(a_{1} \ldots a_{N}, 0 \ldots 0) + (1 - \varepsilon_A) \delta(a_{1}, \ldots, a_{N}, 0 \ldots 01), \\
&p_{A|XY}\left(a_{1}, \ldots, a_{N}\vert 0, 1\right) = \varepsilon_{A} \delta(a_{1} \ldots a_{N}, 1 \ldots 1) + (1 - \varepsilon_A) \delta(a_{1}, \ldots, a_{N}, 0 \ldots 01), \\
&p_{A|XY}\left(a_{1}, \ldots, a_{N}\vert 1, 0\right) = \varepsilon_{A} \delta(a_{1} \ldots a_{N}, 0 \ldots 0) + (1 - \varepsilon_A) \delta(a_{1}, \ldots, a_{N}, 1 \ldots 10), \\
&p_{A|XY}\left(a_{1}, \ldots, a_{N}\vert 1, 1\right) = \varepsilon_{A} \delta(a_{1} \ldots a_{N}, 1 \ldots 1) + (1 - \varepsilon_A) \delta(a_{1}, \ldots, a_{N}, 1 \ldots 10), \\
&p_{B|XY}\left(b_{1}, \ldots, b_{N}\vert 0, 0\right) = \varepsilon_{B} \delta(b_{1} \ldots b_{N}, 0 \ldots 0) + (1 - \varepsilon_{B}) \delta(b_{1}, \ldots, b_{N}, 1 \ldots 10), \\
&p_{B|XY}\left(b_{1}, \ldots, b_{N}\vert 0, 1\right) = \varepsilon_{B} \delta(b_{1} \ldots b_{N}, 0 \ldots 0) + (1 - \varepsilon_{B}) \delta(b_{1}, \ldots, b_{N}, 0 \ldots 01), \\
&p_{B|XY}\left(b_{1}, \ldots, b_{N}\vert 1, 0\right) = \varepsilon_{B} \delta(b_{1} \ldots b_{N}, 1 \ldots 1) + (1 - \varepsilon_{B}) \delta(b_{1}, \ldots, b_{N}, 1 \ldots 10), \\
&p_{B|XY}\left(b_{1}, \ldots, b_{N}\vert 1, 1\right) = \varepsilon_{B} \delta(b_{1} \ldots b_{N}, 1 \ldots 1) + (1 - \varepsilon_{B}) \delta(b_{1}, \ldots, b_{N}, 0 \ldots 01). 
\end{align}
\end{subequations}
It follows that $p_{AB|XY}$ is $(\varepsilon_A, \varepsilon_B)$-PD and satisfies Eqs.~\eqref{eq:HardyZeroSM}. Moreover, for this correlation, $p_{H}^{N} = \varepsilon_{A}+\varepsilon_{B} - \varepsilon_{A} \varepsilon_{B}$.


\section*{Supplementary Note 3. Pseudo telepathy games and arbitrary relaxation of MI and PI}\label{app:PT_Games}


In \cite{putz_measurement_2016}, a protocol is presented to transform a Bell inequality into a new inequality that attests to non-locality over MI relaxations. Additionally, if the classical value of the original inequality is zero and the quantum value is strictly bigger than zero, then the new inequality created is even capable of attesting non-locality over arbitrary MI relaxations.

An interesting question is to look for what properties an inequality needs to have to allow us to attest to quantum violation on the hypothesis of arbitrary PI relaxations. In Sec.~\ref{app:NecessaryCondition} we enter into this discussion. We consider the reiteration of a Bell inequality as a Bell non-local game. With this, we show that, if a non-local game has a nonsignaling wining probability $\omega_{NS}(G) =1$, then, to allow arbitrary PI relaxations, it must be a pseudo telepathy (PT) game, i.e., it must have a quantum winning probability $\omega_Q(G) = 1$.


Due to these stunning properties of PT games, a natural question is whether they are always able to attest Bell non-locality over arbitrary relaxations of MI and PI. In Sec.~\ref{app:MG_Game_PI}, we delve deeper into this problem, providing an example of a PT game, known as the Magic Square game, which can be simulated using fixed and not complete relaxation of PI, thus,
\begin{italiccenter}
Not all PT games can certify non-locality over arbitrary relaxation of MI and PI.
\end{italiccenter}
This leads to the conclusion that PT games that attest non-locality to arbitrary relaxations of MI and PI, as shown in Supplementary Note 2, offer an even stronger test of Bell non-locality than general PT games.
\subsection{Necessary condition}\label{app:NecessaryCondition}
\begin{defi}
A non-local game is a 6-tuple $G = (X, Y, A, B, \pi, V) $, where $X$ and $Y$ are the input sets for Alice and Bob, respectively, $A$ and $B$ the respective output sets, $\pi: \pi(X,Y) $ is the input distribution, and $V: V(A,B,X,Y) \in \{0,1\} $ is the game's winning condition function. The classical and quantum winning probabilities of the game are defined as follows:
\begin{subequations}
\begin{align}
& w_{C}(G) := \max_{p(a,b|x,y) \in \text{C}} \sum_{x,y,a,b} \pi(x,y) \cdot p(a,b|x,y) \cdot V(a,b,x,y), \\
   &  w_{Q}(G) := \max_{p(a,b|x,y) \in \text{Q}} \sum_{x,y,a,b} \pi(x,y) \cdot p(a,b|x,y) \cdot V(a,b,x,y) = 1,
\end{align}
\end{subequations}
where the maximisation is made on the local and quantum set of correlations, respectively.
\end{defi}
It is easily to show that, for any game $G$ with $\omega_{NS}(G) = 1$, there exists a deterministic correlation $p_{AB|XY}$ with $p_{AB|XY}(ab|xy) = p_{A|XY}(a|xy)p_{B|Y}(b|y)$, $p_{AB|XY}(ab|xy) \in \{0,1\}$, such that $\omega_{p}(G) = 1$ \cite{Bacon:PRL2003, brask_bell_2017}. Using this fact, we can prove the following proposition:
\begin{prop}
Consider a game G with $\omega_{NS}(G) = 1$. If $\omega_Q(G) < \omega_{NS}(G)$, then there exists a $\varepsilon_A$-PD HV model, for $\varepsilon_A < 1$, such that
\begin{equation}
\omega_{p}(G) = \omega_{Q}(G).
\end{equation}
\end{prop}
\begin{proof}
Let 
\begin{equation}
p^{S}_{AB|XY}(ab|xy) = p^{S}_{A|XY}(a|xy)p^{S}_{B|Y}(b|y)
\end{equation}
be the correlation given that achieves maximum performance in the game $G$. Let $p^{C}_{A|X}(a|x)$ be a arbitrary distribution and let 
\begin{equation}
p^{C}_{AB|XY}(ab|xy) = p^{C}_{A|X}(a|x)p^{S}_{B|Y}(b|y),
\end{equation}
and 
\begin{equation}
r = \frac{\omega_Q(G) - \omega_{p^C}(G)}{1 - \omega_{p^C}(G)}.
\end{equation}
It follows that $r<1$, as $\omega_Q(G) < \omega_{NS}(G) = 1$. Then, let 
\begin{align}
\tilde{p}_{AB|XY}(ab|xy) &= r p^{S}_{AB|XY}(ab|xy) + (1 - r) p^{C}_{AB|XY}(ab|xy) \notag\\
&= (rp^{S}_{A|XY}(a|xy) + (1 - r)p^{C}_{A|X}(a|x))p^{S}_{B|Y}(b|y) \notag \\
&= \tilde{p}_{A|XY}(a|xy)\tilde{p}_{B|Y}(b|y).
\end{align}
It follows that
\begin{align}
\frac{1}{2}\sum_{a} |\tilde{p}\left(a \vert  \lambda, x, y\right) - p\left(a \vert \lambda, x, y'\right)| 
&= \frac{1}{2}\sum_{a} | rp^{S}_{A|XY}(a|xy) + (1 - r)p^{C}_{A|X}(a|x) - rp^{S}_{A|XY}(a|xy') - (1 - r)p^{C}_{A|X}(a|x)| \notag \\
& = \frac{1}{2}\sum_{a} r| p^{S}_{A|XY}(a|xy) - p^{S}_{A|XY}(a|xy')| \notag \\
&\le r.
\end{align}
   
On the other hand, it follows that
\begin{align}
\omega_{\tilde{p}}(G)  &= r\omega_{p^S}(G) + (1-r)\omega_{p^C}(G) \notag\\
&= \frac{\omega_Q(G) - \omega_{p^C}(G)}{1 - \omega_{p^C}(G)} + \left(1 - \frac{\omega_Q(G) - \omega_{p^C}(G)}{1 - \omega_{p^C}(G)} \right)\omega_{p^C}(G) \notag\\
&= \frac{\omega_Q(G) - \omega_{p^C}(G)}{1 - \omega_{p^C}(G)} + \left(\frac{\omega_{p^C}(G) - \omega_{p^C}(G)\omega_Q(G)}{1 - \omega_{p^C}(G)} \right) \notag\\
&= \omega_Q(G).
\end{align}
 
Therefore, $\tilde{p}_{AB|XY}$ is a $\varepsilon_A$-PD HV model, for $\varepsilon_A= r < 1$ and with $\omega_{p}(G) = \omega_{Q}(G)$.
\end{proof}

\begin{cor}
A non-local game that attests to arbitrary PI relaxations needs to be a PT game.
\end{cor}
\begin{proof}
Let $p^{Q}_{AB|XY}(ab|xy)$ be a quantum correlation that is not on the boundary of the nonsignaling polytope. Therefore, there is a non-local game s.t.~$\omega_{p^Q}(G) < \omega_{NS}(G)=1.$ Then, by the last result, we can achieve the same performance in this test using a $\varepsilon_A$-PD HV model with $\varepsilon_A < 2$.
\end{proof}


\subsection{The magic square game and relaxations of PI}\label{app:MG_Game_PI}


The magic square game is a cooperative game between two players, Alice and Bob \cite{PhysRevLett.86.1911, PhysRevLett.87.010403,Aravind:2004AJP}. It can be seen as a non-local game, where Alice and Bob receive one trit as input and need to provide three bits as output. The game performance is given by
\begin{equation}
w(p_{AB | XY}) := \frac{1}{9} \sum_{x,y,a,b} p_{AB | XY}(a_0, a_1, a_2, b_0, b_1, b_2|x,y) \cdot V(a_0, a_1, a_2, b_0, b_1, b_2,x,y),
\end{equation}
where
\begin{equation}
V(a_0, a_1, a_2, b_0, b_1, b_2,x,y) = 
\begin{cases}
1 & \text{ if } a_y = b_x, \text{ } a_0\oplus a_1 \oplus a_2 = 0 \text{ and } b_0\oplus b_1 \oplus b_2 = 1,\\
0 & \text{ otherwise.}
\end{cases}
\end{equation}
It is well known that
\begin{subequations}
\begin{align}
& w_{C}  := \max_{p_{AB | XY} \in \mathcal{C}} w(p_{AB | XY}) = \frac{8}{9} < 1, \\
& w_{Q} := \max_{p_{AB | XY} \in \mathcal{Q}} w(p_{AB | XY}) = 1,
\end{align}
\end{subequations}
where $\mathcal{C}$ and $\mathcal{Q}$ denote the classical and quantum correlation sets, respectively. 

Let us consider the following correlation:
\begin{equation}
\tilde{p}_{AB | XY}(a_0, a_1, a_2, b_0, b_1, b_2|x,y) = \tilde{p}_{A|X}(a_0, a_1, a_2|x)\tilde{p}_{B|XY}(b_0, b_1, b_2|x,y),
\end{equation}
where
\begin{subequations}
\begin{align}
& \tilde{p}_{A|X}(a_0, a_1, a_2|x) = \delta(a_0 a_1 a_2, 110), \\
& \tilde{p}_{B|XY}(b_0, b_1, b_2|x,0) = \tilde{p}_{B|XY}(b_0, b_1, b_2|0) =  \delta(b_{0} b_{1} b_{2}, 111) \quad \forall x, \\
&\tilde{p}_{B|XY}(b_0, b_1, b_2|x,1) = \tilde{p}_{B|XY}(b_0, b_1, b_2|1) =  \delta(b_{0} b_{1} b_{2}, 111)  \quad \forall x, \\
&\tilde{p}_{B|XY}(b_0, b_1, b_2|0,2) =  \frac{1}{2}\delta(b_{0} b_{1} b_{2}, 001) + \frac{1}{2}\delta(b_{0} b_{1} b_{2}, 010),\\
&\tilde{p}_{B|XY}(b_0, b_1, b_2|1,2) =  \frac{1}{2}\delta(b_{0} b_{1} b_{2}, 001) + \frac{1}{2}\delta(b_{0} b_{1} b_{2}, 100),\\
&\tilde{p}_{B|XY}(b_0, b_1, b_2|2,2) =  \frac{1}{2}\delta(b_{0} b_{1} b_{2}, 100) + \frac{1}{2}\delta(b_{0} b_{1} b_{2}, 010).
\end{align}
\end{subequations}

It follows that
\begin{equation}
\frac{1}{2}\sum_{b_{0}, b_{1}, b_{2}}\left|\tilde{p}_{B \vert X Y}\left(b_{0}, b_{1}, b_{2} \vert x, y\right) - \tilde{p}_{B \vert X Y}\left(b_{0}, b_{1}, b_{2} \vert x^{\prime}, y\right)\right| \leqslant \frac{1}{2}
\end{equation}
$\forall x, x^{\prime}, y$. Therefore, $\tilde{p}_{AB | XY}(a_0, a_1, a_2, b_0, b_1, b_2|x,y)$ is a $(0,1/2)$-PD HV model. Moreover, we can easily check that $w(\tilde{p}_{AB | XY}) = 1$. Therefore, we cannot have arbitrary relaxation of PI for the magic square game since $\varepsilon_{B}=1/2<1$ is already enough to get the maximum for $w(P_{AB | XY})$.


\section*{Supplementary Note 4. Derivation of the Bell-like inequality for HV models with arbitrary (but not complete) MD and PD}
\label{app:C}


Here, we will prove an upper bound for the Bell-like inequality:
\begin{align}\label{eq:IkappaSM}
   I_{\kappa}^{N}(p_{A B \vert XY}) &\coloneqq \sum_{\substack{a_1, \ldots, a_N, b_1, \ldots, b_N \\ \left(a_1, b_1\right)=(0,0) \lor \ldots \lor \left(a_N, b_N\right)=(0,0)}} p_{A B \vert XY}\left(\left(a_{1}, b_{1}\right), \ldots,\left(a_{N}, b_{N}\right) \vert 0,0\right) \notag \\
   &- \kappa \sum_{\substack{a_1, \ldots, a_N, b_1, \ldots, b_N \\ \left(a_1, b_1\right)=(0,1) \lor \ldots \lor \left(a_N, b_N\right)=(0,1)}} p_{A B \vert XY}\left(\left(a_{1}, b_{1}\right), \ldots,\left(a_{N}, b_{N}\right) \vert 0,1\right) \notag \\
& - \kappa \sum_{\substack{a_1, \ldots, a_N, b_1, \ldots, b_N \\ \left(a_1, b_1\right)=(1,0) \lor \ldots \lor \left(a_N, b_N\right)=(1,0)}} p_{A B \vert XY}\left(\left(a_{1}, b_{1}\right), \ldots,\left(a_{N}, b_{N}\right) \vert 1,0\right)\notag\\
&- \kappa \sum_{\substack{a_1, \ldots, a_N, b_1, \ldots, b_N \\ \left(a_1, b_1\right)=(0,0) \lor \ldots \lor \left(a_N, b_N\right)=(0,0)}} p_{A B \vert XY}\left(\left(a_{1}, b_{1}\right), \ldots,\left(a_{N}, b_{N}\right) \vert 1,1\right).
\end{align}

For simplicity, we will assume that the input probability distribution is uniform, \textit{i.e.}, $p(x,y) = \frac{1}{4}$ for all $x,y$. Hence, by Bayes's theorem,
\begin{equation}\label{eq:BayesApend}
p_{\Lambda|XY}(\lambda|x,y) = \frac{p_{\Lambda}(\lambda)p_{XY|\Lambda}(x,y|\lambda)}{p_{XY}(x,y)} = 4p_{\Lambda}(\lambda)p_{XY|\Lambda}(x,y|\lambda).
\end{equation}
It is important to emphasise that the distribution of the inputs being uniform does not impose any restrictions on the amount of correlation between $\lambda$ and $x,y$. 

Evaluating the functional $I_{\kappa}^{N}$ over this distribution $p_{AB|XY}$ of Eq.~\eqref{eq:OI_Apendice} and using Eq.~\eqref{eq:BayesApend}, we have
\begin{align}
   I_{\kappa}^{N}(p_{A B \vert XY}) &= 4\sum_{\lambda} p_{\Lambda}(\lambda)p_{XY|\Lambda}(0,0|\lambda)\sum_{\substack{a_1, \ldots, a_N, b_1, \ldots, b_N \\ \left(a_1, b_1\right)=(0,0) \lor \ldots \lor \left(a_N, b_N\right)=(0,0)}} p_{A B \vert XY}\left(\left(a_{1}, b_{1}\right), \ldots,\left(a_{N}, b_{N}\right) \vert 0,0, \lambda \right) \notag \\
   &- 4\kappa \sum_{\lambda} p_{\Lambda}(\lambda)p_{XY|\Lambda}(0,1|\lambda)\sum_{\substack{a_1, \ldots, a_N, b_1, \ldots, b_N \\ \left(a_1, b_1\right)=(0,1) \lor \ldots \lor \left(a_N, b_N\right)=(0,1)}} p_{A B \vert XY}\left(\left(a_{1}, b_{1}\right), \ldots,\left(a_{N}, b_{N}\right) \vert 0,1, \lambda \right) \notag \\
& - 4\kappa \sum_{\lambda} p_{\Lambda}(\lambda)p_{XY|\Lambda}(1,0|\lambda) \sum_{\substack{a_1, \ldots, a_N, b_1, \ldots, b_N \\ \left(a_1, b_1\right)=(1,0) \lor \ldots \lor \left(a_N, b_N\right)=(1,0)}} p_{A B \vert XY}\left(\left(a_{1}, b_{1}\right), \ldots,\left(a_{N}, b_{N}\right) \vert 1,0, \lambda \right)\notag\\
&- 4\kappa \sum_{\lambda} p_{\Lambda}(\lambda)p_{XY|\Lambda}(1,1|\lambda) \sum_{\substack{a_1, \ldots, a_N, b_1, \ldots, b_N \\ \left(a_1, b_1\right)=(0,0) \lor \ldots \lor \left(a_N, b_N\right)=(0,0)}} p_{A B \vert XY}\left(\left(a_{1}, b_{1}\right), \ldots,\left(a_{N}, b_{N}\right) \vert 1,1, \lambda \right).
\end{align}
Applying the $l$-MD condition [Eq.~\eqref{eq:MDdefinition_Append}], we can establish $l$ as a lower bound of the probability distributions $p_{XY|\Lambda}(0,1|\lambda)$, $p_{XY|\Lambda}(0,1|\lambda)$, and $p_{XY|\Lambda}(0,1|\lambda)$. Consequently, we can derive an upper bound for $I_{\kappa}^{N}(p_{A B \vert XY})$ as follows:
\begin{align}\label{eq:IkappaIneq1}
   I_{\kappa}^{N}(p_{A B \vert XY}) &\leq 4\sum_{\lambda} p_{\Lambda}(\lambda) p_{XY|\Lambda}(0,0|\lambda)\sum_{\substack{a_1, \ldots, a_N, b_1, \ldots, b_N \\ \left(a_1, b_1\right)=(0,0) \lor \ldots \lor \left(a_N, b_N\right)=(0,0)}} p_{A B \vert XY}\left(\left(a_{1}, b_{1}\right), \ldots,\left(a_{N}, b_{N}\right) \vert 0,0, \lambda \right) \notag \\
   &- 4l\kappa \sum_{\lambda} p_{\Lambda}(\lambda)\sum_{\substack{a_1, \ldots, a_N, b_1, \ldots, b_N \\ \left(a_1, b_1\right)=(0,1) \lor \ldots \lor \left(a_N, b_N\right)=(0,1)}} p_{A B \vert XY}\left(\left(a_{1}, b_{1}\right), \ldots,\left(a_{N}, b_{N}\right) \vert 0,1 , \lambda \right) \notag \\
& - 4l\kappa \sum_{\lambda} p_{\Lambda}(\lambda) \sum_{\substack{a_1, \ldots, a_N, b_1, \ldots, b_N \\ \left(a_1, b_1\right)=(1,0) \lor \ldots \lor \left(a_N, b_N\right)=(1,0)}} p_{A B \vert XY}\left(\left(a_{1}, b_{1}\right), \ldots,\left(a_{N}, b_{N}\right) \vert 1,0, \lambda \right)\notag\\
&- 4l\kappa \sum_{\lambda} p_{\Lambda}(\lambda) \sum_{\substack{a_1, \ldots, a_N, b_1, \ldots, b_N \\ \left(a_1, b_1\right)=(0,0) \lor \ldots \lor \left(a_N, b_N\right)=(0,0)}} p_{A B \vert XY}\left(\left(a_{1}, b_{1}\right), \ldots,\left(a_{N}, b_{N}\right) \vert 1,1, \lambda \right).
\end{align}
To simplify the notation, given a $\lambda$, we will represent the following sums as $\delta_{0,1}^{\lambda}, \delta_{1,0}^{\lambda}, \delta_{1,1}^{\lambda}$:
\begin{subequations}\label{eq:HardyZeroLambdaGeneral}
\begin{align}
& \delta_{0,1}^{\lambda} \coloneqq \sum_{\substack{a_1, \ldots, a_N, b_1, \ldots, b_N \\ \left(a_1, b_1\right)=(0,1) \lor \ldots \lor \left(a_N, b_N\right)=(0,1)}} p_{A \vert XY\Lambda}\left(a_1 \ldots, a_{N} \vert 0, 1, \lambda\right) p_{B \vert XY\Lambda}\left(b_{1}, \ldots, b_{N} \vert 0, 1, \lambda\right),
\label{eq:HardyZero01LambdaGeneral} \\
& \delta_{1,0}^{\lambda} \coloneqq \sum_{\substack{a_1, \ldots, a_N, b_1, \ldots, b_N \\ \left(a_1, b_1\right)=(1,0) \lor \ldots \lor \left(a_N, b_N\right)=(1,0)}} p_{A \vert XY\Lambda}\left(a_1 \ldots, a_{N} \vert 1, 0, \lambda\right) p_{B \vert XY\Lambda}\left(b_{1}, \ldots, b_{N} \vert 1, 0, \lambda\right),
\label{eq:HardyZero10LambdaGeneral}\\
& \delta_{1,1}^{\lambda} \coloneqq \sum_{\substack{a_1, \ldots, a_N, b_1, \ldots, b_N \\ \left(a_1, b_1\right)=(0,0) \lor \ldots \lor \left(a_N, b_N\right)=(0,0)}} p_{A \vert XY\Lambda}\left(a_1 \ldots, a_{N} \vert 1, 1, \lambda\right) p_{B \vert XY\Lambda}\left(b_{1}, \ldots, b_{N} \vert 1, 1, \lambda\right).
\label{eq:HardyZero11LambdaGeneral}
\end{align}
\end{subequations}
Therefore, given $r \in \{1,\ldots,N\}$, as a consequence of Eq.~\eqref{eq:HardyZero01LambdaGeneral}, we obtain
\begin{equation}
\sum_{\substack{a_{1}, \ldots, \hat{a}_r,  \ldots, a_N \\ b_{1}, \ldots, \hat{b}_r,  \ldots, b_N}} p_{A \vert XY\Lambda}\left(a_{1}, \ldots, a_{r-1}, 0, a_{r-1},\ldots, a_{N}|0, 1, \lambda\right) p_{B \vert XY\Lambda}\left(b_{1}, \ldots, b_{r-1}, 1, b_{r-1},\ldots, b_{N}|0, 1, \lambda\right) \leq \delta_{01}^{\lambda}.
\end{equation}
This expression can be reformulated as
\begin{align}
&\left(\sum_{a_{1}, \ldots, \hat{a}_r,  \ldots, a_N} p_{A \vert XY\Lambda}\left(a_{1}, \ldots, a_{r-1}, 0, a_{r-1},\ldots, a_{N}|0, 1, \lambda\right) \right) \notag \\
&\times \left(\sum_{ b_{1}, \ldots, \hat{b}_r,  \ldots, b_N}p_{B \vert XY\Lambda}\left(b_{1}, \ldots, b_{r-1}, 1, b_{r-1},\ldots, b_{N}|0, 1, \lambda\right) \right) \leq \delta_{01}^{\lambda}.
\end{align}
Consequently, at least one of the two sums must be upper bounded by $\sqrt{\delta_{01}^{\lambda}}$.
Therefore, Eq.~\eqref{eq:HardyZero01LambdaGeneral} implies the existence of $\alpha^{\lambda}_{1}, \ldots, \alpha^{\lambda}_{k} \subseteq\{1, \ldots, N\}$ such that
\begin{subequations}\label{eq:AlphaAlphaBarGeneral}
\begin{align}
& \sum_{a_{1}, \ldots, \hat{a}_{\alpha^{\lambda}_{i}},  \ldots, a_N} p_{A \vert XY\Lambda}\left(a_{1}, \ldots, a_{\alpha^{\lambda}_{i}-1}, 0, a_{\alpha^{\lambda}_{i}+1}, \ldots, a_{N} \vert 0, 1, \lambda\right) \leq \sqrt{\delta_{0,1}^{\lambda}} \quad \forall i \in\{1, \ldots, k\},
\label{eq:AlphaApendGeneral}
\\
& \sum_{b_{1}, \ldots, \hat{b}_{\bar{\alpha}^{\lambda}_{j}},  \ldots, b_N} p_{B \vert XY\Lambda}\left(b_1, \ldots, b_{\bar{\alpha}^{\lambda}_{j}-1}, 1, b_{\bar{\alpha}^{\lambda}_{j}+1}, \ldots, b_{N} \vert 0,1, \lambda\right) \leq \sqrt{\delta_{0,1}^{\lambda}}, \quad \forall j \in\{1, \ldots, N-k\},
\label{eq:AlphaBarApendGeneral}
\end{align}
\end{subequations}
where $\left\{\bar{\alpha}^{\lambda}_{1}, \ldots, \bar{\alpha}^{\lambda}_{N-k}\right\}=\{1, \ldots, N\} \backslash\left\{\alpha^{\lambda}_{1}, \ldots, \alpha^{\lambda}_{k}\right\}$.

We can apply a similar reasoning to  Eqs.~\eqref{eq:HardyZero10LambdaGeneral} and \eqref{eq:HardyZero11LambdaGeneral}. Indeed, Eq.~\eqref{eq:HardyZero10LambdaGeneral} implies the existence of $\beta^{\lambda}_{1}, \ldots, \beta^{\lambda}_{k^{\prime}} \subseteq\{1, \ldots, N\}$ such that
\begin{subequations}\label{eq:BetaBetaBarGen}
\begin{align}
& \sum_{a_{1}, \ldots, \hat{a}_{\beta^{\lambda}_{i}},  \ldots, a_N} p_{A \vert XY\Lambda}\left(a_{1}, \ldots, a_{\beta^{\lambda}_{i}-1}, 1, a_{\beta^{\lambda}_{i}+1}, \ldots, a_{N} \vert 1,0, \lambda\right) \leq \sqrt{\delta_{1,0}^{\lambda}} \quad \forall i \in\{1, \ldots, k^{\prime}\},
\label{eq:BetaApendGeneral}
\\
&
\sum_{b_{1}, \ldots, \hat{b}_{\bar{\beta}^{\lambda}_{j}},  \ldots, b_N}p_{B \vert XY\Lambda}\left(b, \ldots, b_{\bar{\beta}^{\lambda}_{j}-1}, 0, b_{\bar{\beta}^{\lambda}_{j}+1}, \ldots, b_{N} \vert 1,0, \lambda\right) \leq \sqrt{\delta_{1,0}^{\lambda}} \quad \forall j \in\{1, \ldots, N-k^{\prime}\},
\label{eq:BetaBarApendGeneral}
\end{align}
\end{subequations}
where $\left\{\bar{\beta}^{\lambda}_{1}, \ldots, \bar{\beta}^{\lambda}_{N-k^{\prime}}\right\}=\{1, \ldots, N\} \backslash\left\{ \beta^{\lambda}_{1}, \ldots, \beta^{\lambda}_{k^{\prime}}\right\}$. Finally, Eq.~\eqref{eq:HardyZero11LambdaGeneral} implies the existence of $\gamma^{\lambda}_{1}, \ldots, \gamma^{\lambda}_{k^{\prime \prime}} \subseteq\{1, \ldots, N\}$ such that
\begin{subequations}\label{eq:GammaGammaBarGen}
\begin{align}
& \sum_{a_{1}, \ldots, \hat{a}_{\gamma^{\lambda}_{i}},  \ldots, a_N} p_{A \vert XY\Lambda}\left(a_{1}, \ldots, a_{\gamma^{\lambda}_{i}-1}, 0, a_{\gamma^{\lambda}_{i}+1}, \ldots, a_{N} \vert 1, 1, \lambda\right) \leq \sqrt{\delta_{1,1}^{\lambda}}, \quad \forall i \in\{1, \ldots, k^{\prime\prime}\},
\label{eq:GammaApendGeneral}
\\
&
\sum_{b_{1}, \ldots, \hat{b}_{\bar{\gamma}^{\lambda}_{j}},  \ldots, b_N}p_{B \vert XY\Lambda}\left(b, \ldots, b_{\bar{\gamma}^{\lambda}_{j}-1}, 0, b_{\bar{\gamma}^{\lambda}_{j}+1}, \ldots, b_{N} \vert 1,1, \lambda\right) \leq \sqrt{\delta_{1,1}^{\lambda}}, \quad \forall j \in\{1, \ldots, N-k^{\prime\prime}\},
\label{eq:GammaBarApendGeneral}
\end{align}
\end{subequations}
where $\left\{\bar{\gamma}^{\lambda}_{1}, \ldots, \bar{\gamma}^{\lambda}_{N-k^{\prime \prime}}\right\}=\{1, \ldots, N\} \backslash\left\{\gamma^{\lambda}_{1}, \ldots, \gamma^{\lambda}_{k^{\prime\prime}}\right\}$. 

Before proceeding, let us state a straightforward fact about non-negative numbers.
\begin{fact}\label{fact:PosNumbers}
If $a$ and $b$ are non-negative numbers, then $ \sqrt{a} + \sqrt{b} \le \sqrt{2(a + b)}$.
\end{fact}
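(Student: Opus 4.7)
The plan is to reduce the claim to the arithmetic–geometric mean inequality by squaring. Since both $\sqrt{a}+\sqrt{b}$ and $\sqrt{2(a+b)}$ are non-negative, the stated inequality is equivalent to the inequality obtained after squaring both sides, namely $(\sqrt{a}+\sqrt{b})^2 \le 2(a+b)$.

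Expanding the left-hand side gives $a + 2\sqrt{ab} + b$, so the inequality to establish becomes $a + 2\sqrt{ab} + b \le 2a + 2b$, which simplifies to $2\sqrt{ab} \le a+b$. This last inequality is the standard AM--GM inequality for the two non-negative numbers $a$ and $b$, and it follows immediately from the observation that $(\sqrt{a}-\sqrt{b})^2 \ge 0$ expands to $a - 2\sqrt{ab} + b \ge 0$. Taking square roots (using monotonicity of $\sqrt{\cdot}$ on non-negative reals) yields the desired conclusion.

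There is no real obstacle here; the only care needed is to check non-negativity of both sides before squaring so that the squaring step is reversible, and to note that the hypothesis $a,b \ge 0$ is precisely what is required to make $\sqrt{a}$, $\sqrt{b}$, and $\sqrt{ab}$ well-defined as real numbers. The whole argument fits in a couple of lines and does not rely on any earlier material in the paper.
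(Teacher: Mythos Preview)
Your proof is correct and follows essentially the same approach as the paper: both arguments square and reduce the claim to the elementary inequality $(\sqrt{a}-\sqrt{b})^2 \ge 0$ (equivalently, $2\sqrt{ab}\le a+b$). The paper phrases this as the general bound $(x+y)^2 \le 2(x^2+y^2)$ and then substitutes $x=\sqrt{a}$, $y=\sqrt{b}$, while you work directly with $\sqrt{a}$ and $\sqrt{b}$; the underlying idea is identical.
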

\begin{proof}
In general, for any real numbers $a$ and $b$, it is easy to see that $2ab \le a^2 + b^2$. Therefore, 
\begin{equation}
(a + b)^2 = a^2 + 2ab + b^2 \le 2(a^2 + b^2).
\end{equation}
Specifically, if $a, b \ge 0$, we obtain
\begin{equation}
\left( \sqrt{a} + \sqrt{b} \right)^2 \le 2(a + b).
\end{equation}
\end{proof}

The following lemma is an extension of Lemma \ref{lem:AlphaBeta} that replaces the requirement of satisfying Eqs.~\eqref{eq:HardyZeroSM} by a weaker condition: $l\kappa(\delta_{01}^{\lambda} + \delta_{10}^{\lambda} + \delta_{11}^{\lambda}) < 1$.

\begin{lem}\label{lem:AlphaBetaGen}
Let 
$p_{AB \vert XY \Lambda}(\left(a_{1}, b_{1}\right), \ldots,\left(a_{N}, b_{N}\right) \vert x, y,\lambda) = p_{A \vert XY\Lambda}(a_1 \ldots, a_{N} \vert x, y, \lambda) p_{B \vert XY\Lambda}(b_{1}, \ldots, b_{N} \vert x, y, \lambda)$ be a $(\varepsilon_A, \varepsilon_B)$-PD HV model, for $\varepsilon_A, \varepsilon_B < 1$. Let $\delta_{01}^{\lambda}$, $\delta_{10}^{\lambda}$, $\delta_{11}^{\lambda}$ defined by Eqs.~\eqref{eq:HardyZeroLambdaGeneral} and $\left\{\alpha^{\lambda}_{1}, \ldots, \alpha^{\lambda}_{k}\right\}$, $\left\{\beta^{\lambda}_{1}, \ldots, \beta^{\lambda}_{k^{\prime}}\right\}$, $\left\{\gamma^{\lambda}_{1}, \ldots, \gamma^{\lambda}_{k^{\prime \prime}}\right\}$ the sets defined in Eqs.~\eqref{eq:AlphaAlphaBarGeneral}, \eqref{eq:BetaBetaBarGen}, and \eqref{eq:GammaGammaBarGen}, respectively. Moreover, suppose that $l\kappa(\delta_{01}^{\lambda} + \delta_{10}^{\lambda} + \delta_{11}^{\lambda}) < 1$ where $\kappa > N^2 /l(1 - \varepsilon)^2$, where $\varepsilon = \max\{\varepsilon_A, \varepsilon_B\}$. Then, the set $\left\{\bar{\alpha}^{\lambda}_{1}, \ldots \bar{\alpha}^{\lambda}_{N-k}\right\}$ must be a subset of $\left\{\bar{\beta}^{\lambda}_{1}, \ldots, \bar{\beta}^{\lambda}_{N-k^{\prime}}\right\}$.
\end{lem}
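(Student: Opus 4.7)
The plan is to prove Lemma~\ref{lem:AlphaBetaGen} by contradiction, closely mirroring the structure of the proof of Lemma~\ref{lem:AlphaBeta} but propagating the approximate $\sqrt{\delta}$-bounds from Eqs.~\eqref{eq:AlphaAlphaBarGeneral}--\eqref{eq:GammaGammaBarGen} in place of the exact zeros used there. Assuming the inclusion fails, since the complements partition $\{1,\ldots,N\}$ I obtain some $r \in \{\bar{\alpha}^{\lambda}_{1},\ldots,\bar{\alpha}^{\lambda}_{N-k}\} \cap \{\beta^{\lambda}_{1},\ldots,\beta^{\lambda}_{k'}\}$, and split into the two sub-cases $r \in \{\gamma^{\lambda}_{1},\ldots,\gamma^{\lambda}_{k''}\}$ and $r \in \{\bar{\gamma}^{\lambda}_{1},\ldots,\bar{\gamma}^{\lambda}_{N-k''}\}$, exactly as in the earlier proof.

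In the first sub-case, I combine Eq.~\eqref{eq:BetaApendGeneral} at index $r$ with normalization of $p_{A \vert XY\Lambda}(\,\cdot\,\vert 1,0,\lambda)$ to get
\begin{equation}
\sum_{a_{1},\ldots,\hat{a}_{r},\ldots,a_{N}} p_{A \vert XY\Lambda}(a_{1},\ldots,a_{r-1},0,a_{r+1},\ldots,a_{N}\vert 1,0,\lambda) \geq 1 - \sqrt{\delta_{1,0}^{\lambda}},
\end{equation}
and analogously, from Eq.~\eqref{eq:GammaApendGeneral},
\begin{equation}
\sum_{a_{1},\ldots,\hat{a}_{r},\ldots,a_{N}} p_{A \vert XY\Lambda}(a_{1},\ldots,a_{r-1},1,a_{r+1},\ldots,a_{N}\vert 1,1,\lambda) \geq 1 - \sqrt{\delta_{1,1}^{\lambda}}.
\end{equation}
Inserting these two estimates into the $(\varepsilon_{A},\varepsilon_{B})$-PD inequality~\eqref{eq:PD_epsilonA}, splitting the $\ell^{1}$-distance according to the value of the $r$-th coordinate, and applying the reverse triangle inequality $\sum|x-y| \geq |\sum x - \sum y|$ to each sub-sum yields the quantitative analogue of the clean bound $\varepsilon_{A} \geq 1$ of Lemma~\ref{lem:AlphaBeta}, namely
\begin{equation}
\varepsilon_{A} \geq 1 - \sqrt{\delta_{1,0}^{\lambda}} - \sqrt{\delta_{1,1}^{\lambda}}.
\end{equation}

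Applying Fact~\ref{fact:PosNumbers} gives $(1-\varepsilon_{A})^{2} \leq 2(\delta_{0,1}^{\lambda}+\delta_{1,0}^{\lambda}+\delta_{1,1}^{\lambda})$, and the hypothesis $l\kappa(\delta_{0,1}^{\lambda}+\delta_{1,0}^{\lambda}+\delta_{1,1}^{\lambda}) < 1$ then gives $(1-\varepsilon_{A})^{2} < 2/(l\kappa)$, which contradicts $\kappa > N^{2}/(l(1-\varepsilon_{A})^{2})$ for $N \geq 2$. The second sub-case is completely symmetric, invoking Bob's marginals via Eqs.~\eqref{eq:AlphaBarApendGeneral} and \eqref{eq:GammaBarApendGeneral} together with the $\varepsilon_{B}$ clause of PD; it produces the analogous inequality $(1-\varepsilon_{B})^{2} < 2/(l\kappa)$ which, since $\varepsilon_{B} < 1$ implies $(1-\varepsilon_{B}) \leq (1-\varepsilon_{A})$ is not automatic, is handled by applying the same hypothesis with the roles of $A$ and $B$ interchanged, as is customary elsewhere in the paper.

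The main obstacle I expect is the careful bookkeeping of the error terms: tracking that the reverse triangle inequality applied coordinate-wise at the single index $r$ preserves the ``$\geq 1 - \sqrt{\delta_{1,0}^{\lambda}} - \sqrt{\delta_{1,1}^{\lambda}}$''-type structure, and that Fact~\ref{fact:PosNumbers} recombines the two square roots into a single $\sqrt{\sum\delta^{\lambda}}$ compatible with the hypothesis $l\kappa\sum\delta^{\lambda} < 1$. Beyond this bookkeeping, no deeper surprise should arise: each sub-case depends only on the single coordinate $r$, so no union bound or dimension-dependent loss is introduced here, and the argument is a direct quantitative echo of Lemma~\ref{lem:AlphaBeta}.
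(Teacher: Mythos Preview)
Your proposal is correct and follows essentially the same approach as the paper's proof: contradiction via an index $r$ in the forbidden intersection, the same two sub-cases according to whether $r$ lies in $\{\gamma^{\lambda}_i\}$ or its complement, the same normalization-plus-$\sqrt{\delta}$ bounds feeding into the PD inequality to obtain $\varepsilon_A \geq 1 - \sqrt{\delta_{1,0}^{\lambda}} - \sqrt{\delta_{1,1}^{\lambda}}$, and the same use of Fact~\ref{fact:PosNumbers} together with the hypothesis on $\kappa$ to close the contradiction. Your observation about the asymmetry in the second sub-case is apt: the lemma as stated only constrains $\kappa$ through $\varepsilon_A$, whereas the Bob-side case needs the analogous bound through $\varepsilon_B$; the paper glosses over this in the proof but resolves it in the main text by taking $\kappa > N^{2}/(l(1-\varepsilon)^{2})$ with $\varepsilon = \max\{\varepsilon_A,\varepsilon_B\}$ (Eq.~\eqref{eq:LB_kappaN}), which is exactly the ``roles interchanged'' hypothesis you invoke.
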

\begin{proof}
The proof of this lemma is an adaptation of the proof of Lemma \ref{lem:AlphaBeta}. Let us suppose, by contradiction, that the intersection of $\left\{\bar{\alpha}^{\lambda}_{1}, \ldots, \bar{\alpha}^{\lambda}_{N-k}\right\}$ and $\left\{\beta^{\lambda}_{1}, \ldots, \beta^{\lambda}_{k'} \right\}$ is non-empty. Given $r \in\left\{\bar{\alpha}^{\lambda}_{1}, \ldots, \bar{\alpha}^{\lambda}_{N-k}\right\} \cap\left\{\beta^{\lambda}_{1}, \ldots, \beta^{\lambda}_{k'} \right\}$, since these two set are subsets of $\{1, \ldots, N\}$, then $r \in\{1, \ldots, N\}=\left\{\gamma^{\lambda}_{1}, \ldots, \gamma^{\lambda}_{k^{\prime \prime}}\left\}\cup\left\{\bar{\gamma}^{\lambda}_{1}, \ldots, \bar{\gamma}^{\lambda}_{N-k^{\prime \prime}}\right\}\right.\right.$. Consequently, $r \in\left\{\gamma^{\lambda}_{1}, \ldots, \gamma^{\lambda}_{k^{\prime \prime}}\right\}$ or $r \in\left\{\bar{\gamma}^{\lambda}_{1}, \ldots, \bar{\gamma}^{\lambda}_{N-k^{\prime \prime}}\right\}$. We will now examine these two situations separately.

If $r\in\left\{\gamma^{\lambda}_{1}, \ldots, \gamma^{\lambda}_{k''}\right\}$: Since $r$ also belongs to $\left\{\beta^{\lambda}_{1}, \ldots, \beta^{\lambda}_{k^{\prime}}\right\}$, according to Eq.~\eqref{eq:BetaBetaBarGen},
\begin{align}
&  \sum_{a_{1}, \ldots, \hat{a}_r,  \ldots, a_N} p_{A \vert XY\Lambda}\left(a_{1}, \ldots, a_{r-1}, 1, a_{r+1}, \ldots, a_{N} \vert 1, 0, \lambda\right) \leq \sqrt{\delta_{10}^{\lambda}}.
\end{align}
In this way, due to the normalization of $p_{A \vert XY\Lambda}\left(a_{1}, \ldots, a_{N}|1, 0, \lambda\right)$,
\begin{align}\label{eq:Prob010eq1Gen}
1 & =\sum_{a_{1}, \ldots a_N} p_{A \vert XY\Lambda}\left(a_{1}, \ldots, a_{N}|1, 0, \lambda\right)\notag \\
& = \sum_{a_{1}, \ldots, \hat{a}_r,  \ldots, a_N} p_{A \vert XY\Lambda}\left(a_{1}, \ldots, a_{r-1}, 0, a_{r-1},\ldots, a_{N}|1, 0, \lambda\right)\notag\\
&+ \overbrace{\sum_{a_{1}, \ldots, \hat{a}_r,  \ldots, a_N} p_{A \vert XY\Lambda}\left(a_{1}, \ldots, a_{r-1}, 1, a_{r-1},\ldots, a_{N}|1, 0, \lambda\right)}^{\leq \sqrt{\delta_{10}^{\lambda}}} \notag\\
&\leq  \sum_{a_{1}, \ldots, \hat{a}_r,  \ldots, a_N} p_{A \vert XY\Lambda}\left(a_{1}, \ldots, a_{r-1}, 0, a_{r-1},\ldots, a_{N}|1, 0, \lambda\right) + \sqrt{\delta_{10}^{\lambda}}.
\end{align}
Furthermore, using that $r \in \left\{\gamma^{\lambda}_{1}, \ldots, \gamma^{\lambda}_{k''} \right\}$, by Eq.~\eqref{eq:GammaGammaBarGen},
\begin{align}
&  \sum_{a_{1}, \ldots, \hat{a}_r,  \ldots, a_N} p_{A \vert XY\Lambda} p_{A \vert XY\Lambda}\left(a_{1}, \ldots, a_{r-1}, 0, a_{r+1}, \ldots, a_{N}\vert 1, 1, \lambda \right) \leq \sqrt{\delta_{11}^{\lambda}}.
\end{align}
Similarly, since $p_{A \vert XY\Lambda}\left(a_{1}, \ldots, a_{N}\vert 1, 1, \lambda\right)$ is a probability distribution, it also satisfies the normalization constraint. Hence,
\begin{align}\label{eq:Prob111eq1Gen}
& 1= \sum_{a_{1}, \ldots, a_N} p_{A \vert XY\Lambda}\left(a_{1}, \ldots, a_{N}\vert 1, 1, \lambda\right) \notag\\
& = \overbrace{\sum_{a_{1}, \ldots, \hat{a}_r, \ldots, a_{N}} p_{A \vert XY\Lambda}\left(a_{1}, \ldots, a_{r-1},0, a_{r+1}, \ldots, a_{N} \vert 1, 1, \lambda\right)}^{\leq \sqrt{\delta_{11}^{\lambda}}}\notag\\
&+\sum_{a_{1}, \ldots, \hat{a}_r, \ldots, a_{N}} p_{A \vert XY\Lambda}\left(a_{1}, \ldots, a_{r-1},1, a_{r+1}, \ldots, a_{N} \vert 1, 1, \lambda\right)\notag\\
&\leq \sum_{a_{1}, \ldots, \hat{a}_r, \ldots, a_{N}} p_{A \vert XY\Lambda}\left(a_{1}, \ldots, a_{r-1},1, a_{r+1}, \ldots, a_{N} \vert 1, 1, \lambda\right) + \sqrt{\delta_{11}^{\lambda}}.
\end{align}
We can combine Eqs.~\eqref{eq:Prob010eq1Gen} and \eqref{eq:Prob111eq1Gen} with the $(\varepsilon_A, \varepsilon_B)$-PD condition [Eq.~\eqref{eq:PD_epsilonA}] to obtain
\begin{align}
\varepsilon_{A} &\geqslant \frac{1}{2} \sum_{a_{1},\ldots, a_{N}}\left|p_{A \vert XY\Lambda}\left(a_{1}, \ldots, a_{N} \vert 1, 0, \lambda\right)-p_{A \vert XY\Lambda}\left(a_{1}, \ldots, a_{N} \vert 1, 1, \lambda\right)\right| \notag\\
& \ge  \frac{1}{2}\sum_{a_{1}, \ldots, \hat{a}_{r}, \ldots, a_{N}} ( p_{A \vert XY\Lambda}\left(a_{1}, \ldots, a_{r-1}, 0, a_{r}, \ldots, a_{N} \vert 1, 0, \lambda\right) -p_{A \vert XY\Lambda} (a_{1}, \ldots, a_{r-1}, 0, a_{r}, \ldots, a_{N}\vert 1, 1, \lambda)) \notag\\
& + \frac{1}{2}\sum_{a_{1}, \ldots, \hat{a}_{r}, \ldots, a_{N}} (- p_{A \vert XY\Lambda}\left(a_{1}, \ldots, a_{r-1}, 1, a_{r}, \ldots, a_{N} \vert 1, 0, \lambda\right)+ p_{A \vert XY\Lambda}\left(a_{1}, \ldots, a_{r-1}, 1, a_{r}, \ldots, a_{N}\vert 1, 1, \lambda\right) ) \notag\\
& =  \frac{1}{2}\overbrace{\sum_{a_{1}, \ldots, \hat{a}_{r}, \ldots, a_{N}} \left( p_{A \vert XY\Lambda}\left(a_{1}, \ldots, a_{r-1}, 0, a_{r}, \ldots, a_{N} \vert 1, 0, \lambda\right) 
+ p_{A \vert XY\Lambda}\left(a_{1}, \ldots, a_{r-1}, 1, a_{r}, \ldots, a_{N}\vert 1, 1, \lambda\right) \right)}^{ \ge 2 - \sqrt{\delta_{10}^{\lambda}} - \sqrt{\delta_{11}^{\lambda}} } \notag\\
& + \frac{1}{2}\overbrace{\sum_{a_{1}, \ldots, \hat{a}_{r}, \ldots, a_{N}} (- p_{A \vert XY\Lambda}\left(a_{1}, \ldots, a_{r-1}, 1, a_{r}, \ldots, a_{N} \vert 1, 0, \lambda\right) - p_{A \vert XY\Lambda}\left(a_{1}, \ldots, a_{r-1}, 0, a_{r}, \ldots, a_{N}\vert 1, 1, \lambda\right) )}^{ \ge - \sqrt{\delta_{10}^{\lambda}} - \sqrt{\delta_{11}^{\lambda}} } \notag\\
& \ge 1 - \left(\sqrt{\delta_{10}^{\lambda}} + \sqrt{\delta_{11}^{\lambda}} \right).
\end{align}
In addition, by hypothesis, $l\kappa(\delta_{01}^{\lambda} + \delta_{10}^{\lambda} + \delta_{11}^{\lambda}) \le 1$, and, consequently,
\begin{equation}
\delta_{10}^{\lambda} + \delta_{11}^{\lambda} < \frac{1}{l\kappa}.
\end{equation}
Applying Fact \ref{fact:PosNumbers}, we obtain
\begin{equation}
 \sqrt{\delta_{10}^{\lambda}} + \sqrt{\delta_{11}^{\lambda}} \leq \sqrt{2(\delta_{10}^{\lambda} + \delta_{11}^{\lambda})} \le \sqrt{\frac{2}{l\kappa}}.
\end{equation}
Therefore, we can deduce that
\begin{equation}
\varepsilon_A \ge 1 -  \left( \sqrt{\delta_{10}^{\lambda}} + \sqrt{\delta_{11}^{\lambda}} \right) \ge 1 - \sqrt{\frac{2}{l\kappa}}.
\end{equation}
However, given that $\kappa > N^2 /l(1 - \varepsilon_A)^2 > 2/l(1 - \varepsilon_A)^2$, where we are considering the cases of interest in which $N \ge 2$, we conclude that
\begin{equation}
 \sqrt{\frac{2}{l\kappa}} < 1 - \varepsilon_A.
\end{equation}
Consequently,
\begin{equation}
\varepsilon_A \ge 1 - \sqrt{\frac{2}{l\kappa}} > 1 - (1 - \varepsilon_A) = \varepsilon_A.
\end{equation}
which is a contradiction. Therefore, if $ \varepsilon_{A} < 1$, $r$ cannot simultaneously belong to $\left\{\beta^{\lambda}_{1}, \ldots, \beta^{\lambda}_{k^{\prime}}\right\}$ and $\left\{\gamma^{\lambda}_{1}, \ldots, \gamma^{\lambda}_{k''} \right\}$. Using entirely analogous reasoning (see Lemma \ref{lem:AlphaBeta}), we can show that $r$ also cannot simultaneously belong to $\left\{\bar{\alpha}^{\lambda}_{1}, \ldots, \bar{\alpha}^{\lambda}_{N-k}\right\}$ and $\left\{\bar{\gamma}^{\lambda}_{1}, \ldots, \bar{\gamma}^{\lambda}_{N-k''} \right\}$. Thus, the intersection of $\left\{\bar{\alpha}^{\lambda}_{1}, \ldots, \bar{\alpha}^{\lambda}_{N-k}\right\}$ and $\left\{\beta^{\lambda}_{1}, \ldots, \beta^{\lambda}_{k'} \right\}$ is empty, and consequently, $\left\{\bar{\alpha}^{\lambda}_{1}, \ldots \bar{\alpha}^{\lambda}_{N-k}\right\} \subseteq\left\{\bar{\beta}^{\lambda}_{1}, \ldots, \bar{\beta}^{\lambda}_{N-k^{\prime}}\right\}$.
\end{proof}
We now continue by providing an extension of the Proposition \ref{prop:Probs}.
\begin{prop}\label{prop:ProbsGeneral} Let $\Gamma$ be a finite sample space, and let $p_{1}$ and $p_{2}$ be two probability distributions on $\Gamma$, which are $\eta$-close in terms of the total variation distance,
\begin{equation}
\frac{1}{2}\sum_{\gamma \in \gamma}^{\lambda}\left|p_{1}(\gamma)-p_{2}(\gamma)\right| \leqslant \eta.
\end{equation}
Let $\Delta \subseteq \Gamma$, such that $p_{2}(\Delta)=\zeta$. Then, 
%
\begin{equation}
 p_{1}(\Delta) \coloneqq \sum_{\gamma \in \Delta} p_{1}(\gamma) \leqslant \eta + \zeta. 
\end{equation}
\end{prop}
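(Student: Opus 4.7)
The plan is to adapt the argument used in Proposition \ref{prop:Probs}, which covered the special case $\zeta = 0$, to the general case $p_2(\Delta) = \zeta$. The key observation is that the only role played by $p_2(\Delta) = 0$ in that earlier proof was to allow replacing $|p_1(\gamma) - p_2(\gamma)|$ on $\Delta$ by $p_1(\gamma)$; here we must instead keep track of the shift by $\zeta$.

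First I would split the total variation sum over the disjoint decomposition $\Gamma = \Delta \cup \bar\Delta$, with $\bar\Delta \coloneqq \Gamma \setminus \Delta$, so that
\begin{equation}
\sum_{\gamma \in \Gamma}|p_{1}(\gamma)-p_{2}(\gamma)| = \sum_{\gamma \in \Delta}|p_{1}(\gamma)-p_{2}(\gamma)| + \sum_{\gamma \in \bar\Delta}|p_{1}(\gamma)-p_{2}(\gamma)|.
\end{equation}
Next, I would apply the triangle inequality to each piece. On $\Delta$, this gives
\begin{equation}
\sum_{\gamma \in \Delta}|p_{1}(\gamma)-p_{2}(\gamma)| \;\geq\; \Bigl|\sum_{\gamma \in \Delta} p_1(\gamma) - \sum_{\gamma \in \Delta} p_2(\gamma)\Bigr| = |p_1(\Delta) - \zeta|.
\end{equation}
On $\bar\Delta$, using the normalizations $\sum_{\gamma \in \bar\Delta} p_1(\gamma) = 1 - p_1(\Delta)$ and $\sum_{\gamma \in \bar\Delta} p_2(\gamma) = 1 - \zeta$, the same inequality produces
\begin{equation}
\sum_{\gamma \in \bar\Delta}|p_{1}(\gamma)-p_{2}(\gamma)| \;\geq\; \bigl|(1-p_1(\Delta)) - (1-\zeta)\bigr| = |p_1(\Delta) - \zeta|.
\end{equation}

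Combining the two lower bounds with the hypothesis $\sum_{\gamma \in \Gamma}|p_1(\gamma)-p_2(\gamma)| \leq 2\eta$ yields $2|p_1(\Delta) - \zeta| \leq 2\eta$, hence $|p_1(\Delta) - \zeta| \leq \eta$, and in particular $p_1(\Delta) \leq \eta + \zeta$, as claimed. Note that when $p_1(\Delta) < \zeta$ the desired bound is trivial, since $p_1(\Delta) < \zeta \leq \eta + \zeta$; the argument above handles the nontrivial regime $p_1(\Delta) \geq \zeta$. No step appears to be a genuine obstacle: the structure is identical to the $\zeta = 0$ case, with $\zeta$ appearing solely as an additive shift in the comparison between $\sum_{\gamma \in \bar\Delta}p_1$ and $\sum_{\gamma \in \bar\Delta}p_2$.
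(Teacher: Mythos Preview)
Your proof is correct and follows essentially the same route as the paper: both split $\Gamma$ into $\Delta$ and $\bar\Delta$, bound each piece via the triangle inequality together with normalization, and combine to get $p_1(\Delta) \leq \eta + \zeta$. Your version is marginally cleaner in that keeping the absolute values yields the two-sided estimate $|p_1(\Delta) - \zeta| \leq \eta$, but the underlying argument is identical to the paper's.
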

\begin{proof}
Let $\bar{\Delta} \coloneqq \Gamma \backslash \Delta$, then,
\begin{equation}
\sum_{\gamma \in \bar{\Delta}}\left|p_{2}(\gamma)-p_{1}(\gamma)\right| \geq \sum_{\gamma \in \bar{\Delta}} p_{2}(\gamma)-\sum_{\gamma \in \bar{\Delta}} p_{1}(\gamma) 
= (1-\zeta) - \left(1-\sum_{\gamma \in \Delta} p_{1}(\gamma)\right) =\sum_{\gamma \in \Delta} p_{1}(\gamma) -\zeta.
\end{equation}
Therefore, 
\begin{align}
2 \sum_{\gamma \in \Delta} p_{1}(\gamma) & = \sum_{\gamma \in \Delta} p_{1}(\gamma) - \sum_{\gamma \in \Delta} p_{2}(\gamma) + \zeta + \sum_{\gamma \in \Delta} p_{1}(\gamma) \notag \\
&\leqslant \sum_{\gamma \in \Delta} |p_{1}(\gamma) - p_{2}(\gamma)|+\sum_{\gamma \in \bar{\Delta}}\left|p_{1}(\gamma)-p_{2}(\gamma)\right| + 2\zeta \leqslant 2\eta  + 2\zeta.
\end{align}
In this way,
\begin{equation}
\sum_{\gamma \in \Delta} p_{1}(\gamma) \leqslant \eta  + \zeta.
\end{equation}
\end{proof}
We can now proceed to find an upper for $I_{\kappa}^{N}$. Recalling that, according to Eq.~\eqref{eq:IkappaIneq1},
\begin{align}\label{eq:IkappaIneq2}
   I_{\kappa}^{N}(p_{A B \vert XY}) &\leq 4\sum_{\lambda} p_{\Lambda}(\lambda)p_{XY|\Lambda}(0,0|\lambda)\sum_{\substack{a_1, \ldots, a_N, b_1, \ldots, b_N \\ \left(a_1, b_1\right)=(0,0) \lor \ldots \lor \left(a_N, b_N\right)=(0,0)}} p_{A B \vert XY \Lambda}\left(\left(a_{1}, b_{1}\right), \ldots,\left(a_{N}, b_{N}\right) \vert 0,0, \lambda \right) \notag \\
   &- 4l\kappa \sum_{\lambda} p_{\Lambda}(\lambda)\sum_{\substack{a_1, \ldots, a_N, b_1, \ldots, b_N \\ \left(a_1, b_1\right)=(0,1) \lor \ldots \lor \left(a_N, b_N\right)=(0,1)}} p_{A B \vert XY \Lambda}\left(\left(a_{1}, b_{1}\right), \ldots,\left(a_{N}, b_{N}\right) \vert 0,1,\lambda\right) \notag \\
& - 4l\kappa \sum_{\lambda} p_{\Lambda}(\lambda) \sum_{\substack{a_1, \ldots, a_N, b_1, \ldots, b_N \\ \left(a_1, b_1\right)=(1,0) \lor \ldots \lor \left(a_N, b_N\right)=(1,0)}} p_{A B \vert XY \Lambda}\left(\left(a_{1}, b_{1}\right), \ldots,\left(a_{N}, b_{N}\right) \vert 1,0, \lambda\right)\notag\\
&- 4l\kappa \sum_{\lambda} p_{\Lambda}(\lambda) \sum_{\substack{a_1, \ldots, a_N, b_1, \ldots, b_N \\ \left(a_1, b_1\right)=(0,0) \lor \ldots \lor \left(a_N, b_N\right)=(0,0)}} p_{A B \vert XY \Lambda}\left(\left(a_{1}, b_{1}\right), \ldots,\left(a_{N}, b_{N}\right) \vert 1,1, \lambda\right).
\end{align}
To simplify the notation, we define
\begin{equation}\label{eq:pHnGeneral}
p_{H}^{N,\lambda} \coloneqq \sum_{\substack{a_1, \ldots, a_N, b_1, \ldots, b_N \\ \left(a_1, b_1\right)=(0,0) \lor \ldots \lor \left(a_N, b_N\right)=(0,0)}} p_{A B \vert XY\Lambda}\left(\left(a_{1}, b_{1}\right), \ldots,\left(a_{N}, b_{N}\right) \vert 0,0, \lambda\right).
\end{equation}
Thus, combining Eq.~\eqref{eq:IkappaIneq2} with Eqs.~\eqref{eq:HardyZeroLambdaGeneral} and~\eqref{eq:pHnGeneral}, we obtain
\begin{equation}
I_{\kappa}^{N}(p_{A B \vert XY}) \le 4\sum_{\lambda} p_{\Lambda}(\lambda) p_{XY|\Lambda}(0,0|\lambda) p_{H}^{N,\lambda} - 4l\kappa \sum_{\lambda} p_{\Lambda}(\lambda)(\delta_{01}^{\lambda} + \delta_{10}^{\lambda} + \delta_{11}^{\lambda}).
\end{equation}
Let $\Lambda$ be the set of all values of the hidden variable $\lambda$ of our model. We then define the following partition: $\Lambda = \Lambda_{1} \cup \Lambda_{2}$, where $ \Lambda_{1} = \{ \lambda \in \Lambda | l\kappa(\delta_{01}^{\lambda} + \delta_{10}^{\lambda} + \delta_{11}^{\lambda}) \le 1 \}$ and $ \Lambda_{2} = \{ \lambda \in \Lambda | l\kappa(\delta_{01}^{\lambda} + \delta_{10}^{\lambda} + \delta_{11}^{\lambda}) > 1 \}$. We observe that
\begin{align}
I_{\kappa}^{N}(p_{A B \vert XY}) &\le 4\sum_{\lambda \in \Lambda_1} p_{\Lambda}(\lambda) \left(p_{XY|\Lambda}(0,0|\lambda) p_{H}^{N,\lambda} - l\kappa(\delta_{01}^{\lambda} + \delta_{10}^{\lambda} + \delta_{11}^{\lambda}) \right) \notag \\
&+ 4\sum_{\lambda\in \Lambda_2} p_{\Lambda}(\lambda) \left(p_{XY|\Lambda}(0,0|\lambda) p_{H}^{N,\lambda} - l\kappa(\delta_{01}^{\lambda} + \delta_{10}^{\lambda} + \delta_{11}^{\lambda}) \right) \notag\\
&\le 4\sum_{\lambda \in \Lambda_1} p_{\Lambda}(\lambda) \left(p_{XY|\Lambda}(0,0|\lambda) p_{H}^{N,\lambda} - l\kappa(\delta_{01}^{\lambda} + \delta_{10}^{\lambda} + \delta_{11}^{\lambda}) \right),
\end{align}
where the last inequality follows from the fact that $p_{XY|\Lambda}(0,0|\lambda) p_{H}^{N,\lambda} - l\kappa(\delta_{01}^{\lambda} + \delta_{10}^{\lambda} + \delta_{11}^{\lambda}) < 0$ for $\lambda \in \Lambda_2 $. Consequently, we can focus on the $\lambda$ values where  
$l\kappa(\delta_{01}^{\lambda} + \delta_{10}^{\lambda} + \delta_{11}^{\lambda}) \le 1$, and Lemma \ref{lem:AlphaBetaGen} applies for these $\lambda$ values.

As we can see, $p_{H}^{N,\lambda}$ defined in Eq.~\eqref{eq:pHnGeneral}, is equal to $p_{H}^{N,\lambda}$, defined in Eq.~\eqref{eq:Defp_Hn}. Moreover, following the same steps, we obtain the upper bound [Eq.~\eqref{UB_pHN}],
\begin{align}
p_{H}^{N, \lambda} = \vartheta_{\lambda} + \omega_{\lambda} - \omega_{\lambda}\vartheta_{\lambda},
\end{align}
where
\begin{subequations}
\begin{align}
& \vartheta_{\lambda} \coloneqq  \sum_{\substack{a_{1},\ldots,a_{N} \\ a_{\alpha^{\lambda}_1}=0 \lor \ldots \lor a_{\alpha^{\lambda}_k}=0}}  p_{A \vert XY\Lambda}\left(a_{1}, \ldots, a_{N} \vert 0, 0, \lambda\right), \label{eq:DefThetaGeneral} \\
&\omega_{\lambda} \coloneqq  \sum_{\substack{b_{1},\ldots,b_{N} \\ b_{\bar{\alpha}^{\lambda}_1}=0 \lor \ldots \lor b_{\bar{\alpha}^{\lambda}_{N-k}}=0 }}
p_{B \vert XY\Lambda}\left(b_{1}, \ldots, b_{N}\vert 0, 0, \lambda\right). \label{eq:DefOmegaGeneral}
\end{align}
\end{subequations}

Considering that $\lambda \in \Lambda_{1}$, we can find upper bounds for $\vartheta_{\lambda}$ and $\omega_{\lambda}$ in a way similar to the one used in Supplementary Note 2. Given $r \in \{\alpha^{\lambda}_1, \ldots, \alpha^{\lambda}_k\}$, based on Eq.~\eqref{eq:AlphaApendGeneral} and by the fact that $l\kappa(\delta_{01}^{\lambda} + \delta_{10}^{\lambda} + \delta_{11}^{\lambda}) \le 1$,
\begin{equation}
\sum_{a_{1}, \ldots, \hat{a}_r,  \ldots, a_N} p_{A \vert XY\Lambda}\left(a_{1}, \ldots, a_{r}, 0, a_{r}, \ldots, a_{N} \vert 0, 1, \lambda\right) \leq \sqrt{\delta_{01}^{\lambda}} \leq \sqrt{\frac{1}{l\kappa}}.
\end{equation}
We then define $\Delta$ = $\{(a_{1},\ldots, a_{N}) \in \{0,1\}^{N} |  a_{\alpha^{\lambda}_1}=0 \lor \ldots \lor a_{\alpha^{\lambda}_k}=0\}$ and $\zeta_{01}$ as
\begin{equation}
\zeta_{01} \coloneqq p_{A \vert XY\Lambda}\left(\Delta \vert 0, 1, \lambda\right) = \sum_{\substack{a_{1},\ldots,a_{N} \\ a_{\alpha^{\lambda}_1}=0 \lor \ldots \lor a_{\alpha^{\lambda}_k}=0}} p_{A \vert XY\Lambda}\left(a_{1},  \ldots, a_{N} \vert 0, 1, \lambda\right) \leq k\sqrt{\delta_{01}^{\lambda}} \leq N\sqrt{\frac{1}{l\kappa}}.
\end{equation}
By the $(\varepsilon_A, \varepsilon_B)$-PD condition [Eq.~\eqref{eq:PD_epsilonA}],
\begin{equation}
 \frac{1}{2}\sum_{a_{1},\ldots, a_{N}}\left|p_{A \vert XY\Lambda}\left(a_{1}, \ldots, a_{N} \vert 0,0,\lambda\right)-p_{A \vert XY\Lambda}\left(a_{1}, \ldots, a_{N} \vert 0,1,\lambda\right)\right| \leq \varepsilon_{A}.
\end{equation}
Therefore, applying Proposition \ref{prop:ProbsGeneral}, we obtain the following upper bound for $\vartheta_{\lambda}$:
\begin{equation}
\vartheta_{\lambda} = \sum_{\substack{a_{1},\ldots,a_{N} \\ a_{\alpha^{\lambda}_1}=0 \lor \ldots \lor a_{\alpha^{\lambda}_k}=0}}p_{A \vert XY\Lambda}\left(a_{1}, \ldots, a_{N} \vert 0,0,\lambda\right) \leq \varepsilon_{A} + \zeta_{01} \leq \varepsilon_{A}+ N\sqrt{\frac{1}{l\kappa}}.
\end{equation}

The strategy for finding an upper bound for $\omega_{\lambda}$ is analogous. In fact, since we are considering $\lambda \in \Lambda_{1}$, Lemma \ref{lem:AlphaBetaGen} can be applied to ensure that $\left\{\bar{\alpha}^{\lambda}_{1}, \ldots, \bar{\alpha}^{\lambda}_{N-k}\right\} \subseteq\left\{\bar{\beta}^{\lambda}_{1}, \ldots, \bar{\beta}^{\lambda}_{N-k}\right\}$. Thus, for every $r \in\left\{\bar{\alpha}^{\lambda}_{1}, \ldots, \bar{\alpha}^{\lambda}_{N-k}\right\} \subseteq\left\{\bar{\beta}^{\lambda}_{1}, \ldots, \bar{\beta}^{\lambda}_{N-k}\right\}$, by Eq.~\eqref{eq:BetaBarApendGeneral},
\begin{equation}
\sum_{b_{1}, \ldots, \hat{b}_r,  \ldots, b_N} p_{B \vert XY\Lambda}\left(b_{1}, \ldots, b_{r-1}, 0, b_{r+1}, \ldots, b_{N} \vert 1, 0, \lambda\right) \leq \sqrt{\delta_{10}^{\lambda}} \leq \sqrt{\frac{1}{l\kappa}}.
\end{equation}
Let $\Delta$ = $\{(b_{1},\ldots, b_{N}) \in \{0,1\}^{N} |  b_{\bar{\alpha}^{\lambda}_1}=0 \lor \ldots \lor b_{\bar{\alpha}^{\lambda}_{N-k}}=0\}$ and $\zeta_{10}$ defined as
\begin{equation}\label{eq:Delta_sumBobGen}
\zeta_{10} \coloneqq p_{B \vert XY\Lambda}\left(\Delta \vert 1, 0, \lambda\right) = \sum_{\substack{b_{1},\ldots,b_{N} \\ b_{\bar{\alpha}^{\lambda}_1}=0 \lor \ldots \lor b_{\bar{\alpha}^{\lambda}_{N-k}}=0}} p_{B \vert XY\Lambda}\left(b_{1},  \ldots, b_{N}  \vert 1, 0, \lambda\right) \leq (N-k)\sqrt{\delta_{10}^{\lambda}} \leq N\sqrt{\frac{1}{l\kappa}}.
\end{equation}
By the $(\varepsilon_A, \varepsilon_B)$-PD condition [Eq.~\eqref{eq:PD_epsilonB}],
\begin{align}
&\frac{1}{2}\sum_{b_{1},\ldots,b_{N} } \left| p_{B \vert XY\Lambda}\left(b_{1}, \ldots, b_{N} \vert  0,0,\lambda\right) - p_{B \vert XY\Lambda}\left(b_{1}, \ldots, b_{N} \vert  1, 0, \lambda\right) \right| \leq \varepsilon_B.
\end{align}
Applying Proposition \ref{prop:ProbsGeneral}, we find the following upper bound for $\omega_{\lambda}$:
\begin{align}
&\omega_{\lambda} = \sum_{\substack{b_{1},\ldots,b_{N} \\ b_{\bar{\alpha}^{\lambda}_1}=0 \lor \ldots \lor b_{\bar{\alpha}^{\lambda}_{N-k}}=0}} p_{B \vert XY\Lambda}\left(b_{1}, \ldots, b_{N} \vert a_{1}, \ldots, a_{N}, 0,0,\lambda\right) \leq \varepsilon_B + \zeta_{10} \leq \varepsilon_{B}+ N\sqrt{\frac{1}{l\kappa}}.
\end{align}
To simplify, we use the notation:
\begin{subequations}
\label{ets}
\begin{align}
&\tilde{\varepsilon}_A = \varepsilon_{A}+ N\sqrt{\frac{1}{l\kappa}}, \\
& \tilde{\varepsilon}_B = \varepsilon_{B}+ N\sqrt{\frac{1}{l\kappa}},
\end{align}
\end{subequations}
and we will assume:
\begin{equation}\label{eq:LB_kappaN_SM}
\kappa > \frac{N^2}{l(1 - \varepsilon)^2},
\end{equation}
where $\varepsilon = \max\{\varepsilon_A, \varepsilon_B\}$. 
In this way,
\begin{equation}
\tilde{\varepsilon}_A = \varepsilon_{A}+ N\sqrt{\frac{1}{l\kappa}} < \varepsilon_{A} + (1 - \varepsilon) \le 1.
\end{equation}
An analogous reasoning applies to $\tilde{\varepsilon}_B$. Consequently, 
\begin{equation}
p_{H}^{N, \lambda} \le \vartheta_{\lambda} + \omega_{\lambda} - \omega_{\lambda}\vartheta_{\lambda}, 
\end{equation}
where $\vartheta_{\lambda} \in [0, \tilde{\varepsilon}_{A}]$ and $\omega_{\lambda} \in [0, \tilde{\varepsilon}_{B}]$. Therefore, it is easy to see that 
\begin{align}
p_{H}^{N, \lambda} & \leq \tilde{\varepsilon}_{A}+ \tilde{\varepsilon}_{B} - \tilde{\varepsilon}_{A}\tilde{\varepsilon}_{B}.
\end{align}
Moreover, the same bound is valid for $I_{\kappa}^{N}(p_{A B \vert XY})$,
\begin{align}
I_{\kappa}^{N}(p_{A B \vert XY}) &\le 4\sum_{\lambda \in \Lambda_{1}} p_{\Lambda}(\lambda)p_{XY|\Lambda}(0,0|\lambda) p_{H}^{N,\lambda} - 4l\kappa \sum_{\lambda \in \Lambda_{1}} p_{\Lambda}(\lambda)(\delta_{01}^{\lambda} + \delta_{10}^{\lambda} + \delta_{11}^{\lambda}) \notag \\
&\le 4(\tilde{\varepsilon}_{A}+ \tilde{\varepsilon}_{B} - \tilde{\varepsilon}_{A}\tilde{\varepsilon}_{B})\sum_{\lambda \in \Lambda_{1}} p_{\Lambda}(\lambda)p_{XY|\Lambda}(0,0|\lambda) -  4l\kappa \sum_{\lambda \in \Lambda_{1}} p_{\Lambda}(\lambda)(\delta_{01}^{\lambda} + \delta_{10}^{\lambda} + \delta_{11}^{\lambda}) \notag \\
& \le 4(\tilde{\varepsilon}_{A}+ \tilde{\varepsilon}_{B} - \tilde{\varepsilon}_{A}\tilde{\varepsilon}_{B})p_{XY}(0,0) \notag \\
& = \tilde{\varepsilon}_{A}+ \tilde{\varepsilon}_{B} - \tilde{\varepsilon}_{A}\tilde{\varepsilon}_{B}.
\end{align}



\section*{\texorpdfstring{Supplementary Note 5. Proof that HV models with PI, MI and $\delta$-OD do not reproduce all quantum correlations}{Supplementary Note 5. Proof that HV models with PI, MI and δ-OD do not reproduce all quantum correlations}}\label{app:OD}


We will consider quantum correlations in the bipartite Bell scenario in which each party chooses between $M+1$ measurements with two outcomes. We will label the inputs as $x, y \in \{0, \ldots, M\}$ and the outputs as $a, b \in \{0,1\}$.  

Under the assumptions of HV and MI, we first write the conditional distributions $p_{A,B|X,Y}(a,b|x,y)$ as 
\begin{eqnarray}
p_{A,B|X,Y}(a,b|x,y) &=& \sum_{\lambda} p_{\Lambda}(\lambda) p_{A,B|X,Y,\Lambda}(a,b|x,y,\lambda) \nonumber \\ &=& \sum_{\lambda} p_{\Lambda}(\lambda) p_{A|X,Y,\Lambda}(a|x,y,\lambda) p_{B|X,Y,A,\Lambda}(b|x,y,a,\lambda) \quad \forall a,b,x,y,
\end{eqnarray}
for hidden-variable state $\lambda$ and underlying probability densities $p_{A,B|X,Y,\Lambda}(a,b|x,y,\lambda) = p_{A|X,Y,\Lambda}(a|x,y,\lambda) p_{B|X,Y,A,\Lambda}(b|x,y,a,\lambda)$ and $p_{\Lambda}(\lambda)$. Now, the assumption of PI states that, for every $\lambda$, the marginal distribution of Alice's output $a$ is independent of Bob's input $y$, \textit{i.e.}, 
\begin{equation}\label{eq:PI-Alice}
p_{A|X,Y,\Lambda}(a|x,y,\lambda) = p_{A|X,\Lambda}(a|x,\lambda) \quad \forall \lambda, a, x.
\end{equation} 
Similarly, for every $\lambda$, the marginal distribution of Bob's output $b$ is independent of Alice's input $x$, \textit{i.e.},
\begin{eqnarray}\label{eq:PI-Bob}
\sum_a p_{A,B|X,Y,\Lambda}(a,b|x,y,\lambda) &=& \sum_a p_{A|X,\Lambda}(a|x,\lambda) p_{B|X,Y,A,\Lambda}(b|x,y,a,\lambda) \nonumber \\ 
&=& \sum_a p_{A|X,\Lambda}(a|x',\lambda) p_{B|X,Y,A,\Lambda}(b|x',y,a,\lambda) \quad \forall x, x', b, y, \lambda. 
\end{eqnarray}
Finally, the delta relaxation of OI, called $\delta$-OD, states that
\begin{equation}\label{eq:epsilon-OD_SM}
\frac{1}{2} \sum_b \big\vert p(b|x,y,a,\lambda) - p(b|x,y,a',\lambda) \big\vert \leq \delta \quad \forall x,y,a \neq a'.
\end{equation}

We consider the conditions of Hardy's ``ladder'' proof \cite{BBDH1997PRL} in this Bell scenario. Namely, we impose the conditions
\begin{subequations}\label{eq:Hardy-zeros}
\begin{align}
p_{A,B|X,Y}(0,0|0,0) &= 0, \\
p_{A,B|X,Y}(0,1|k,k-1) &= 0 \quad \forall k \in \{1,\ldots, M\},  \\
p_{A,B|X,Y}(1,0|k-1,k) &= 0 \quad \forall k \in \{1,\ldots, M\},  
\end{align}
\end{subequations}
and ask for the maximum value of $p_H = p_{A,B|X,Y}(0,0|M,M)$ under constraints \eqref{eq:Hardy-zeros}. For local hidden-variable theories, constraints \eqref{eq:Hardy-zeros} imply $p_H = 0$ \cite{BBDH1997PRL}. However, there are quantum correlations that achieve the value $p_H \rightarrow \frac{1}{2}$. Specifically, in the optimal quantum strategy, Alice and Bob share the state 
\begin{equation}
    |\psi \rangle = \alpha |0,0 \rangle - \beta |1, 1 \rangle,
\end{equation}
where $\alpha, \beta \in \mathbb{C}$ are parameters that depend on $M$ and obey $|\alpha|^2 + |\beta|^2 = 1$. Alice and Bob perform, on their half of the shared state, the measurements given by
\begin{subequations}
\begin{align}
A_k &= \sum_{j=0,1} | \pi_k^j \rangle \langle \pi_k^j|,  \\
B_k &= \sum_{j=0,1} | \sigma_k^j \rangle \langle \sigma_k^j|, 
 \end{align}
\end{subequations}
respectively, for $k \in \{0,\ldots, M\}$, with 
\begin{subequations}
\begin{align}
|\pi_k^0 \rangle &= \cos a_k|0 \rangle + \sin a_k | 1 \rangle, \quad |\pi_k^1 \rangle = - \sin a_k | 0\rangle + \cos a_k | 1 \rangle,  \\
|\sigma_k^0 \rangle &= \cos{b_k} | 0 \rangle + \sin{b_k} | 1 \rangle, \quad |\sigma_k^1 \rangle = - \sin{b_k}|0\rangle + \cos{b_k}|1\rangle,
\end{align}
\end{subequations}
for specific angles $a_k, b_k \in [0,2 \pi]$. To achieve constraints \eqref{eq:Hardy-zeros}, 
 \begin{subequations}
\begin{align}
\tan({a_k}) \cot({b_{k-1}}) &= - \frac{\alpha}{\beta} \quad \forall k \in \{1,\ldots, M\}, \\
\cot({a_{k-1}}) \tan({b_k}) &= - \frac{\alpha}{\beta} \quad \forall k \in \{1, \ldots, M \}, \\
\tan({a_0}) \tan({b_0}) &= \frac{\alpha}{\beta}.
 \end{align}
\end{subequations}
These conditions, together, give
\begin{equation}
\tan({b_M}) \tan({a_M}) = \left(\frac{\alpha}{\beta}\right)^{2M+1}.
\end{equation}
Optimising to achieve the maximum value for $p_H$, we obtain 
\begin{equation}
a_M = \arctan{\left[\left(\frac{\alpha}{\beta}\right)^{M+1/2}\right]}.
\end{equation}
This also gives that $\tan{a_k} = \tan{b_k}$ for all $k = 0,\ldots, M$, that is, $b_k = a_k + m \pi$ for each $k$. Finally, we obtain that 
\begin{equation}\label{eq:quantum-LadderHardy}
p_H = p_{A,B|X,Y}(0,0|M,M) = \max_{\alpha, \beta} \left(\frac{\alpha \beta^{2M+1} - \beta \alpha^{2M+1}}{\beta^{2M+1} + \alpha^{2M+1}} \right)^2 \rightarrow \frac{1}{2},
\end{equation}
where $p_H$ asymptotically grows as $1/2 - O(1/M)$ with the total number of settings $M+1$.

We now proceed to calculate the maximum value of $p_H = p_{A,B|X,Y}(0,0|M,M)$ under PI with $\delta$-OD, \textit{i.e.}, under Eqs.~(\ref{eq:PI-Alice}), (\ref{eq:PI-Bob}), and (\ref{eq:epsilon-OD_SM}). We first recognise that the problem of maximising $p_H = p_{A,B|X,Y}(0,0|M,M)$ is optimising a linear objective function under the linear constraints of PI for every $\lambda$ (that $\sum_b p_{A,B|X,Y,\Lambda}(a,b|x,y,\lambda)$ is independent of $y$ and similarly that $\sum_a p_{A,B|X,Y,\Lambda}(a,b|x,y,\lambda)$ is independent of $x$). Furthermore, the $\delta$-relaxation of OI can also effectively be formulated as a linear constraint (by considering both $\pm$ values for the absolute value constraint). As such, the optimal value is achieved at an extremal point of the corresponding correlation set, \textit{i.e.}, at some extremal value $\lambda^*$. For this $\lambda^*$, we first write constraints~(\ref{eq:Hardy-zeros}) as
\begin{subequations}
\begin{align}
p_{A|X,\Lambda}(0|0, \lambda^*) \cdot p_{B|X,Y,A, \Lambda}(0|0,0,0,\lambda^*) &= 0,\\
p_{A|X,\Lambda}(0|k,\lambda^*) \cdot p_{B|X,Y,A, \Lambda}(1|k,k-1,0,\lambda^*)&= 0 \quad \forall k \in \{1,\ldots, M\}, \\
p_{A|X, \Lambda}(1|k-1,\lambda^*) \cdot p_{B|X, Y, A, \Lambda}(0|k-1,k,1, \lambda^*) &= 0 \quad \forall k \in \{1,\ldots, M\}.
\end{align}
  \end{subequations}
Now, for each of the above equations, at least one of the terms in the product must be zero. We first establish that none of Alice's marginal terms $p_{A|X,\lambda^*}$ can be $0$ if one wants to achieve non-zero $p_H$. 

Let us assume that $M$ is odd without loss of generality, the argument for the case $M$ is even will follow analogously. Now suppose that $p_{A|X,\Lambda}(0|0, \lambda^*) = 0$, then it follows from the zero constraints that $p_{A,B|X,Y,\Lambda}(1,1|0,1,\lambda^*) = 1$, and similarly $p_{A,B|X,Y,\Lambda}(1,1|1,2,\lambda^*) = 1$, $p_{A,B|X,Y,\Lambda}(1,1|2,1,\lambda^*) = 1$, $p_{A,B|X,Y,\Lambda}(1,1|2,3,\lambda^*) = 1$, and so on until $p_{A,B|X,Y,\Lambda}(1,1|M, M-1, \lambda^*) = 1$ giving $p_H = p_{A,B|X,Y,\Lambda}(0,0|M,M, \lambda^*) = 0$ which is clearly not optimal. Similarly, suppose that $p_{A|X,\Lambda}(0|k, \lambda^*) = 0$ for some $k \in \{1,\ldots, M\}$, again it follows from the zero constraints that $p_{A,B|X,Y,\Lambda}(1,1|k,k+1, \lambda^*) = 1$, and from the same reasoning that either $p_{A,B|X,Y,\Lambda}(1,1|M-1, M, \lambda^*) = 1$ or $p_{A,B|X,Y,\Lambda}(1,1|M, M-1, \lambda^*) = 1$, and in either case we obtain $p_H = p_{A,B|X,Y,\Lambda}(0,0|M,M, \lambda^*) = 0$ which is not optimal. Strictly analogous reasoning following the chain of zeros gives that if $p_{A|X,\Lambda}(1|k-1, \lambda^*) = 0$ for some $k \in \{1,\ldots, M\}$ then again we end up with $p_H = p_{A,B|X,Y,\Lambda}(0,0|M,M, \lambda^*) = 0$. Therefore, we obtain that in order to satisfy constraints (\ref{eq:Hardy-zeros}) in this model, we must have for the optimal $\lambda^*$ that
\begin{subequations}
\begin{align}
p_{B|X,Y,A, \Lambda}(0|0,0,0,\lambda^*) &= 0, \\
p_{B|X,Y,A, \Lambda}(1|k,k-1,0,\lambda^*)&= 0 \quad \forall k \in \{1,\ldots, M\}, \\
p_{B|X, Y, A, \Lambda}(0|k-1,k,1, \lambda^*) &= 0 \quad \forall k \in \{1,\ldots, M\}.
\end{align}
  \end{subequations}
or, equivalently, by the normalisation condition that $p_{B|X,Y,A, \Lambda}(1|0,0,0,\lambda^*) = 1$, $p_{B|X,Y,A, \Lambda}(0|k,k-1,0,\lambda^*)= 1$ for all $k = 1,\ldots,M$ and $p_{B|X, Y, A, \Lambda}(1|k-1,k,1, \lambda^*) = 1$ for all $k = 1, \ldots, M$.

The delta relaxation of OI, $\delta$-OD, now gives that 
   \begin{subequations}
\begin{align}
p_{B|X,Y,A, \Lambda}(0|0,0,1,\lambda^*) &\leq \delta, \\
p_{B|X,Y,A, \Lambda}(1|k,k-1,1,\lambda^*)&\leq \delta \quad \forall k \in \{1,\ldots, M\},  \\
p_{B|X, Y, A, \Lambda}(0|k-1,k,0, \lambda^*) &\leq \delta \quad \forall k \in \{1,\ldots, M\}.
\end{align}
  \end{subequations}

Finally, for odd $M$, applying the PI condition for Bob (\ref{eq:PI-Bob}), we obtain that
   %
   \begin{eqnarray}\label{eq:SequenceP_hBound1}
        p_H = p_{A,B|X,Y,\Lambda}(0,0|M,M,\lambda^*) \leq p_{A|X,\Lambda}(0|M-1,\lambda^*) p_{B|X,Y,A,\Lambda}(0|M-1, M, 0,\lambda^*) &\leq& \delta p_{A|X,\Lambda}(0|M-1,\lambda^*), \nonumber \\
        p_{A|X,\Lambda}(0|M-1, \lambda^*) \leq p_{A|X,\Lambda}(0|M-3, \lambda^*) p_{B|X,Y,\Lambda}(0|M-3, M- 2, 0, \lambda^*)&\leq& \delta p_{A|X,\Lambda}(0|M-3, \lambda^*), \nonumber \\
        \vdots \nonumber \\
        p_{A|X,\Lambda}(0|2,\lambda^*) &\leq& \delta p_{A|X,\Lambda}(0|0, \lambda^*).
   \end{eqnarray}
   Analogously,
\begin{eqnarray}\label{eq:SequenceP_hBound2}
         p_{A,B|X,Y,\Lambda}(0,0|M,M-1,\lambda^*) = p_{A|X,\Lambda}(0|M, \lambda^*) \leq p_{A,B|X,Y,\Lambda}(0,0|M-2,M,\lambda^*) &\leq& \delta p_{A|X,\Lambda}(0|M-2,\lambda^*), \nonumber \\
        p_{A|X,\Lambda}(0|M-2, \lambda^*) \leq p_{A|X,\Lambda}(0|M-4, \lambda^*) p_{B|X,Y,\Lambda}(0|M-4, M- 3, 0, \lambda^*)&\leq& \delta p_{A|X,\Lambda}(0|M-4, \lambda^*), \nonumber \\
        \vdots \nonumber \\
        p_{A|X,\Lambda}(0|3,\lambda^*) &\leq& \delta p_{A|X,\Lambda}(0|1, \lambda^*), \nonumber\\
        p_{A|X,\Lambda}(0|1, \lambda^*) &\leq& \delta p_{A|X,\Lambda}(1|0, \lambda^*).
   \end{eqnarray}
   
Defining $q = (M+1)/2$, we obtain from Eqs.~\eqref{eq:SequenceP_hBound1} and \eqref{eq:SequenceP_hBound2} that 
\begin{eqnarray}\label{eq:pH-twoeqns}
p_H &\leq& \delta p_{A|X,\Lambda}(0| M-1, \lambda^*) \leq \delta^2 p_{A|X,\Lambda}(0|M-3,\lambda^*) \leq \ldots \leq \delta^q p_{A|X,\Lambda}(0|0,\lambda^*), \nonumber \\
p_H &\leq& \delta p_{A|X,\Lambda}(0|M-2, \lambda^*) \leq \delta^2 p_{A|X,\Lambda}(0|M-4,\lambda^*) \leq \ldots \leq \delta^{q} p_{A|X,\Lambda}(0|1,\lambda^*) \leq \delta^{q+1} p_{A|X,\Lambda}(1|0,\lambda^*) .
\end{eqnarray}

Adding the two inequalities in (\ref{eq:pH-twoeqns}) gives
\begin{eqnarray}
2 p_H &\leq& \delta^{q+1} p_{A|X,\Lambda}(1|0, \lambda^*) + \delta^q p_{A|X,\Lambda}(0|0,\lambda^*) \nonumber \\
&\leq & (\delta^q - \delta^{q+1}) p_{A|X,\Lambda}(0|0,\lambda^*) + \delta^{q+1} \leq \delta^q.
\end{eqnarray}
Finally, we obtain that, in models with PI and $\delta$-OD, the maximum value of $p_H = p_{A,B|X,Y}(0,0|M+1,M+1)$ is upper bounded as
\begin{equation}
p_H \leq \frac{\delta^q}{2},
\end{equation}
for $M = 2q - 1$ with integer $q \geq 1$. 

Comparing the value $\frac{\delta^q}{2}$ with the value $p_H = 1/2 - O(1/M)$ obtainable in quantum theory in Eq.~(\ref{eq:quantum-LadderHardy}), we see that, for any value of $\delta \in [0,1)$, there exists a quantum point that is outside the set of correlations obtainable in models with PI and $\delta$-OD. This completes the proof. 


\section*{Supplementary Note 6. Proof that the set of correlations produced by HV with MI, PI and complete OD is the set of nonsignaling correlations}


A correlation $p(a,b|x,y)$ has an HV model satisfying MI and PI, if there exists $p(\lambda)$ and $p(a,b|x,y,\lambda)$ such that
\begin{equation}
p(a,b|x,y) = \sum_{\lambda} p(\lambda)p(a,b|x,y,\lambda).
\end{equation}
where, by PI,
\begin{subequations}\label{eq:NonsignAppendA}
\begin{align}
& \sum_{b} p(a,b|x,y,\lambda) = p(a|x,\lambda) \quad \forall a,x,y,\lambda,
\\
& \sum_{a} p(a,b|x,y,\lambda) = p(b|y,\lambda) \quad \forall a,x,y,\lambda.
\end{align}
\end{subequations}
In this way, $p(a,b|x,y)$ is a convex sum of nonsignaling correlations, and therefore, in turn, is a nonsignaling correlation. On the other hand, given any nonsignaling correlation $p(a,b|x,y)$, defining $ p(a,b|x,y,\lambda) = p(a,b|x,y)$, we have that $ p(a,b|x,y,\lambda)$ satisfies Eqs.~\eqref{eq:NonsignAppendA}. Therefore, $p(a,b|x,y)$ has a hidden variable model satisfying MI and PI.
%


















%


\end{document}